\newlist{primenumerate}{enumerate}{1}
\setlist[primenumerate,1]{label={(\arabic*$'$)}}
\definecolor{green}{rgb}{0,0.8,0} 
\newtheorem{theorem}{Theorem}[section]
\newtheorem{corollary}[theorem]{Corollary}
\newtheorem{lemma}[theorem]{Lemma}
\newtheorem{proposition}[theorem]{Proposition}
\theoremstyle{definition}
\newtheorem{definition}[theorem]{Definition}
\theoremstyle{remark}
\newtheorem{remark}[theorem]{Remark}
\newtheorem{conjecture}[theorem]{Conjecture}
\numberwithin{equation}{section}
\newcommand{\abs}[1]{\vert#1\vert}
\newcommand{\set}[1]{\{#1\}}
\newcommand{\ep}{\epsilon}
\def\beaa{\begin{eqnarray*}}
\def\eeaa{\end{eqnarray*}}
\def\bea{\begin{eqnarray}}
\def\eea{\end{eqnarray}}
\def\be{\begin{equation}}
\def\ee{\end{equation}}
\def\ls{\lesssim}
\newcommand{\ud}{\mathrm{d}}
\newcommand{\rd}{\partial}
\newcommand{\alp}{\alpha}
\newcommand{\bt}{\beta}
\newcommand{\gmm}{\gamma}
\newcommand{\kpp}{\kappa}
\newcommand{\sgm}{\sigma}
\newcommand{\tht}{\theta}
\newcommand{\om}{\omega}
\newcommand{\Omg}{\Omega}
\newcommand{\bfe}{{\bf e}}
\newcommand{\calC}{\mathcal C}
\newcommand{\calH}{\mathcal H}
\newcommand{\calM}{\mathcal M}
\newcommand{\f}{\frac}
\newcommand{\T}{\mathbb T}
\newcommand{\CH}{\calC \calH^{+}}		
\newcommand{\EH}{\calH^{+}}
\newcommand{\pfstep}[1]{\vspace{.5em} {\it #1.}}
\newcommand{\vol}{\mathrm{vol}}
\newcommand{\Int}{\mathscr{B}}
\newcommand{\Ext}{\mathscr{E}}
\begin{document}

\title[]{A scattering theory approach to Cauchy horizon instability\\ and applications to mass inflation}

\author{Jonathan Luk}
\address{Department of Mathematics, Stanford University, 450~Serra~Mall~Building~380,~Stanford~CA~94305-2125,~USA}
\email{jluk@stanford.edu}

\author{Sung-Jin Oh}%
\address{Department of Mathematics, UC Berkeley, Berkeley, CA 94720, USA and KIAS, Seoul, Korea 02455}%
\email{sjoh@math.berkeley.edu}%

\author{Yakov Shlapentokh-Rothman}
\address{Department of Mathematics, University of Toronto, 40 St. George Street, Toronto, Ontario, Canada M5S 2E4 and Department of Mathematics and Computational Sciences, 3359 Mississauga Road, Mississauga, Ontario, Canada L5L 1C6}
\email{yakovsr@math.toronto.edu}


\begin{abstract}
Motivated by the strong cosmic censorship conjecture, we study the linear scalar wave equation in the interior of subextremal strictly charged Reissner--Nordstr\"om black holes by analyzing a suitably-defined ``scattering map'' at $0$ frequency. The method can already be demonstrated in the case of spherically symmetric scalar waves on Reissner--Nordstr\"om: we show that assuming suitable ($L^2$-averaged) upper and lower bounds on the event horizon, one can prove ($L^2$-averaged) polynomial lower bound for the solution
\begin{enumerate}
\item on any radial null hypersurface transversally intersecting the Cauchy horizon, and
\item along the Cauchy horizon towards timelike infinity.
\end{enumerate}
Taken together with known results regarding solutions to the wave equation in the exterior, (1) above in particular provides yet another proof of the linear instability of the Reissner--Nordstr\"om Cauchy horizon. As an application of (2) above, we prove a conditional mass inflation result for a nonlinear system, namely, the Einstein--Maxwell--(real)--scalar field system in spherical symmetry. For this model, it is known that for a generic class of Cauchy data $\mathcal G$, the maximal globally hyperbolic future developments are $C^2$-future-inextendible. We prove that if a (conjectural) improved decay result holds in the exterior region, then for the maximal globally hyperbolic developments arising from initial data in $\mathcal G$, the Hawking mass blows up identically on the Cauchy horizon. 
\end{abstract}

\maketitle

\section{Introduction}\label{sec:intro}

In this paper, we consider the linear scalar wave equation (where $\Box_g$ denotes the Laplace--Beltrami operator)
\begin{equation}\label{wave.eqn}
\Box_{g}\phi = 0.
\end{equation}
in the interior of Reissner--Nordstr\"om (with $0<|{\bf e}|<M$) black holes. Hidden in the interior of these black holes are the so-called Cauchy horizons, whose stability and instability properties are of fundamental importance due to their intimate connections with the strong cosmic censorship conjecture and the problem of determinism; see further discussions in Section~\ref{sec:SCC}.

The equation \eqref{wave.eqn} in the black hole interior region, for both the Reissner--Nordstr\"om and the Kerr cases are rather well-understood. Let us just focus on the following definitive $C^0$-stability and non-degenerate energy-instability results. (We refer the readers for instance to \cite{D2, DafShl, CH, GSNS, Gleeson, KSR, McN.stab, McN, Sbi.2} and the references therein for related results.)
\begin{itemize}
\item (Stability \cite{Fra, Franzen2, Hintz, LS}) On both Reissner--Nordstr\"om (with $0<|{\bf e}|<M$) and Kerr (with $0<|a|<M$), solutions $\phi$ arising from smooth and compactly supported Cauchy data on $\Sigma_0$ (see Figure~\ref{fig:main.thm}) are uniformly bounded up to the Cauchy horizon and are continuously extendible to the Cauchy horizon. In fact, $|\phi|$ decays along the Cauchy horizon towards timelike infinity.
\item (Instability \cite{D2,LO.instab, LS}) On both Reissner--Nordstr\"om (with $0<|{\bf e}|<M$) and Kerr (with $0<|a|<M$), if an $L^2$-averaged lower bound for the derivative for $\phi$ holds on the event horizon, then $\phi$ has infinite non-degenerate energy on a null hypersurface intersecting the Cauchy horizon transversely. In particular, the derivatives of $\phi$ blow up at the Cauchy horizon. 

Moreover, using the results of \cite{AAG,HintzPriceLaw,LO.instab}, this assumed $L^2$-averaged lower bound on the event horizon is proven to be satisfied by solutions arising from \emph{generic} smooth and compactly supported Cauchy data on $\Sigma_0$.
\end{itemize}

\begin{figure}[h]
\begin{center}
\def\svgwidth{200px}
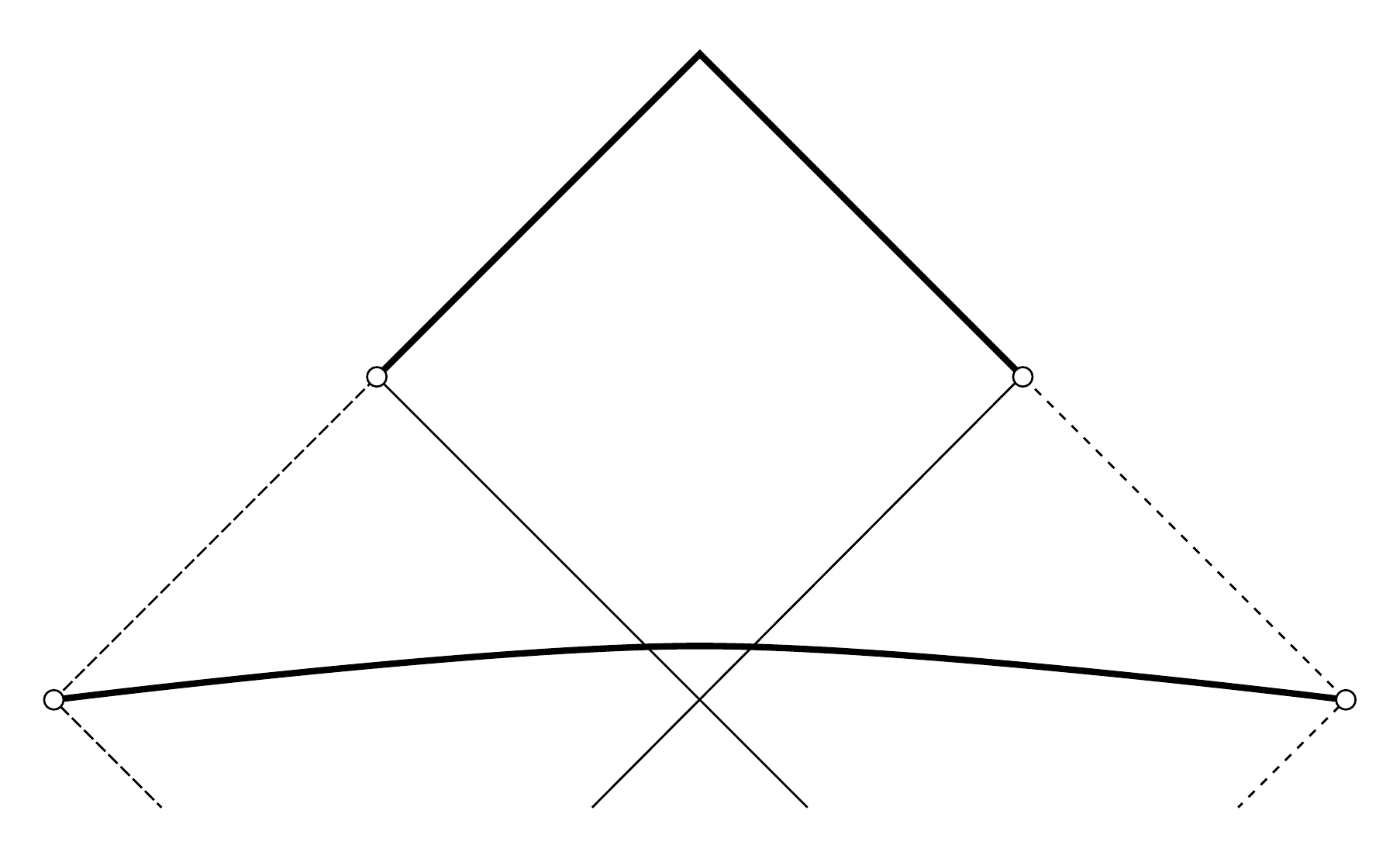 
\caption{} \label{fig:main.thm}
\end{center}
\end{figure}

Even though the stability results for \eqref{wave.eqn} are necessarily quite weak\footnote{For instance, in terms of (isotropic) H\"older and Sobolev spaces, it can be deduced in the Reissner--Nordstr\"om case using the result in \cite{AAG, D2, Gleeson} that a generic solution is neither in any H\"older $C^{\alpha}$ space for $\alp>0$ nor in any Sobolev $W^{1,p}$ space for $p>1$.} because of the instability results, they are very robust. In particular, a slight modification of the proof gives similar stability results for very general systems of wave equations with first- or zeroth-order terms. The robustness of the linear stability, together with the remarkable nonlinear structure of the Einstein vacuum equations in the double null foliation gauge, result in the proof of the \emph{nonlinear} $C^0$ stability of the Kerr Cauchy horizon for the Einstein vacuum equations without any symmetry assumptions \cite{DL}.

On the other hand, the proofs of linear (non-degenerate) energy instability for \eqref{wave.eqn} are much less robust\footnote{In fact it may be possible that not only the proofs fail, but the instability result itself is false when fine-tuned lower order terms are added to \eqref{wave.eqn}. This, however, remains an open problem.}. The known proofs \cite{LO.instab, LS} of these instabilities rely heavily on the conservation law associated to the Killing field $T$ (defined in \eqref{eq:def.Killing}), which no longer holds if one adds lower order terms to \eqref{wave.eqn}. For this reason, linear instability for the full system of linear gravitational perturbations remains an open problem. At the moment, nonlinear instability results have only been obtained for spherically symmetric models; see \cite{D2, LO.interior, VDM, VDM4} and Section~\ref{sec:SSModel}. Nevertheless, one expects that if one can prove an instability result for the full system of linear gravitational perturbations, then the techniques in \cite{DL} could in principle be sufficient to control all the nonlinear error terms and to upgrade the linear instability result to a nonlinear result.
	
\textbf{The first goal of this paper is to revisit the linear non-degenerate energy instability result} with a proof which is potentially generalizable to studying instabilities for linear gravitational perturbations. In fact, the ideas we present in this paper have already been taken up by Sbierski, who proved a linear instability result for the Teukolsky equation in a forthcoming work \cite{Sbierski.Teukolsky}.

The perspective of this paper is to study \eqref{wave.eqn} by introducing a scattering problem, where the past and future ``scattering states'' are the restriction of the solution to the wave equation on the event horizon and the Cauchy horizon respectively. (Note that such a point of view for the black hole interior is not new, and has been used in \cite{DafShl,McN,KSR}. See also \cite{KehleAdS,KehleVdM} in different settings. The results in \cite{KSR} are especially relevant to our paper.) In particular, we analyze the corresponding ``transmission coefficient'' and ``reflection coefficient'' at $0$ $t$-frequency, and use that information to study the instability property of the Cauchy horizon. Our approach is in part inspired by the work \cite{GSNS}, which already recognized the important role of the transmission and reflection coefficients at zero frequency, at least for a class of data with exact polynomial tail.

It turns out that a slight modification of our new linear instability proof also gives a lower bound of the scalar field along the Cauchy horizon towards timelike infinity. \textbf{The second goal of this paper is to use this lower bound and apply it to the mass inflation problem} for the Einstein--Maxwell--(real) scalar field system in spherical symmetry. Since the mass inflation problem requires a longer discussion, we will explain this application later in Section~\ref{sec:SCC}; see Theorem~\ref{thm:intro.MI.2}. The key point here is to demonstrate that our linear result, which is established with a fundamentally linear proof, can be easily applied to a nonlinear setting via a perturbative argument.

It turns out that all the main ideas can be demonstrated already in the case of solutions to \eqref{wave.eqn} on Reissner--Nordstr\"om with spherically symmetric data. We will therefore focus on that case from now on. See Remark~\ref{rmk:more.general} for a discussion of possible generalizations. The following is an informal version of the main theorem for \eqref{wave.eqn} on Reissner--Nordstr\"om; a precise version will be given in Theorem~\ref{thm:main} and Corollary~\ref{cor:main}. (Again, see Theorem~\ref{thm:intro.MI.2} for our main theorem on mass inflation.)

\begin{theorem}\label{thm:main.intro}
Let $\phi$ be a solution to \eqref{wave.eqn} on a fixed Reissner--Nordstr\"om spacetime (with $0<|{\bf e}|<M$) with smooth and spherically symmetric data.
Assume the following two conditions along the event horizon:
\begin{enumerate}
\item $T\phi$ obeys $L^2$-averaged polynomial upper and lower bounds, and
\item $T^2\phi$ obeys a \underline{better} (compared to $T\phi$) $L^2$-averaged polynomial upper bound (see precise assumptions in Theorem~\ref{thm:main}).
\end{enumerate}
Then, the following both hold:
\begin{enumerate}
\item Along any outgoing radial null hypersurface transversally intersecting the Cauchy horizon, the non-degenerate energy is infinite.
\item Along the Cauchy horizon, $T\phi$ obeys an $L^2$-averaged polynomial lower bound towards timelike infinity. 
\end{enumerate}
\end{theorem}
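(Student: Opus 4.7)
The plan is to reduce \eqref{wave.eqn} to a $1+1$ dimensional scattering problem via spherical symmetry, and then to extract the instability from a quantitative analysis of the scattering map at zero $t$-frequency. Setting $\psi = r\phi$ and working in double null coordinates $(u,v)$ adapted to the black hole interior, the wave equation becomes $\partial_u \partial_v \psi + V(r) \psi = 0$ with a smooth potential $V$ vanishing at both horizons $r = r_\pm$. Since $T$ is Killing, $T\phi$ and $T^2\phi$ also solve \eqref{wave.eqn}, and the associated Killing current gives a conservation law throughout the interior.

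The key structural input is the explicit form of stationary spherically symmetric solutions: the ODE $\partial_r((r^2 - 2Mr + \mathbf{e}^2)\partial_r \phi) = 0$ has two linearly independent solutions, a constant and a logarithmic primitive that diverges at both $r = r_\pm$. This underlies a \emph{zero-frequency scattering map} sending the leading polynomial-tail coefficient of $\phi|_{\EH}$ to the leading polynomial-tail coefficient of $\phi|_{\CH}$, defined by solving the interior characteristic initial value problem with the leading-tail data and reading off the appropriate modes near $\CH$. The main analytical step is a quantitative lower bound on the relevant transmission coefficient of this map: the mode regular at $\EH$ is a non-trivial mixture of the two stationary modes relative to the basis adapted to $\CH$, and this mixing is the zero-frequency imprint of the blueshift and the abstract source of the instability.

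With this in hand, decompose $\phi = \phi_{(0)} + \phi_{(1)}$ where $\phi_{(0)}|_{\EH}$ equals the leading polynomial tail of $\phi|_{\EH}$ and the evolution of $\phi_{(0)}$ into the interior is tracked via the zero-frequency scattering map. The assumed $L^2$-averaged polynomial lower bound on $T\phi|_{\EH}$, combined with the non-vanishing transmission coefficient, yields the desired $L^2$-averaged polynomial lower bound for $T\phi_{(0)}|_{\CH}$ towards $i^+$. For the remainder, $T\phi_{(1)}|_{\EH}$ represents the subleading tail of $T\phi$, whose faster polynomial decay is quantified by the improved upper bound on $T^2\phi|_{\EH}$ after commuting with $T$. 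Propagating this better bound through the interior by the conserved $T$-current together with a standard characteristic energy estimate for the reduced $1+1$ equation, $T\phi_{(1)}|_{\CH}$ enjoys strictly faster polynomial decay than $T\phi_{(0)}|_{\CH}$, so the leading piece dominates and assertion (2) follows. For assertion (1), the lower bound on $T\phi|_{\CH}$ forces the non-degenerate energy on an outgoing radial null hypersurface transversally intersecting $\CH$ to be infinite via the blueshift mechanism captured by $\partial_u \partial_v \psi + V \psi = 0$ and the $T$-energy identity over the triangular region between $\EH$, the transverse hypersurface, and $\CH$, exploiting the degeneracy of the $T$-flux on $\CH$ versus its coercivity on the transverse hypersurface.

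The main obstacle is the quantitative non-vanishing of the zero-frequency transmission coefficient: this requires an explicit construction of the stationary scattering modes in the interior and the asymptotic analysis that establishes their non-trivial mixing, uniformly in the subextremal range. A secondary difficulty is the bookkeeping needed to preserve the scale separation between $\phi_{(0)}$ and $\phi_{(1)}$ as their data propagate from $\EH$ to $\CH$, which depends delicately on the interplay between the assumed upper/lower bounds on $T\phi$ and the improved upper bound on $T^2\phi$.
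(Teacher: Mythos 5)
You have correctly identified the central structural input --- the non-vanishing of the zero-frequency transmission and reflection coefficients, computed from the two stationary interior modes (the constant and the logarithmically divergent primitive) --- and this is indeed the heart of the paper's argument (Propositions~\ref{prop:basic.T.R} and \ref{prop:T0neq0}). However, the mechanism by which you propose to exploit it contains a genuine gap. You decompose $\phi = \phi_{(0)} + \phi_{(1)}$ where $\phi_{(0)}\restriction_{\EH}$ ``equals the leading polynomial tail of $\phi\restriction_{\EH}$.'' The hypotheses of the theorem are only \emph{$L^2$-averaged} upper and lower bounds: $\int (1+v^2)(T\phi)^2\,dv < \infty$, $\int (1+v^2)^{p/2}(T\phi)^2\,dv = +\infty$ for the smallest such even $p$, and $\int (1+v^2)^{p/2}(T^2\phi)^2\,dv < \infty$. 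No asymptotic expansion $T\phi\restriction_{\EH}(v) \sim c\,v^{-q} + \mathrm{l.o.t.}$ follows from these; the data may oscillate or be lacunary, so there is no well-defined ``leading polynomial-tail coefficient'' to feed into your zero-frequency scattering map, and no scale separation between $\phi_{(0)}$ and $\phi_{(1)}$ to preserve. This is precisely the restriction of the earlier approach of \cite{GSNS} (which the paper cites as working ``for a class of data with exact polynomial tail'') that the present paper is designed to remove. The paper's substitute is a Fourier-space argument: by Plancherel the assumptions become $\int |\rd_\om^{p/2}\widehat{\Psi}|^2\,d\om = +\infty$ together with $\int |\rd_\om^{p/2}(\om\widehat{\Psi})|^2\,d\om < +\infty$, which (using minimality of $p$ to control the commutator $[\rd_\om^{p/2},\om]$) localizes the divergence to a neighborhood of $\om = 0$; one then commutes $\rd_\om^{p/2}$ past the analytic multiplier $\mathfrak T(\om)$ and uses $|\mathfrak T(0)| \geq \eta > 0$ on that neighborhood. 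Your decomposition would need to be replaced by something of this kind (or by a frequency cutoff near $\om = 0$), not by a physical-space tail extraction.

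A secondary issue: for assertion (1) you invoke ``the $T$-energy identity over the triangular region'' to transfer the lower bound from $\CH$ to a transverse null hypersurface, but the unweighted $T$-flux through a constant-$u$ hypersurface is exactly the degenerate energy; to conclude that the \emph{non-degenerate} (polynomially weighted, hence exponentially weighted in regular coordinates) energy is infinite, one needs a weighted multiplier. The paper's Proposition~\ref{prop:final.blowup} runs a contradiction argument with the multiplier $(1+|v|)^p\, r\,\rd_v\phi$ on a finite region, propagating finiteness of the weighted flux from a hypothetical finite-$u$ slice forward to $\CH_2$ and contradicting the blowup established there; some version of this weighted propagation step is needed in your argument as well.
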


A few remarks are in order.

\begin{remark}[Comparison with known results]
The first result in Theorem~\ref{thm:main.intro}, which is an instability result, is \underline{not} new, see \cite{D2,LO.instab}. Moreover, a similar instability result is also known for Kerr spacetimes \cite{LS}. Our result is in fact slightly weaker than the known results in \cite{LO.instab, LS}, but our main concern is the introduction of a new method that is based on the phase space analysis of the scattering map near the zero $t$-frequency.

It should also be mentioned that scattering theory arguments has been used to show variations of this instability results, see \cite{CH, DafShl, KSR, McN}.
\end{remark}

\begin{remark}[Relation to mass inflation]
The second result in Theorem~\ref{thm:main.intro} (in contrast to the first result) is not directly related to the instability of the Cauchy horizon. It does, however, show that the decay result along the Cauchy horizon of \cite{Hintz} cannot be improved much further. Perhaps surprisingly, it has a nonlinear application to the problem of mass inflation for the Einstein--Maxwell--(real) scalar field system with two-ended asymptotically flat initial data in spherical symmetry; see Section~\ref{sec:mass.inflation}.
\end{remark}

\begin{remark}[More general settings]\label{rmk:more.general}
At least the instability part of Theorem~\ref{thm:main.intro} can in principle be generalized to many different settings, for instance, for higher angular modes or for the wave equation with a potential in Reissner--Nordstr\"om, or even for the wave equation on rotating Kerr backgrounds with a fixed Carter mode. Indeed,  to carry out the argument of Theorem~\ref{thm:main.intro} we need two main ingredients: (1) a stability result for solutions to the linear wave equation, and (2) an explicit computation showing that the transmission coefficient at zero frequency does not vanish. In all of the more general settings we mentioned above, the ingredient (1) is known (or at least follows from known techniques, see \cite{Hintz,LS}), while the ingredient (2) follows easily from a $T$-conservation law (see Remarks~\ref{rmk:grav.pert} and \ref{rmk:use.conservation.for.lower.bd}).
\end{remark}

\begin{remark}[Relevance to gravitational perturbations]\label{rmk:grav.pert}
We mentioned above that the previous instability results of \cite{LO.instab} and \cite{LS} both rely on the $T$-conservation law in a fundamental way (despite the fact that the two proofs are very different). In our setting, the $T$-conservation law could also be used to bound the zero-frequency transmission coefficient away from $0$; see part (3) of Proposition~\ref{prop:basic.T.R}. On the other hand, our method in principle does not require the $T$-conservation law as one can alternatively explicitly \emph{compute} the transmission coefficient at zero frequency, just as what we do for the reflection coefficient at zero frequency. For this reason, our method is relevant in settings where an analogue of the $T$-conservation law is not available, e.g.~in the case of gravitational perturbations of the Kerr interior;\footnote{In certain settings the Teukolsky-Starobinsky identities may be used as a replacement for the \emph{global} application of a T-conservation law, e.g., as an identity linking fluxes of suitable solutions to the Teukolsky equation along the Cauchy horizons and the event horizons. However, these identities cannot be localized in physical space in a straightforward fashion, and thus it does not allow for an immediate adaption of the arguments from \cite{LO.instab} and \cite{LS}.}  see \cite{Sbierski.Teukolsky}.
\end{remark}

In the remainder of the introduction, we will give a brief discussion of the strong cosmic censorship conjecture, which in particular serves as a motivation for the instability problem that we discuss in this paper. We will then turn to a discussion of the Einstein--Maxwell--(real) scalar field system in spherical symmetry and explain our application of Theorem~\ref{thm:main.intro} in that context.

\subsection{Background: strong cosmic censorship conjecture}\label{sec:SCC}

The study of the wave equation in the interior of black holes is motivated by the \emph{strong cosmic censorship conjecture} first proposed by Penrose \cite{PenroseSCCrefer}. In this subsection, we briefly review some mathematical progress on this conjecture, which motivates the results in the present paper. For a more detailed discussion of the strong cosmic censorship conjecture, we refer the reader to \cite{DL}.

A modern formulation of Penrose's strong cosmic censorship conjecture can be given as follows:
\begin{conjecture}[Strong cosmic censorship]\label{conj:SCC}
For generic asymptotically flat (or compact) vacuum initial data, the maximal Cauchy development is inextendible as a suitably regular Lorentzian manifold.
\end{conjecture}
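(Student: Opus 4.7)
The plan is to adopt a two-pronged strategy that separates the analysis of the asymptotic exterior region from that of the black hole interior, and to glue them at the event horizon. In the exterior, the first step is to establish a nonlinear orbital and asymptotic stability theorem for the subextremal Kerr (or Kerr--Newman) family as a dynamical attractor for generic small perturbations of initial data, in the spirit of the recent programs building on the resolution of linear stability. Such a theorem should produce quantitative decay of all geometric quantities down the event horizon, and, crucially, it should resolve the \emph{precise} polynomial tail on the horizon --- giving both upper and lower bounds of the kind hypothesized in Theorem~\ref{thm:main.intro}, and showing that the lower bounds hold away from a non-generic (ideally codimension-$\infty$) set of data.

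In the interior, the strategy is to combine a $C^0$-type stability theorem at the Cauchy horizon (extending \cite{DL} to allow matter and, ultimately, to drop symmetry) with an instability theorem at the level of a curvature invariant sufficient to rule out the envisaged class of regular extensions. The instability half is where the methods of the present paper enter: treating the linearized gravitational perturbations via the scattering map at zero frequency, as already taken up in \cite{Sbierski.Teukolsky}, one propagates the generic exterior two-sided bounds across the interior and obtains a transversal non-degenerate energy blow-up, which then, together with a mass-inflation-type mechanism analogous to Theorem~\ref{thm:intro.MI.2} but for the vacuum equations, should yield blow-up of a scalar curvature invariant at the Cauchy horizon. The final step is a nonlinear perturbation / bootstrap argument, closing the interior analysis around the Kerr interior using a double null gauge, so that linear instability upgrades to nonlinear $C^2$-inextendibility.

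To make the conjecture's ``generic'' clause precise, one would identify a Baire-meager / codimension-$\infty$ set of initial data on $\Sigma_0$ on which the exterior-derived lower bounds fail (e.g., due to coincidental cancellation in the horizon tail), and show that off this set the combined instability argument goes through uniformly. Ingredient (2) of Remark~\ref{rmk:more.general} --- explicit computation of the zero-frequency transmission coefficient --- is precisely what makes this robust across the parameter range and across the tensorial components of the perturbation.

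The hard part, and the reason Conjecture~\ref{conj:SCC} remains open, is the absence of even the first step: a full nonlinear stability theorem for the Kerr exterior with sharp two-sided tails, without symmetry, is beyond current techniques; and even granting it, coupling a quantitative exterior theory to a nonlinear instability theory at the Cauchy horizon of Kerr for the full vacuum Einstein system --- in particular controlling the back-reaction that the infinite non-degenerate energy produces on the metric itself in a gauge that remains regular up to the Cauchy horizon --- is a genuinely new nonlinear obstacle. The scattering approach introduced in this paper should be viewed as providing one of the several required ingredients (the linear instability step, in a form flexible enough to survive perturbation by lower-order and nonlinear terms), rather than as a complete route to the full conjecture.
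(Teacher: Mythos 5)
You were asked to prove Conjecture~\ref{conj:SCC}, which the paper states explicitly as a \emph{conjecture}: it is an open problem, it is not proven anywhere in the paper, and the paper's actual results (Theorem~\ref{thm:main.intro} for the linear wave equation on Reissner--Nordstr\"om and the conditional mass inflation result of Theorem~\ref{thm:intro.MI.2} for the spherically symmetric Einstein--Maxwell--scalar field system) are offered only as partial steps in its direction. There is therefore no proof in the paper against which your argument can be checked, and your text is not a proof either --- it is a research program, as you yourself concede in your final paragraph.

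As a program, your outline is broadly faithful to how the paper (and the field) frames the problem: an exterior nonlinear stability theorem for Kerr with sharp two-sided tails on the event horizon, glued to an interior analysis consisting of $C^0$ stability of the Cauchy horizon (Theorem~\ref{thm:DL}, \cite{DL}) together with an instability mechanism at the level of Christoffel symbols or curvature (Conjecture~\ref{conj:instab}), with the zero-frequency scattering analysis of this paper and \cite{Sbierski.Teukolsky} supplying the linear instability ingredient as in Remark~\ref{rmk:more.general}. But each link in that chain is itself a major open problem outside spherical symmetry: the sharp lower bounds on the horizon for generic data, the linear instability for the full system of gravitational perturbations, the upgrade from linear to nonlinear instability while controlling back-reaction up to the Cauchy horizon, and a vacuum analogue of the mass-inflation mechanism. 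Your treatment of genericity (a ``Baire-meager / codimension-$\infty$'' exceptional set) is likewise only a placeholder; making it precise is one of the known difficulties. None of these steps is supplied with an argument, so the proposal cannot be accepted as a proof; it should be read as a correct summary of why the conjecture remains open rather than as progress toward closing it.
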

This conjecture should be thought of as a conjecture on global uniqueness for the Einstein's equation. However, it is well-known that in the explicit Reissner--Nordstr\"om and Kerr black holes, there are Cauchy horizons beyond which the solution extends smoothly. In particular, this conjecture implies that the Cauchy horizons inside the Reissner--Nordstr\"om and Kerr black holes are \emph{unstable} in a suitable sense under small perturbations. 

The original expectation of the instability of the Cauchy horizons was modelled on the Schwarzschild solution, where instead of having a smooth Cauchy horizon, the black hole interior is singular and inextendible as a $C^0$ Lorentzian manifold \cite{Sbie.C0}. However, as has been recently proven \cite{DL}, it turns out that this expected instability is more subtle: the Cauchy horizons are in fact $C^0$ stable, and one can only expect the higher derivatives to blow up. This already manifests itself in the behavior of solutions to the linear equation \eqref{wave.eqn} (recall discussions the the beginning of Section~\ref{sec:intro}), but remarkably also holds in nonlinear settings.

Below, we will further explain the stability and instability issues in nonlinear settings. First, in Section~\ref{sec:C0.Kerr}, we discuss further the recent work on $C^0$-stability of the Kerr Cauchy horizon \cite{DL}. Then, in Section~\ref{sec:SSModel}, we discuss a spherically symmetric model, which is simpler and such that Conjecture~\ref{conj:SCC} is essentially settled. There remains, however, the problem of \emph{mass inflation} which is unresolved in the setting. Our Theorem~\ref{thm:main} turns out to give a conditional result in this regard. This will be explained in Section~\ref{sec:mass.inflation}.

\subsubsection{$C^0$-stability of the Kerr Cauchy horizon}\label{sec:C0.Kerr}
As already discussed above, the Kerr Cauchy horizon has recently been proven to be $C^0$ stable:

\begin{theorem}[Dafermos--Luk \cite{DL}]\label{thm:DL}
Consider general vacuum initial data corresponding to the expected induced geometry of a
dynamical black hole settling down to Kerr (with parameters $0 < |a| < M$) on a suitable spacelike hypersurface
$\Sigma_0$ in the black hole interior. Then the maximal future development spacetime corresponding to $\Sigma_0$
is globally covered by a double null foliation and has a non-trivial Cauchy horizon across which the
metric is continuously extendible.
\end{theorem}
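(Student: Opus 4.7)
The plan is to formulate the problem as a characteristic initial value problem in a double-null foliation gauge adapted to a neighborhood of the event horizon $\calH^+$, and extend it up to the conjectural Cauchy horizon $\CH$. In this gauge, the vacuum Einstein equations decompose into (i) null transport equations for the connection (Ricci) coefficients $\chi, \underline\chi, \eta, \underline\eta, \omega, \underline\omega$, (ii) Bianchi transport equations for the null-decomposed Weyl curvature components $\alpha, \beta, \rho, \sigma, \underline\beta, \underline\alpha$, and (iii) Gauss and Codazzi constraints on the foliation spheres $S_{u,\underline u}$. Data is prescribed on the union of $\calH^+$ and an ingoing null cone $\underline C_0$ meeting $\calH^+$ transversally; the data along $\calH^+$ is constrained to reflect the expected late-time geometry of a dynamical black hole settling down to Kerr, which quantitatively amounts to imposing Price-law-type polynomial decay on curvature and Ricci coefficients in weighted $L^2$ norms, together with smallness of the deviation from Kerr.

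I would then run a continuity/bootstrap argument with bootstrap assumptions of weighted $L^2$-type control on curvature (allowing blueshift-amplified weights in the outgoing direction) and $L^\infty$-type control on renormalized Ricci coefficients. The estimates are closed in a hierarchy: first recover Ricci coefficients from curvature via the null transport equations, carefully distinguishing the ``good'' (redshift-normal) direction from the ``bad'' (blueshift-affine) direction; then close curvature estimates using a Bel--Robinson-type multiplier scheme on the Bianchi system. Once integrability of the relevant connection coefficients along the outgoing direction has been established --- noting that several components are in fact allowed to blow up --- a direct integration of the double-null coordinate equations shows that the metric components have continuous limits at $\CH$, giving the claimed $C^0$-extendibility.

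The main obstacle is the blueshift instability. Near $\CH$, transverse components of curvature are expected to blow up, so one cannot hope for uniform bounds; only blueshift-weighted $L^2$ energies can be expected to remain finite. In the linear model problem this is overcome by using the energy associated with the stationary Killing vector field $T$, which does not see the blueshift because $T$ is null along $\CH$. In the nonlinear vacuum setting there is no exact Killing vector field, and the central difficulty is to manufacture an approximately Killing multiplier built from the background Kerr geometry and to verify that the resulting error terms --- themselves blueshift-amplified --- are nevertheless integrable thanks to the polynomial decay of the horizon data. The heart of the argument is thus the trade-off between blueshift amplification and polynomial-in-$v$ decay from $\calH^+$, requiring each curvature component to be tracked with a weight tuned to its individual blueshift exponent and each nonlinear error to be verified below threshold at every rung of the hierarchy.
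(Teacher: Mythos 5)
You should first be aware that the paper under review does not prove this statement: Theorem~\ref{thm:DL} is quoted from Dafermos--Luk \cite{DL} purely as background in Section~\ref{sec:C0.Kerr}, so there is no in-paper proof to compare against. What you have written is therefore an attempted road map of the argument of \cite{DL} itself, which is a several-hundred-page work. As a high-level description of the strategy your sketch is broadly faithful — double null gauge, characteristic data on the event horizon encoding Price-law decay to Kerr, a bootstrap in which Ricci coefficients are recovered from curvature via null transport and curvature is estimated from the Bianchi system, weighted energies to accommodate the blueshift, and $C^0$-extendibility obtained by integrating the metric components once suitable integrability is known. But it is a statement of strategy rather than a proof, and it is missing the ingredients that actually make the strategy close.

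Concretely: (i) the estimates in \cite{DL} are not closed by manufacturing an ``approximately Killing multiplier'' from the Kerr background; they are closed by a hierarchy of energies with carefully chosen degenerate polynomial weights in $u$ and $v$, applied to \emph{renormalized} Bianchi pairs from which the most singular curvature component $\underline{\alpha}$ is excluded entirely (it admits no finite weighted $L^{2}$ flux up to $\CH$) and in which the Kerr values of $\rho$ and $\sigma$ are subtracted. You gesture at renormalization and at components being ``allowed to blow up,'' but you do not say how the scheme survives their exclusion. (ii) The decisive point — what this paper's own introduction calls ``the remarkable nonlinear structure of the Einstein vacuum equations in the double null foliation gauge'' — is that in every nonlinear error term a quantity that blows up at the Cauchy horizon is always paired with a quantity that decays, never with another blowing-up quantity. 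Without identifying and exploiting this null structure, the ``trade-off between blueshift amplification and polynomial decay'' you invoke cannot be verified term by term and the bootstrap does not close; this is the heart of the proof and is absent from your outline. (iii) The theorem is a \emph{global} statement about the black hole interior (global double null covering and non-triviality of the Cauchy horizon), which in \cite{DL} requires partitioning the interior into red-shift, no-shift and blue-shift regions with different arguments in each; your sketch addresses only a neighborhood of the late-time Cauchy horizon. So the proposal points in the right direction but has genuine gaps precisely at the ideas that make the result provable.
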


Even with the above theorem, however, it is not known whether the Cauchy horizon is singular in any sense. The following conjecture remains an important open problem.

\begin{conjecture}\label{conj:instab}
For a generic subset of initial data as in Theorem~\ref{thm:DL}, the maximal Cauchy development is inextendible as a Lorentzian manifold with
continuous metric and Christoffel symbols locally square integrable.
\end{conjecture}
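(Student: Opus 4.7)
The plan is to attack Conjecture~\ref{conj:instab} by combining three ingredients: (i) a linear instability result for the full system of gravitational perturbations (at the level of the Teukolsky equation) on subextremal Kerr backgrounds, (ii) the nonlinear $C^0$-stability framework of \cite{DL}, which controls the geometry all the way up to the Cauchy horizon $\CH$, and (iii) a geometric criterion converting $L^2$-blowup of the Weyl curvature on transversal null hypersurfaces into $W^{1,2}_{loc}$-inextendibility of the metric, in the spirit of Sbierski's inextendibility theory.

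For step (i), the scattering-theoretic approach of the present paper is the natural vehicle. As the authors note in Remark~\ref{rmk:grav.pert}, and as pursued in Sbierski's forthcoming \cite{Sbierski.Teukolsky}, the method bypasses the $T$-conservation law (which is unavailable for the Teukolsky equation) and instead requires only (a) a linear stability/backward scattering estimate inside the black hole, together with (b) an explicit non-vanishing computation for the zero-frequency transmission coefficient of the Teukolsky operator. Combined with generic polynomial lower bounds for the Teukolsky variable along $\EH$, expected to follow from adapting the Price's law arguments of \cite{AAG,HintzPriceLaw} to spin $\pm 2$, this should produce infinite non-degenerate curvature flux on any transversal outgoing null hypersurface near $\CH$ for a generic class of initial data.

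For step (ii), I would work in the double null gauge of \cite{DL}. Since that work shows that the nonlinear interior geometry converges to a smooth Kerr along $\EH$ and stays close to a ``dynamical Kerr'' all the way up to $\CH$, linearizing around this background produces an inhomogeneous Teukolsky equation whose source terms are quadratically small in the norms controlled by \cite{DL}. A perturbative argument on the characteristic region adjacent to $\CH$ --- in spirit analogous to, but substantially more delicate than, the linear-to-nonlinear transplantation carried out in the present paper to deduce mass inflation from a linear lower bound --- should then transfer the linear quantitative lower bound to the full nonlinear Weyl curvature. Step (iii) can then be completed via a Sbierski-type geometric argument: a $W^{1,2}_{loc}$ extension of the Lorentzian metric across $\CH$ forces locally square integrable Christoffel symbols, which through Cartan's structure equations is incompatible with an infinite $L^2$ flux of curvature on a null hypersurface transverse to the putative extension locus.

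The main obstacle I anticipate is step (ii). Unlike the spherically symmetric scalar model in the present paper, the Teukolsky variable does not decouple from the remaining Weyl curvature, the Ricci coefficients, and the metric once nonlinearities are turned on; propagating a sharp quantitative \emph{lower} bound up to $\CH$ while simultaneously closing the upper-bound hierarchy of \cite{DL} seems to require a refined two-sided estimate scheme throughout the interior, compatible both with the choice of gauge and with the Teukolsky--Starobinsky identities used in step (i). A secondary difficulty is the genericity statement itself: one must rule out $C^0 \cap W^{1,2}_{loc}$ extensions through \emph{every} point of $\CH$ for a generic set of initial data, which likely demands a density-and-perturbation argument at the level of the Cauchy data rather than merely at the level of the event horizon profile.
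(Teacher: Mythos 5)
There is nothing in the paper to compare your proposal against: the statement you are addressing is Conjecture~\ref{conj:instab}, which the paper explicitly labels as ``an important open problem'' and does not prove. What the paper does instead is use this conjecture as \emph{motivation} for developing a linear instability proof (for the spherically symmetric scalar wave equation on Reissner--Nordstr\"om) that avoids the $T$-conservation law, precisely so that it might someday feed into a program like the one you outline. So your text should be read as a research program, not a proof, and as such it has genuine gaps at every stage.

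Concretely: step (i) rests on results that do not yet exist in the required form --- the linear Teukolsky instability on the Kerr interior is attributed to a forthcoming work, and the generic spin-$\pm 2$ Price-law \emph{lower} bounds along $\EH$ needed to trigger it are themselves open (the scalar analogues in \cite{AAG,HintzPriceLaw} do not transfer automatically, and genericity must be formulated at the level of Cauchy data, as you note). Step (ii) is the decisive obstruction and you supply no argument for it: unlike the decoupled scalar model of this paper, where the nonlinear transplantation is a clean perturbative estimate (Theorem~\ref{scat.SS.inho.u} plus Proposition~\ref{linear.diff}) because the background convergence rates of Theorem~\ref{main.theorem.C0.stability} make the inhomogeneity summable, in the Kerr setting the Teukolsky variable couples back into the full system of Ricci coefficients and curvature, and it is not known that the upper bounds of \cite{DL} are strong enough to render the quadratic error terms subdominant relative to the conjectural linear lower bound. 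Step (iii) is also not routine: an $L^2$ curvature blowup on a transversal null hypersurface does not by itself rule out \emph{every} $C^0\cap W^{1,2}_{loc}$ extension through every point of $\CH$ --- inextendibility must be proven against arbitrary extensions, not merely against extensions adapted to the double null gauge, and this is exactly the kind of difficulty that makes even the $C^1$ result of \cite{Sbierski.C1} (in spherical symmetry, perturbatively) a recent breakthrough. In short, your outline is consistent with how the authors themselves frame the problem, but none of its three steps is currently a theorem, and the hardest one (the nonlinear lower-bound propagation) is left entirely to hope.
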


This conjecture motivates a better understanding of the linear instability. In particular, since for linear gravitational perturbations of Kerr, there is no obvious analogue of the $T$-conservation law, it is desirable to obtain a proof which does not rely in principle on such a conservation law. This motivates the considerations of the present paper.

\subsubsection{Strong cosmic censorship for spherically symmetric models}\label{sec:SSModel} From now on, we discuss Conjecture~\ref{conj:SCC} in spherical symmetry, focusing on the spherically symmetric Einstein--Maxwell--(real) scalar field system. We will in particular explain the background for the mass inflation problem in this context, and discuss an application of our Theorem~\ref{thm:main.intro} to this problem. In spherical symmetry, the problem becomes simpler than that in Section~\ref{sec:C0.Kerr}, we have a much more complete picture. More precisely, as we will describe below, not only an analogue of Theorem~\ref{thm:DL} is known, but moreover, (1) a \emph{global} $C^0$-stability result --- one that the initial data are posed on an asymptotically flat Cauchy hypersurface --- has been proven, and (2) the analogue of Conjecture~\ref{conj:instab} is also known.

Before we proceed, let us briefly discuss the simplest spherically symmetric model, the spherically symmetric Einstein--null dust system, whose study predates that of the spherically symmetric Einstein--Maxwell--(real) scalar field system. For this system, in the presence of incoming null dust, Hiscock \cite{Hiscock} showed that the metric remains continuous while Christoffel symbols blow up at the Cauchy horizon. In fact, in this setting, curvature components with respect to a parallelly propagated frame blow up. In a subsequent seminal work, Poisson--Israel \cite{PI1,PI2} showed that when another, outgoing, null dust is added and is allowed to interact with the first null dust, generically the Hawking mass is infinite at the Cauchy horizon. This was known as \emph{mass inflation}.

The spherically symmetric Einstein--null dust system, though already gives some insights into the stability and instability properties of Cauchy horizons, is not fully satisfactory even as a spherically symmetric model problem since it does not capture the wave nature of the Einstein equations. A more realistic model is the spherically symmetric Einstein--Maxwell--(real) scalar field system:
\begin{equation}\label{EMSFS}
\begin{cases}
Ric_{\mu\nu}-\f12 g_{\mu\nu} R=2(T^{(sf)}_{\mu\nu}+T^{(em)}_{\mu\nu}),\\
T^{(sf)}_{\mu\nu}=\rd_\mu\phi\rd_\nu\phi-\f 12 g_{\mu\nu} (g^{-1})^{\alp\beta}\rd_\alp\phi\rd_{\beta}\phi,\\
T^{(em)}_{\mu\nu}=(g^{-1})^{\alp\bt}F_{\mu\alp}F_{\nu\bt}-\f 14 g_{\mu\nu}(g^{-1})^{\alp\bt}(g^{-1})^{\gamma\sigma}F_{\alp\gamma}F_{\bt\sigma},\\
\Box_g\phi=0,\,\quad dF=0,\,\quad (g^{-1})^{\alpha\mu}\nabla_\alpha F_{\mu\nu}=0.
\end{cases}
\end{equation}
Here, $\Box_g$ and $\nabla$ respectively denote the Laplace--Beltrami operator and the Levi--Civita connection associated to the metric $g$.

The study of the stability and instability properties of Cauchy horizons in the context of \eqref{EMSFS} in spherical symmetry was initiated in the seminal works of Dafermos \cite{D1,D2}. Taken together with \cite{DRPL}, the work \cite{D2} implies the following theorem:
\begin{theorem}[Dafermos \cite{D2}, Dafermos--Rodnianski \cite{DRPL}]
Given any $2$-ended asymptotically flat future-admissible spherically symmetric initial data to \eqref{EMSFS}, as long as the Maxwell field does not identically vanish, the maximal globally hyperbolic future development has a Cauchy horizon across which the metric is continuously extendible.
\end{theorem}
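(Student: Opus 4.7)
The plan is to combine an exterior decay result with an interior bootstrap. First I would reduce to the double null gauge, writing the spherically symmetric metric as $g = -\Omega^2 \, du \, dv + r^2 \gamma_{S^2}$ with unknowns $\Omega$ and $r$ depending on $(u,v)$. Spherical symmetry together with Maxwell's equations and the Bianchi identity forces the Maxwell two-form to take the shape $F \propto (\Omega^2/r^2)\, du \wedge dv$ with coefficient determined by a constant equal to the charge $\mathbf{e}$; hence the Maxwell field is non-trivial if and only if $\mathbf{e} \neq 0$. The Einstein equations then reduce to a coupled system of null structure equations for $(\Omega^2, r, \phi)$ in which the electromagnetic field enters only through a Reissner--Nordstr\"om-type source term proportional to $\mathbf{e}^2/r^2$, together with the spherical wave equation $\Box_g \phi = 0$ sourcing the Raychaudhuri and Hawking mass equations.

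For the exterior region, I would apply the Dafermos--Rodnianski orbital stability and Price law decay \cite{DRPL} to conclude that along each event horizon $\mathcal{H}^+$ the area radius $r$ tends to a positive limit $r_+$, the Hawking mass tends to a limit $M$ satisfying $0 < |\mathbf{e}| < M$, and $\phi$ together with its $T$-derivative decays polynomially in advanced time towards $i^+$. This reduces the problem to a characteristic initial value problem in the interior whose data, posed on the two event horizons emanating from the bifurcation sphere, are close in an averaged polynomial sense to the corresponding Reissner--Nordstr\"om interior data.

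The interior analysis then proceeds by a three-region bootstrap. One distinguishes a \emph{red-shift region} near $\mathcal{H}^+$, a \emph{no-shift region} in which $r$ ranges in a compact set bounded away from both horizons, and a \emph{blue-shift region} approaching the would-be Cauchy horizon. In the first two regions standard finite-in-time energy estimates for $\phi$ combined with the sign-definite, hence monotone, Raychaudhuri equations for the null expansions propagate the bounds and show that $r$ decreases from $r_+$ down to values close to the Reissner--Nordstr\"om inner radius $r_-$. The main obstacle is the blue-shift region, where the effective blueshift factor grows exponentially and would naively destroy the estimates; the crucial input is that the polynomial Price law decay of $\phi$ along $\mathcal{H}^+$ beats this exponential amplification, allowing a bootstrap to close on quantities such as $|\log \Omega^2|$, $|r - r_-|$, and $|\phi|$. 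This yields $r \to r_- > 0$ with $\Omega^2$ and $\phi$ remaining bounded up to a bifurcate Cauchy horizon, after which a Kruskal-type change of coordinates produces a continuous extension of the metric across $\mathcal{CH}^+$.
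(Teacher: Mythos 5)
Your outline is essentially correct, but note that this theorem is not proved in the present paper: it is quoted as background from Dafermos \cite{D2} and Dafermos--Rodnianski \cite{DRPL}, so there is no in-paper argument to compare against. Your sketch --- reduction to the double null system with constant charge $\mathbf{e}$, Price-law decay along $\mathcal{H}^+$ from \cite{DRPL}, and the red-shift/no-shift/blue-shift interior bootstrap in which polynomial decay on the event horizon beats the exponential blue-shift amplification, followed by a Kruskal-type regularization at $\mathcal{CH}^+$ --- is a faithful summary of the strategy of those references (and of its quantitative refinement recalled in Theorem~\ref{main.theorem.C0.stability}).
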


Here, by \emph{asymptotically flat future-admissible}, we mean that the initial data obey adequate regularity and decay conditions, as well as a global condition (called future admissibility) that ensures, in particular, the existence of a single black hole region in the maximal globally hyperbolic future development like in the case of the Reissner--Nordstr\"om spacetime; see \cite[Definition~3.1]{LO.interior} for the precise definition.

In \cite{LO.interior, LO.exterior}, the first two authors proved the $C^2$ formulation of the strong cosmic censorship conjecture for the Einstein--Maxwell--(real)--scalar field system. Namely, it was proven that generic future-admissible two-ended asymptotically flat initial data lead to a maximal globally hyperbolic development which is $C^2$-future inextendible. 

\begin{theorem}[Luk--Oh \cite{LO.interior, LO.exterior}]\label{thm:SCC}
There exists a generic (in the sense of open and dense in, say, weighted $C^1$ topology; see \cite{LO.interior} for further refinements) class $\mathcal G$ of $2$-ended asymptotically flat future-admissible spherically symmetric initial data such that the maximal globally hyperbolic future development to any initial data in $\mathcal G$ is $C^2$-future-inextendible.
\end{theorem}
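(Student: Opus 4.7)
The plan is to separate the analysis into an exterior problem and an interior problem, connected by a genericity argument along the event horizon, and to show that the interior analysis produces a curvature blow-up which rules out any $C^2$ extension.

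\textbf{Exterior analysis.} First I would use the orbital/Cauchy stability together with Price's-law-type decay for \eqref{EMSFS} developed in \cite{LO.exterior} to show that the maximal future development settles down to a \emph{subextremal} Reissner--Nordstr\"om exterior with parameters $(M_f,\bfe_f)$ on each end, yielding pointwise upper bounds $|\phi| + |T\phi| \aleq v^{-3}$ along each component of $\EH$. To extract genericity, I would identify a conserved quantity along null infinity (the spherically symmetric analogue of the Newman--Penrose constant), which depends continuously on initial data and need not vanish on the future null infinity of each end; defining $\mathcal G$ to be (a refinement of) the open dense subset on which this constant is nonzero on both ends yields matching pointwise \emph{lower} bounds $|T\phi| \ageq v^{-3}$ along $\EH$, together with a slightly better upper bound for $T^2 \phi$. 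These are exactly the inputs needed to run an interior instability mechanism in the spirit of Theorem~\ref{thm:main.intro}.

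\textbf{Interior analysis.} Next, I would install a double null gauge $(u,v)$ on the interior region $\Int$, with $v$ normalized along one component of $\EH$ and $u$ along the other. In this gauge the Einstein--Maxwell--(real)--scalar field system reduces to a coupled system of transport and wave equations for $(r,\Omega^2,\phi,m,\bfe)$. A bootstrap argument, closed in a weighted norm that is large only near $\CH\cap i^+$, shows that away from this corner the solution is $C^0$-close to subextremal Reissner--Nordstr\"om, and in particular $\CH$ exists and is transversal to late outgoing null rays. Using the lower bound on $\EH$ together with the Raychaudhuri equations $\rd_v(\Omega^{-2}\rd_v r) = -r\Omega^{-2}(\rd_v\phi)^2$ and its $u$-analogue, combined with the Einstein--Maxwell constraint for the Hawking mass, I would then propagate the lower bound along the interior to obtain a divergent lower bound on $\int (\rd_u\phi)^2\,du$ across any outgoing null segment transverse to $\CH$ near $i^+$, and deduce that the Hawking mass $m$ tends to $+\infty$ along every such segment.

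\textbf{From blow-up to $C^2$-inextendibility.} Since in spherical symmetry the Kretschmann scalar is schematically of the form $K \aeq m^2/r^6 + \text{(lower-order Maxwell and scalar field terms)}$, the blow-up of $m$ propagates to a divergence of $K$ along every null curve approaching $\CH$ from the past. A standard geometric argument, propagating $K$ along a parallelly transported null frame, then rules out a $C^2$ Lorentzian extension through $\CH$: any such extension would force $K$ to remain bounded in a neighborhood of the extension point, contradicting the blow-up just established. The main obstacle is the second step --- the genuinely \emph{nonlinear} propagation of the lower bound from $\EH$ to $\CH$ in the regime where the blueshift amplifies derivatives. The delicate point is to synchronize a \emph{stability} bootstrap (needed to identify $\CH$ and to perturb off the Reissner--Nordstr\"om interior) with an \emph{instability} statement (the lower bound that triggers the mass inflation), and this synchronization is the technical heart of \cite{LO.interior}.
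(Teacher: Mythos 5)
The paper does not reprove this theorem; it is quoted from \cite{LO.interior, LO.exterior}, and Section~\ref{sec:the.old.LO.results} records exactly which ingredients of that proof are relied upon here. Measured against what the paper reports about the actual argument, your proposal has a genuine gap in both of its first two steps. In the exterior step you claim that genericity yields a \emph{pointwise} lower bound $|T\phi| \gtrsim v^{-3}$ along the event horizon, together with an improved upper bound on $T^2\phi$. This is essentially the hypothesis \eqref{eq:Dafermos.condition} of Dafermos's conditional mass inflation theorem (Theorem~\ref{thm:intro.dafermos}), and the paper states explicitly (Remark~\ref{rmk:expectation.AAG}) that both the pointwise lower bound and the improved decay of higher derivatives \emph{remain open problems}, even for generic data. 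What \cite{LO.interior, LO.exterior} actually establish is only the $L^2$-averaged lower bound \eqref{eq:mass.inflation.lower.bd}, namely $\int_{\EH_1} v^{8}(\rd_v\phi)^2\,dv = +\infty$, which is strictly weaker and is the only lower-bound input available for the interior analysis.

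Because of this, your interior step --- propagating the lower bound to conclude $m \to +\infty$ on $\CH$ and then deducing $C^2$-inextendibility from the Kretschmann scalar $\sim m^2/r^6$ --- is precisely the route that \cite{LO.interior, LO.exterior} could \emph{not} take: the paper's own Theorem~\ref{thm:intro.MI.2} is a \emph{conditional} mass inflation result whose entire raison d'\^etre is that generic mass inflation does not follow from \cite{LO.interior, LO.exterior}. With only an $L^2$-averaged lower bound on $\EH$, the scalar field could a priori vanish identically on a portion of $\CH$ near $i^+$ (the ``Hiscock scenario'' discussed before Theorem~\ref{thm:intro.MI.2}), in which case the transversal first derivatives still blow up but the Hawking mass stays finite (cf.~Proposition~\ref{basic.crit}). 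The actual proof of $C^2$-inextendibility instead goes through the weak-null-singularity statements recorded in Theorem~\ref{thm.nonpert}: $\int_0^1 \Omega^{-2}(\rd_V\phi)^2\,dV = \infty$ and $\rd_V r/\Omega^2 \to -\infty$ along every outgoing null ray entering $\CH_1$, quantities tied to the non-degenerate energy and the Raychaudhuri equation which would have to remain finite across any $C^2$ extension. Your third step is therefore aimed at the wrong invariant, and your second step requires an input that is not available.
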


In recent breakthrough of Sbierski \cite{Sbierski.C1}, he showed that after further restricting the generic class of data in Theorem~\ref{thm:SCC} to those which are small perturbations of Reissner--Nordstr\"om data, the corresponding solutions are moreover $C^1$-future-inextendible. This thus resolves a $C^1$ formulation of the strong cosmic censorship conjecture in a perturbative (in additional to spherically symmetric) setting.

\subsubsection{The mass inflation problem}\label{sec:mass.inflation} Even though inextendibility properties are the cleanest way to ``measure the strength'' of singularities, it is also of interest (see discussions of the works of Hiscock and Poisson--Israel in the beginning of Section~\ref{sec:SSModel}) to ask whether the Hawking mass blows up at the Cauchy horizon, i.e.,~whether \emph{mass inflation} occurs, for solutions arising from generic data. In particular, this was left open in the works \cite{LO.interior,LO.exterior,Sbierski.C1} discussed above.

At the moment, the best result concerning mass inflation is the following \underline{conditional} result in the seminal work \cite{D2} of Dafermos. It states that mass inflation does occur if a pointwise polynomial lower bound holds along the event horizon:
\begin{theorem}[Dafermos \cite{D2}]\label{thm:intro.dafermos}
Given an asymptotically flat future-admissible initial data set with non-zero charge, if the scalar field satisfies following pointwise lower bound (with respect to an Eddington--Finkelstein-like $v$ coordinate\footnote{For instance, one can choose the $v$ coordinate used in Theorem~\ref{main.theorem.C0.stability}.})
\begin{equation}\label{eq:Dafermos.condition}
|\rd_v\phi\restriction_{\EH_1\cap \{v\geq 1\}}|(v) \geq c v^{-p}
\end{equation}
for some $c>0$ and $p < 9$, then the Hawking mass is identically infinite on the component of the Cauchy horizon $\CH_1$ in Figure~\ref{fig:main.structure} on page~\pageref{fig:main.structure}.
\end{theorem}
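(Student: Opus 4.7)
The plan is to follow the strategy of Dafermos's original mass inflation proof: combine the $C^0$-stability estimates of \cite{D1} with a propagation of the lower bound \eqref{eq:Dafermos.condition} from $\EH_1$ into the interior, and then exploit the Hawking mass evolution equation to force blowup along $\CH_1$. I would work in double-null coordinates $(u,v)$ in the interior region, with metric $g = -\Omega^2\, du\, dv + r^2\, d\sigma^2$ and notation $\nu = \rd_u r$, $\lambda = \rd_v r$. In these coordinates the Hawking mass satisfies the structural identity $1 - \frac{2m}{r} + \frac{\bfe^2}{r^2} = -4\nu\lambda/\Omega^2$, and the spherically symmetric EMSF system reduces to Raychaudhuri-type equations such as $\rd_v(\lambda/\Omega^2) = -r(\rd_v\phi)^2/\Omega^2$, the wave equation in the form $\rd_u(r\rd_v\phi) = -\lambda\, \rd_u\phi$, and a mass-evolution equation of the schematic form $\rd_v m \sim r^2(1-\mu)(\rd_v\phi)^2/(\lambda/\Omega^2)$.

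First, I would invoke the stability results of \cite{D1}: for any future-admissible two-ended asymptotically flat initial data with nonzero charge, a non-trivial piece of $\CH_1$ exists, $r$ stays bounded below by a positive constant in its causal past within the interior, and suitable one-sided quantitative bounds on $\Omega^2$, $|\nu|/\Omega^2$, and $|\lambda|/\Omega^2$ are available there. Combined with the polynomial exterior decay used to set up the event-horizon data, this yields polynomial-in-$v$ upper bounds on $|r\rd_u\phi|$ along outgoing cones in the interior; the precise rate of these upper bounds is what will eventually determine the threshold $p<9$ below.

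Second, I would propagate the lower bound \eqref{eq:Dafermos.condition} from $\EH_1$ to any sufficiently late outgoing null segment $\{u=u_0\}$ close to $\CH_1$. Integrating the wave equation $\rd_u(r\rd_v\phi) = -\lambda\, \rd_u\phi$ in $u$ from $\EH_1$ (at $u = u_1$, say) yields
\begin{equation*}
r\rd_v\phi(u_0,v) \;=\; r\rd_v\phi(u_1,v) \;-\; \int_{u_1}^{u_0}\lambda\, \rd_u\phi(u',v)\,du',
\end{equation*}
and the upper bounds from Step 1 ensure that the integral on the right decays strictly faster than $v^{-p}$ precisely when $p<9$; hence $|\rd_v\phi(u_0,v)| \geq \tfrac{c}{2} v^{-p}$ for all $v$ sufficiently large. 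Feeding this lower bound into the Raychaudhuri equation yields rapid decay of $|\lambda|/\Omega^2$ on $\{u=u_0\}$, and feeding everything into the mass-evolution equation gives $\int^\infty \rd_v m \, dv = \infty$, so $m \equiv \infty$ on $\CH_1 \cap \{u=u_0\}$. A short monotonicity argument in $u$ (using that $\rd_u m \leq 0$ in the relevant region once the charge is nonzero) then upgrades this to $m \equiv \infty$ on the entire component $\CH_1$.

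The main obstacle is the careful bookkeeping of polynomial rates that produces the sharp threshold $p<9$. One must control $|r\rd_u\phi|$ along outgoing cones well enough that the remainder integral in the propagation step decays strictly faster than $v^{-p}$, while simultaneously ensuring that the decay of $|\lambda|/\Omega^2$ dictated by Raychaudhuri is slow enough, relative to $v^{-2p}$, for the integrand in $\rd_v m$ to be genuinely non-integrable at $v=\infty$. The exponent $9$ is the precise balance of these two polynomial effects in the Reissner--Nordstr\"om-like interior, and this quantitative accounting (rather than the overall scheme itself) is the technical heart of \cite{D2}.
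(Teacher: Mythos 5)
This statement is quoted from Dafermos's work \cite{D2}; the paper you are reading does not prove it, so there is no in-paper proof to compare against. (The paper's own new contribution, Theorem~\ref{mass.infl.thm}, is a \emph{different} conditional mass inflation result proved by an entirely different route --- zero-frequency scattering theory plus a perturbative argument --- precisely to avoid the pointwise lower bound \eqref{eq:Dafermos.condition}.) Judged as a reconstruction of Dafermos's original argument, your sketch has the right skeleton: set up double-null coordinates and the Raychaudhuri/mass-evolution identities, use the interior stability estimates to get polynomial upper bounds, propagate the lower bound from $\EH_1$ to finite-$u$ cones via $\rd_u(r\rd_v\phi) = -\lambda\,\rd_u\phi$, and feed the surviving lower bound into $\rd_v\varpi$ to force blowup. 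You also correctly locate where the exponent $9$ comes from (it is $3s$ with $s=3$ the Price-law upper-bound rate entering the error integral).

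That said, two points. First, the entire quantitative content --- the region-by-region analysis (red-shift, no-shift, blue-shift) needed to show the error integral $\int \lambda\,\rd_u\phi\,du$ decays strictly faster than $v^{-p}$ exactly when $p<9$, and simultaneously that $\rd_v\varpi$ is non-integrable --- is acknowledged but not carried out; this is where essentially all of the work in \cite{D2} lives, so the proposal is a plan rather than a proof. Second, your final monotonicity step has the wrong sign: from $\rd_u\varpi = -2r^2(\rd_v r)(\rd_u\phi)^2/\Omg^2$ and $\rd_v r<0$ near $\CH_1$, one gets $\rd_u\varpi\geq 0$ there, so blowup propagates toward \emph{larger} $u$, not smaller. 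The correct bookkeeping (as in Proposition~\ref{basic.crit} of this paper) is to establish blowup on cones with $u$ arbitrarily negative, i.e.\ arbitrarily close to $\EH_1$ --- which your propagation step does give, since it works for every fixed $u_0$ --- and then use $\rd_u\varpi\geq 0$ to spread it forward across all of $\CH_1$. This is a repairable slip, but as written the last step of your argument does not close.
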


If the condition \eqref{eq:Dafermos.condition} holds for generic solutions, then Theorem~\ref{thm:intro.dafermos} would give a generic mass inflation result. In fact, if \eqref{eq:Dafermos.condition} is verified generically, then the blow-up of the Hawking mass can be used to give an alternative proof of the $C^2$ formulation of the strong cosmic censorship conjecture, since the Hawking mass bounds the Kretschmann scalar from below. The works \cite{LO.interior, LO.exterior} however did not follow this path, but instead established a weaker analogue of \eqref{eq:Dafermos.condition}, which gives just an $L^2$-averaged lower bound. This weaker lower bound was slightly easier to prove, and was sufficient for $C^2$-inextendibility, but by itself fell short of establishing generic mass inflation.


To further elaborate the issue, it was proven in \cite{LO.interior, LO.exterior} that for a generic class of initial data, in the corresponding maximal globally hyperbolic future development, the (transversal to Cauchy horizon) derivatives of the scalar field blow up at the Cauchy horizon. However, in principle the scalar field could vanish identically on a portion of the Cauchy horizon near timelike infinity, giving rise to a static Cauchy horizon. In this special scenario, while the Cauchy horizon is still a weak null singularity, the Hawking mass could remain finite (cf.~Proposition~\ref{basic.crit}). Notice that this scenario for which the the first derivatives of the scalar field blow up but the mass remains finite is exactly an analogue of the Hiscock picture for the Einstein--null dust system.

%
%
%

Using Theorem~\ref{thm:main.intro}, we prove the following \underline{conditional} theorem for mass inflation:

\begin{theorem}\label{thm:intro.MI.2}
Consider a $2$-ended asymptotically flat future-admissible spherically symmetric initial data in the generic class $\mathcal G$ in Theorem~\ref{thm:SCC} so that the scalar field and its derivatives are initially compactly supported. \underline{Assume} that higher derivatives of the scalar field exhibit ``improved decay'' along each connected component of the event horizon (see precise assumptions in Theorem~\ref{mass.infl.thm}).

Then the Hawking mass is identically infinite at each connected component of the Cauchy horizon.
\end{theorem}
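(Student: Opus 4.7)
My plan is to reduce the blow-up of the Hawking mass to a quantitative non-triviality statement for $\phi$ along the Cauchy horizon near $i^+$, and then to establish that non-triviality by a nonlinear adaptation of part~(2) of Theorem~\ref{thm:main.intro}. Since the problem is two-ended, I will fix attention on one connected component of the event horizon and the corresponding component of $\CH$; the other side is symmetric.

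\textbf{Step 1: Mass inflation criterion.} Using the double null foliation and the settling-down to a sub-extremal Reissner--Nordstr\"om interior with parameters $(M_f, {\bf e}_f)$ near $i^+$ (as in \cite{LO.interior, LO.exterior}), I would derive (or invoke, in the spirit of Proposition~\ref{basic.crit}) a mass inflation criterion of the following type: if one has an $L^2$-averaged lower bound of the form $\int_v^\infty (\rd_v \phi)^2 \, \ud v' \gtrsim v^{-q}$ along $\CH$ with a suitable polynomial rate $q$, then, by integrating the Raychaudhuri equation $\rd_v (\rd_v r / \Omega^2) = -r\Omega^{-2} (\rd_v \phi)^2$ and combining with the evolution of the Hawking mass, one obtains that the Hawking mass is identically infinite on $\CH \cap \{v \geq v_0\}$ for some large $v_0$, and hence (by well-known propagation arguments in this model) on the full $\CH$. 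Thus it suffices to produce such an $L^2$-averaged lower bound for $\rd_v \phi$ on $\CH$ towards $i^+$.

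\textbf{Step 2: Transferring bounds from the event horizon to the Cauchy horizon.} On the event horizon, genericity of the data in $\mathcal G$ combined with the Dafermos--Rodnianski Price-law upper bounds and the $L^2$-averaged lower bounds of \cite{LO.instab} yields $L^2$-averaged polynomial upper and lower bounds for $T\phi$ matching hypothesis~(1) of Theorem~\ref{thm:main.intro}. The ``improved decay'' assumption supplies hypothesis~(2). I would now run a perturbative analogue of the proof of Theorem~\ref{thm:main.intro}(2): view the dynamical wave operator $\Box_g$ in the interior as $\Box_{g_{\mathrm{RN}}} + (\Box_g - \Box_{g_{\mathrm{RN}}})$, where the background is the Reissner--Nordstr\"om interior with parameters $(M_f,{\bf e}_f)$, and regard $(\Box_g - \Box_{g_{\mathrm{RN}}})\phi$ as a source. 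The stability estimates of \cite{LO.interior, LO.exterior} give quantitative decay of the metric difference, while the improved-decay hypothesis on $T^2\phi$ ensures that, after commuting with $T$, the contribution of this source remains subleading relative to the main $T\phi$ term throughout the interior. Feeding this perturbed solution into the zero-frequency scattering-map computation that underlies Theorem~\ref{thm:main.intro}(2), one obtains the desired $L^2$-averaged polynomial lower bound for $T\phi$ (equivalently $\rd_v\phi$ in the appropriate coordinates) along $\CH$ towards $i^+$. Combined with Step~1, this yields mass inflation.

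\textbf{Main obstacle.} The technical heart of the argument is Step~2: one must show that the zero-frequency scattering analysis that drives the linear lower bound is robust under the nonlinear perturbation. Concretely, two things must be checked. First, the zero-frequency reflection coefficient on the true spacetime agrees, to leading order, with its Reissner--Nordstr\"om value, so that the non-vanishing shown in the linear proof still survives. Second, the error terms generated by $(\Box_g - \Box_{g_{\mathrm{RN}}})\phi$, once transported through the interior and evaluated on $\CH$, are strictly smaller (in the $L^2$-averaged polynomial sense) than the main term inherited from the event horizon. Both points should reduce, via \cite{LO.interior, LO.exterior}, to estimates of the form ``decay of metric perturbation'' $\times$ ``decay of $T\phi$ and $T^2\phi$'', and it is precisely to make the hierarchy between these two close under $T$-commutation that the improved-decay hypothesis is imposed. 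Verifying this hierarchy uniformly up to $\CH$, and propagating the improved decay from the event horizon into the interior without loss, is where I expect the main work to lie.
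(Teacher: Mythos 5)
Your overall strategy matches the paper's: reduce mass inflation to the non-vanishing of the Cauchy-horizon-tangential derivative of $\phi$ near $i^+$ (the paper's Proposition~\ref{basic.crit}), and obtain that non-vanishing by comparing the nonlinear solution with a solution $\phi_{lin}$ of the linear wave equation on the exact Reissner--Nordstr\"om background to which part (2) of Theorem~\ref{thm:main.intro} applies. Two remarks on your Step 1. The criterion you propose is stronger than what is needed: because \cite{LO.interior} already gives the \emph{qualitative, unconditional} transversal blow-up $\lim_{V\to 1}\rd_V r/\Omg^2=-\infty$ (equation \eqref{eq:nonpert-blowup-dvr}), it suffices that $\lim_{V\to 1}\rd_u\phi(u_i,V)\neq 0$ along some sequence $u_i\to-\infty$; no polynomial rate along $\CH_1$ is required, and the Fatou-type argument in Proposition~\ref{basic.crit} then forces $\varpi\to+\infty$. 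Also, a smaller correction to Step 2: the Dafermos--Rodnianski upper bounds are explicitly \emph{not} strong enough to supply the $T\phi$ upper bound; the assumed ``improved decay'' must cover both $\int v^6(\rd_v\phi)^2<\infty$ (so that $p=8$ is the \emph{smallest} admissible even weight, which the zero-frequency commutator argument needs) and $\int v^8(\rd_v^2\phi)^2<\infty$.

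The substantive issue is the first of the two checks you list as the ``main obstacle.'' There is no ``zero-frequency reflection coefficient on the true spacetime'' to compare with its Reissner--Nordstr\"om value: the dynamical interior is not stationary, so there is no $t$-Fourier transform and no Fourier-multiplier representation of a scattering map to perturb. The paper never needs such an object. It runs the scattering/zero-frequency analysis \emph{only} for $\phi_{lin}$ on exact Reissner--Nordstr\"om (via Corollary~\ref{cor:main}, yielding $\int_{-\infty}^{u_s}u^8\lim_{v\to\infty}(\rd_u\phi_{lin})^2\,du=+\infty$), and then carries out precisely your second check in purely physical-space terms: $\phi-\phi_{lin}$ solves $\Box_{g_{RN}}(\phi-\phi_{lin})=F$ with $F$ bounded by (metric difference)$\times(\rd\phi)$ through the stability estimates of Theorem~\ref{main.theorem.C0.stability}, and a weighted energy estimate for the inhomogeneous equation propagated all the way up to $\CH_1$ (Theorem~\ref{scat.SS.inho.u}, using the spacelike hypersurface $u+v-\sqrt{v}=1$ and the non-smooth multiplier $y^N$) gives $\int_{-\infty}^{u_s}u^8\lim_{v\to\infty}(\rd_u\phi-\rd_u\phi_{lin})^2\,du<+\infty$, after which the triangle inequality concludes. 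That inhomogeneous estimate up to the Cauchy horizon is the genuine technical content you correctly anticipate in your second check; the first check should be dropped rather than ``verified.''
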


Theorem~\ref{thm:intro.MI.2} is proven using part (2) of Theorem~\ref{thm:main.intro}. They are related because Theorem~\ref{thm:main.intro} rules out the possibility --- for the linear scalar wave equation \eqref{wave.eqn} --- that the scalar field vanishes identically on a portion of the Cauchy horizon near timelike infinity. We show using a perturbative argument that this also holds for the \emph{nonlinear} solution. Therefore by our previous discussion mass inflation must occur.

The improved decay assumption that we need in Theorem~\ref{thm:intro.MI.2} corresponds to the upper bound assumptions of Theorem~\ref{thm:main.intro} on the event horizon. It should be noted that while the lower bound proven in \cite{LO.interior} is expected to be sharp (for instance by comparing with the linear result in \cite{AAG}; see Remark~\ref{rmk:expectation.AAG}), the upper bounds proven in \cite{DRPL} are worse than those expected. The assumptions of Theorem~\ref{thm:intro.MI.2} require some improvements over the upper bounds in \cite{DRPL} (although it still does not require the sharp upper bounds). Notice that one only needs an improved \emph{upper bound}, and in principle that is easier to obtain as compared to the improved \emph{lower bound} required in Theorem~\ref{thm:intro.dafermos}.

It should be stressed, however, that main point of Theorem~\ref{thm:intro.MI.2} is not mass inflation per se, since the assumed conditions in Theorems~\ref{thm:intro.dafermos} and \ref{thm:intro.MI.2} are both expected to hold; see Remark~\ref{rmk:expectation.AAG} below. Instead, we want to demonstrate with this theorem how linear result of the type in Theorem~\ref{thm:main.intro} can be applied in a nonlinear setting quite easily.

We end this subsection with a few remarks on Theorem~\ref{thm:intro.MI.2}.
\begin{remark}\label{rmk:expectation.AAG}
Both the assumed pointwise lower bound in Theorem~\ref{thm:intro.dafermos} and the assumed improved decay in Theorem~\ref{thm:intro.MI.2} remain open problems. Nevertheless, the results in \cite{AAG} for the linear wave equation on Reissner--Nordstr\"om suggest that the following may be true:
\begin{itemize}
\item Solutions arising from generic data obey the pointwise lower bound in Theorem~\ref{thm:intro.dafermos}.
\item Improved decay estimates assumed in Theorem~\ref{thm:intro.MI.2} hold for \underline{all} initial data (not just generic data).
\end{itemize} 
\end{remark}

\begin{remark}
Note that Theorems~\ref{thm:intro.dafermos} and \ref{thm:intro.MI.2} have no analogue for the Einstein--null dust system: indeed while a nust dust is a good approximation to a scalar field in the high frequency limit, Theorems~\ref{thm:intro.dafermos} and \ref{thm:intro.MI.2} precisely capture a phenomenon regarding the behavior of the scalar field at zero frequency.
\end{remark}

\begin{remark}
Note that the global structure of the interior of dynamical black hole in question may be very different from the global structure of Reissner--Nordstr\"om. (In particular, unlike Reissner--Nordstr\"om, its boundary can in principle have a spacelike portion!) Although our Theorem~\ref{thm:main.intro} is proved using global considerations in the Reissner--Nordstr\"om interior, it is still applicable to the problem at hand because in the course of the proof of Theorem~\ref{thm:SCC}, it is also established that \emph{in a region sufficiently close to timelike infinity}, the spacetime metric is indeed a small perturbation --- in some rough norms --- of that of Reissner--Nordstr\"om. Due to the monotonicity of the Hawking mass, in order to establish mass inflation, it suffices to consider the region near timelike infinity, in which we can use a perturbative argument.
\end{remark}

\begin{remark}
While we have only studied here the system \eqref{EMSFS}, it is also of interest to go beyond it and study the Einstein--Maxwell--\emph{charged} scalar field system in spherical symmetry, including in the case when the scalar field is massive. The more general system allows one to study simultaneously gravitational collapse and strong cosmic censorship. The issues regarding strong cosmic censorship conjecture for this system has recently been studied in a series of papers of Van de Moortel \cite{VDM,VDM2,VDM3,VDM4}, and the paper of Kehle--Van de Moortel \cite{KehleVdM}, which in particular show both $C^0$ stability and $C^2$ inextendibility conditional on appropriate decay assumptions on the event horizon. Spectacularly, it was shown in \cite{VDM3} that for asymptotically Reissner--Nordstr\"om black holes arising from one-ended gravitational collapse, the weak null singularities along the Cauchy horizon must break down. We remark that despite all this important progress, for the Einstein--Maxwell--charged scalar field system in spherical symmetry, it remains an open problem whether the Hawking mass generically blows up identically at the Cauchy horizon.
\end{remark}

\subsection{Outline of the paper} The remainder of the paper is organized as follows. First, in \textbf{Section~\ref{sec.geometry}}, we will introduce the geometry of the interior of Reissner--Nordstr\"om. In \textbf{Section~\ref{sec:scattering}}, we establish some simple bounds with energy estimates and use them to define the scattering maps in the interior of Reissner--Nordstr\"om for spherically symmetric data. In \textbf{Section~\ref{sec:thm}}, we will state a precise version of Theorem~\ref{thm:main.intro} (Theorem~\ref{thm:main}). In \textbf{Section~\ref{sec:phase.space}}, we discuss the scattering map for spherically symmetric data in phase space. Using this, we prove Theorem~\ref{thm:main} in \textbf{Section~\ref{sec:proof}}. Finally, we apply our Theorem~\ref{thm:main} to the spherically symmetric Einstein--Maxwell--(real) scalar field system to obtain a conditional mass inflation result in \textbf{Section~\ref{sec.application}}. 

\subsection*{Acknowledgements} We thank Mihalis Dafermos and Jan Sbierski for many insightful discussions.

J.~Luk is supported by a Terman fellowship and the NSF Grant DMS-2005435. S.-J.~Oh is supported by the Samsung Science and Technology Foundation under Project Number SSTF-BA1702-02, a Sloan Research Fellowship and a NSF CAREER Grant DMS-1945615. Y.~Shlapentokh-Rothman acknowledges support from NSF grant DMS-1900288, from an Alfred P. Sloan Fellowship in Mathematics, and from NSERC discovery grants RGPIN-2021-02562 and DGECR-2021-00093. 

\section{The geometry of the interior of Reissner--Nordstr\"om}\label{sec.geometry}

Let $M$ and ${\bf e}$ be real numbers satisfying $0<|{\bf e}|<M$. We define the \emph{interior of Reissner--Nordstr\"om with parameter $M$ and ${\bf e}$} to be the Lorentizian manifold $(\mathcal M_{RN}, g_{RN})$, where 
\begin{itemize}
\item $\mathcal M_{RN} \doteq \mathbb R\times (r_-,r_+)\times \mathbb S^2$, where $r_\pm \doteq M\pm\sqrt{M^2-{\bf e}^2}$; 
\item the metric $g_{RN}$ of the Reissner--Nordstr\"om spacetime given by:
\begin{equation}\label{RN.metric}
g_{RN} \doteq -\left(1-\f {2M} r+\f{{\bf e}^2}{r^2}\right)\,dt\otimes dt + \left(1-\f {2M} r+\f{{\bf e}^2}{r^2}\right)^{-1}\, dr\otimes dr + r^2 g_{\mathbb S^2},
\end{equation}
where $t\in \mathbb R$, $r\in (r_-,r_+)$ and $g_{\mathbb S^2}$ is the metric on the standard round sphere of radius $1$.
\end{itemize}
Together with an appropriate Maxwell field, Reissner--Nordstr\"om is a solution to the Einstein--Maxwell system.

\subsection{Spherical symmetry and the quotient manifold}

The Reissner--Nordstr\"om spacetime $(\mathcal M_{RN},g_{RN})$ is easily seen to be spherically symmetric\footnote{See Section~\ref{def.SS} for a general discussion.} in the sense that for $\mathcal M_{RN} = \mathcal Q_{RN} \times \mathbb S^2$ and $\mathcal Q_{RN}=\mathbb R\times (r_-,r_+)$, we can write
$$g_{RN} = g_{\mathcal Q_{RN}} + r^2 g_{\mathbb S^2},$$
where 
\begin{itemize}
\item $(\mathcal Q_{RN},g_{\mathcal Q_{RN}})$ is a $(1+1)$-dimensional Lorentzian manifold with $g_{\mathcal Q_{RN}}$ given by 
$$g_{\mathcal Q_{RN}} = -\left(1-\f {2M} r+\f{{\bf e}^2}{r^2}\right)\,dt\otimes dt + \left(1-\f {2M} r+\f{{\bf e}^2}{r^2}\right)^{-1}\, dr\otimes dr;$$
\item given a point $p\in \mathcal M_{RN}$, $r(p)$ depends only on ${\boldsymbol \pi}(p)$, where ${\boldsymbol \pi}:\mathcal M_{RN}\to \mathcal Q_{RN}$ is the natural projection map.
\end{itemize}
We will denote by $\vartheta$ a point on $\mathbb S^2$. Frequently, we will also use the standard spherical coordinates $(\theta,\varphi)$, in which case we have
$$g_{\mathbb S^2} = d\tht^2 + \sin^2\tht \, d\varphi^2.$$

\subsection{The $(u,v)$ coordinate system}\label{sec.null.1}

We define the $r^*$ coordinate in the interior of the Reissner--Nordstr\"om black hole:
\begin{equation}\label{eq:r*.RN.def}
r^* \doteq r+(M+\frac{2M^2-{\bf e}^2}{2\sqrt{M^2-{\bf e}^2}})\log (r_+-r) +(M-\frac{2M^2-{\bf e}^2}{2\sqrt{M^2-{\bf e}^2}})\log (r-r_-).
\end{equation}
Notice that this implies
$$\f{dr}{dr^*} = \f{r^2-2Mr+{\bf e}^2}{r^2}.$$

Define then the null coordinates
\begin{equation}\label{eq:uv.RN.def}
v \doteq \frac 12(r^*+t),\quad u \doteq \frac 12(r^*-t),
\end{equation}
which implies
\begin{equation}\label{null.vector.fields}
\frac{\rd}{\rd v}= \frac{\rd}{\rd r^*}+\frac{\rd}{\rd t},\quad\frac{\rd}{\rd u}= \frac{\rd}{\rd r^*}-\frac{\rd}{\rd t}.
\end{equation}
According to \eqref{RN.metric}, in this coordinate system, the Riessner--Nordstr\"{o}m metric takes the form
$$g_{RN}=-\f{\Omg_{RN}^2}{2} (du\otimes dv+dv\otimes du)+ r_{RN}^2 g_{\mathbb S^2},$$
where $r_{RN} = r$ is now thought of as a function of $(u,v)$, and $\Omg_{RN}^2 \doteq -4(1-\frac{2M}{r_{RN}}+\frac{{\bf e}^2}{r_{RN}^2})$. Moreover, by \eqref{eq:r*.RN.def} and \eqref{null.vector.fields}, we have
\begin{equation}\label{int.r}
\rd_v r_{RN} = \rd_u r_{RN} = 1-\f{2M}{r_{RN}}+\f{{\bf e}^2}{r_{RN}^2}.
\end{equation}

In the $(u,v)$ coordinates, the spherically symmetric wave equation $\Box_{g_{RN}} \phi = 0$ takes the form
\begin{equation}\label{eq:ss.wave}
\rd_u \rd_v \phi + \f{\rd_v r_{RN}}r \rd_u \phi + \f{\rd_u r_{RN}}r \rd_v \phi = 0.
\end{equation}

\subsection{Event horizon and Cauchy horizon}\label{sec:horizons} We attach boundaries to $\mathcal Q_{RN}$, known as \emph{event horizon} $\EH_{total}$ and \emph{Cauchy horizon} $\CH_{total}$, to obtain a manifold-with-corner $\overline{\mathcal Q_{RN}}$. Define $\overline{\mathcal M_{RN}} \doteq \overline{\mathcal Q_{RN}}\times \mathbb S^2$. Abusing conventions slightly, we will also refer to $\EH_{total} \times \mathbb S^2 \subset \overline{\mathcal M_{RN}}$ as the event horizon, and $\CH_{total} \times \mathbb S^2 \subset \overline{\mathcal M_{RN}}$ as the Cauchy horizon. Notice that the metric $g_{RN}$ extends smoothly up to the boundary.

Define the functions $U_{\EH}(u)$, $U_{\CH}(u)$, $V_{\EH}(v)$ and $V_{\CH}(v)$ which are smooth and strictly increasing functions of their arguments and satisfy the following ODEs: 
\begin{equation}\label{U.EH.def}
\f{dU_{\EH}}{du}=e^{2\kappa_+ u} \mbox{ and }\, U_{\EH}(u)\to 0 \mbox{ as }u\to -\infty;
\end{equation}
\begin{equation}\label{U.CH.def}
\f{dU_{\CH}}{du}=e^{-2\kappa_- u} \mbox{ and }\, U_{\CH}(u)\to 1 \mbox{ as }u\to +\infty;
\end{equation}
\begin{equation}\label{V.EH.def}
\f{dV_{\EH}}{dv}=e^{2\kappa_+ v}\mbox{ and }V_{\EH}(v)\to 0\mbox{ as }v\to -\infty;
\end{equation}
\begin{equation}\label{V.CH.def}
\f{dV_{\CH}}{dv}=e^{-2\kappa_- v}\mbox{ and }V_{\CH}(v)\to 1\mbox{ as }v\to +\infty,
\end{equation}
where $\kappa_+>0$ and $\kappa_->0$ are defined to be to be\footnote{Note that this coincides with the definition in \cite{LO.interior}, but differs from that in \cite{LO.instab}, where $\kappa_-$ is taken to be negative.} 
\begin{equation}\label{eq:def.kappa}
\kappa_+ \doteq \f{r_+-r_-}{2r_+^2},\quad \kappa_- \doteq \f{r_+-r_-}{2r_-^2}.
\end{equation}

In the $(U_{\EH},V_{\EH})$ coordinate system, we attach the boundaries $\EH_1\doteq \{U_{\EH}=0\}$ and $\EH_2\doteq \{V_{\EH}=0\}$. Denote also the \emph{event horizon} as $\EH_{total}=\EH_1\cup\EH_2$. In the $(U_{\CH},V_{\CH})$ coordinate system, we attach the boundaries $\CH_1 \doteq \{V_{\CH}=1\}$ and $\CH_2  \doteq \{U_{\CH}=1\}$. Denote also the \emph{Cauchy horizon} as $\CH_{total}=\CH_1\cup\CH_2$. 

We define the \emph{bifurcation sphere of $\EH_{total}$} by
$$\mathcal B_{\EH} \doteq \EH_1\cap \EH_2 = \{(U_{\EH},V_{\EH}):U_{\EH}=V_{\EH}=0\}.$$
Note that $\mathcal B_{\EH}$ is a subset of both $\EH_1$ and $\EH_2$. Similarly, define \emph{bifurcation sphere of $\CH_{total}$} by
$$\mathcal B_{\CH} \doteq \CH_1\cap \CH_2 = \{(U_{\CH},V_{\CH}):U_{\CH}=V_{\CH}=1\}.$$ 

\subsection{Behavior of $\Omg_{RN}$ near the horizons}
Using \eqref{eq:r*.RN.def} and recalling \eqref{eq:def.kappa}, it is easy to check that $r^*$ can then be alternatively written as
\begin{equation}\label{eq:def.r^*}
r^*=r_{RN}+\f{1}{2\kappa_+}\log (r_+-r_{RN}) -\f{1}{2\kappa_-}\log (r_{RN}-r_-).
\end{equation}
We compute that as $r_{RN}\to r_+$, we have
$$r_+-r_{RN}=e^{-2\kappa_+ r_+}(r_+-r_-)^{\f{\kappa_+}{\kappa_-}}e^{2\kappa_+ r^*}(1+O(r_+-r_{RN})).$$
In other words, for any $A\in \mathbb R$, in the $r^* \leq A$ region (i.e.,~in a neighborhood of the event horizon),
\begin{equation}\label{eq:Omg.near.H}
\f 14\Omg_{RN}^2=-\rd_u r_{RN}=-\rd_v r_{RN}=\f{e^{-2\kappa_+ r_+}(r_+-r_-)^{1+\f{\kappa_+}{\kappa_-}}}{r_+^2}e^{2\kappa_+ r^*}(1+O_A(r_+-r_{RN}))= O_A(e^{2\kappa_+ r^*}).
\end{equation}

On the other hand, as $r_{RN}\to r_-$, we have
$$r_{RN}-r_-=e^{2\kappa_-r_-}(r_+-r_-)^{\f{\kappa_-}{\kappa_+}}e^{-2\kappa_-r^*}(1+O(r_{RN}-r_-)).$$
As a consequence, for any $A\in \mathbb R$, in the $r^* \geq A$ region (i.e.,~in a neighborhood of the Cauchy horizon),
\begin{equation}\label{eq:Omg.near.CH}
\f 14\Omg_{RN}^2=-\rd_u r_{RN}=-\rd_v r_{RN}=\f{e^{-2\kappa_-r_-}(r_+-r_-)^{1+\f{\kappa_-}{\kappa_+}}}{r_-^2}e^{-2\kappa_-r^*}(1+O_A(r_{RN}-r_-))= O_A(e^{-2\kappa_-r^*}).
\end{equation}

\subsection{Killing vector fields on $(\mathcal M_{RN},g_{RN})$}\label{sec:vectorfields}

The Reissner--Nordstr\"om interior $(\mathcal M_{RN}, g_{RN})$ admits the following Killing vector fields:
\begin{equation}\label{eq:def.Killing}
T \doteq \rd_t,\quad \mathcal O_1 \doteq \sin\varphi \rd_{\theta} + \f{\cos\varphi\cos\theta}{\sin\theta}\rd_{\varphi},\quad \mathcal O_2 \doteq \cos\varphi\rd_{\theta}-\f{\sin\varphi\cos\theta}{\sin\theta}\rd_{\varphi},\quad \mathcal O_3 \doteq \rd_{\varphi}.
\end{equation}

Note that all of these vector fields extend smoothly up to the event horizon and the Cauchy horizon. In particular\footnote{Here, this is to be understood as the extension of the coordinate vector field $\rd_v$ in the $(u,v)$ coordinate system to $\overline{\mathcal M_{RN}}$. Similar convention is used for $\rd_u$ on $\EH_2$ and $\CH_1$.}, 
\begin{equation}\label{T.on.the.horizons}
\mbox{$T = \f 12\rd_v$ on\ $\EH_1$ and $\CH_2$, while $T = - \f 12\rd_u$ on $\EH_2$ and $\CH_1$.}
\end{equation}

\subsection{Volume forms}\label{sec:vol}

Before we end this section on the Reissner--Nordstr\"om geometry, we briefly comment on volume forms on $(\mathcal M_{RN},g_{RN})$.

The metric $g_{RN}$ induces a natural (positive) volume form $\vol$. In the $(u,v,\theta,\varphi)$ coordinates, we have
\begin{equation}\label{vol.est}
\vol \doteq \sqrt{-\det g_{RN}} \, du\,dv\, d\tht\, d\varphi \, = -2(1-\f{2M}{r_{RN}}+ \f{{\bf e}^2}{r_{RN}^2}) r_{RN}^2 \sin\tht \, du\,dv\, d\tht\, d\varphi\,= \, \f 12 r_{RN}^2\Omg_{RN}^2 \sin\tht\, du\,dv\, d\tht\, d\varphi.
\end{equation}
On constant-$u$ and constant-$v$ null hypersurfaces, we define positive volume forms $\vol_u$ and $\vol_v$ respectively by $\vol = du\wedge \vol_u$ and $\vol = dv\wedge \vol_v$. It can then be checked that
\begin{equation}\label{eq:vol.u.v}
\begin{split}
\vol_u \doteq -2(1-\f{2M}{r_{RN}}+ \f{{\bf e}^2}{r_{RN}^2}) r_{RN}^2 \sin\tht \,dv\, d\tht\, d\varphi = \f 12 r_{RN}^2\Omg_{RN}^2 \sin \tht \,dv\, d\tht\, d\varphi,\\
\vol_v \doteq -2(1-\f{2M}{r_{RN}}+ \f{{\bf e}^2}{r_{RN}^2}) r_{RN}^2 \sin\tht \,du\, d\tht\, d\varphi = \f 12 r_{RN}^2\Omg_{RN}^2 \sin \tht\, du\, d\tht\, d\varphi.
\end{split}
\end{equation}
We will also use $d\sigma_{\vartheta}$ to denote the standard volume form of the induced metric on the sphere of symmetry. Note that
$$d\sigma_{\vartheta} \doteq r_{RN}^2 \sin\tht \, d\tht\, d\varphi.$$

\section{Energy estimates for spherically symmetric solutions in the interior of Reissner--Nordstr\"om}\label{sec:scattering}

In this section, we discuss a scattering theorem for the \eqref{wave.eqn} on $(\mathcal M_{RN},g_{RN})$ for spherically symmetric solutions. We first review the vector field multiplier method (specialized to the spherically symmetric case) in \textbf{Section~\ref{sec:vector.field.method}}. In \textbf{Section~\ref{sec:SS.EE}}, we derive energy estimates for solutions to the wave equation with spheically symmetric data. Finally, in \textbf{Section~\ref{sec:scat.ss}}, we define the transmission and reflection maps.

\subsection{The vector field multiplier method}\label{sec:vector.field.method}

In this subsection, we review some general notions regarding the vector field multiplier method. These will be useful not only for Theorem~\ref{scat.SS}, but also for the rest of the paper.

We begin with the definition of stress--energy--momentum tensor on a general Lorentzian manifold:
\begin{definition}[Stress--energy--momentum tensor]
Define the \emph{stress--energy--momentum tensor} by
$$\mathbb T_{\mu\nu}[\phi] \doteq \rd_\mu\phi\rd_\nu\phi-\f 12 g_{\mu\nu} (g^{-1})^{\alp\bt}\rd_{\bt}\phi\rd_{\alp}\phi.$$
\end{definition}
The following easy standard positivity property will be useful:
\begin{lemma}\label{positivity}
Let $X$, $Y$ be future-directed and causal at a point $p$, then $(\T[\phi](X,Y))(p)\geq 0$.
\end{lemma}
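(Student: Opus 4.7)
The plan is to reduce the claim to a short algebraic computation in a carefully chosen orthonormal frame at $p$. Since $\mathbb T[\phi]$ is a tensor whose value at $p$ depends only on $d\phi(p)$, the assertion is purely pointwise and algebraic in $d\phi(p)$, $X(p)$, $Y(p)$. By continuity of $\mathbb T[\phi](X,Y)$ in its vector arguments, it suffices to treat the case when both $X$ and $Y$ are strictly future-directed timelike at $p$; the null/causal boundary case is recovered by approximation.

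Next, I would fix an orthonormal frame $\{e_0,e_1,e_2,e_3\}$ at $p$ with $g(e_0,e_0)=-1$, $g(e_i,e_i)=+1$ for $i=1,2,3$, and $e_0$ proportional to $X$. Exploiting the bilinearity of $\mathbb T[\phi]$ in its multiplier slots, I may rescale so that $X=e_0$ outright. Writing $Y = y^0 e_0 + \sum_{i=1}^{3} y^i e_i$, the assumption that $Y$ is future-directed causal becomes the Euclidean inequality $y^0 \geq |\vec y|$, where $|\vec y| := \bigl(\sum_{i=1}^3 (y^i)^2\bigr)^{1/2}$.

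With this setup, expanding the definition of $\mathbb T[\phi](X,Y)$ directly, using $g(X,Y) = -y^0$ and $g^{-1}(d\phi,d\phi) = -(e_0\phi)^2 + \sum_{i=1}^{3}(e_i\phi)^2$, a short computation yields
\[
\mathbb T[\phi](X,Y) = \tfrac{1}{2} y^0 \bigl((e_0\phi)^2 + |\vec\nabla\phi|^2\bigr) + (e_0\phi)\,(\vec y \cdot \vec\nabla\phi),
\]
where $\vec\nabla\phi := (e_1\phi, e_2\phi, e_3\phi)$ and $\vec y \cdot \vec\nabla\phi$ is the ordinary Euclidean inner product. The cross term is then controlled by Cauchy--Schwarz followed by AM--GM on the spatial factor:
\[
\bigl|(e_0\phi)(\vec y \cdot \vec\nabla\phi)\bigr| \leq |\vec y|\,|e_0\phi|\,|\vec\nabla\phi| \leq \tfrac{1}{2}|\vec y|\bigl((e_0\phi)^2 + |\vec\nabla\phi|^2\bigr).
\]
Combined with $y^0 \geq |\vec y|$, this immediately gives $\mathbb T[\phi](X,Y) \geq \tfrac{1}{2}(y^0-|\vec y|)\bigl((e_0\phi)^2 + |\vec\nabla\phi|^2\bigr) \geq 0$. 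There is no genuine obstacle here: the statement is a classical pointwise positivity (the dominant energy condition for a scalar field), and the only subtlety is carefully tracking the sign of $g(X,Y)$ in the Lorentzian convention.
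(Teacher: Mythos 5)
Your argument is correct: the frame computation gives exactly
\[
\mathbb T[\phi](e_0,Y) = \tfrac{1}{2} y^0 \bigl((e_0\phi)^2 + |\vec\nabla\phi|^2\bigr) + (e_0\phi)(\vec y \cdot \vec\nabla\phi),
\]
and the Cauchy--Schwarz step together with $y^0\geq|\vec y|$ closes it; note the inequality $y^0\geq|\vec y|$ already covers $Y$ null, so the continuity/approximation step is really only needed to reduce $X$ to the timelike case where it can serve as a frame vector. The paper states this lemma as a standard fact and supplies no proof, so there is nothing to compare against; what you have written is the classical dominant-energy-condition computation for a scalar field and is a complete and correct justification.
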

Denoting by $\nabla$ the Levi-Civita connection associated to $g$, we have
$$\nabla^\mu \mathbb T_{\mu\nu}[\phi]=\Box_g\phi\rd_\nu\phi,$$ 
which implies the following \emph{energy identity}:
\begin{lemma}[Energy estimates]\label{lem:energy.identity}
For a compact region $\mathcal D \subseteq \mathcal M$ with piecewise smooth boundary $\partial \mathcal D$, which is oriented with respect to the outward pointing normal, Stokes' theorem now yields
\begin{equation}
\label{EnergyEst}
\int\limits_{\partial D} \iota_{\T[\phi](X, \cdot)^\sharp} \,\vol = \int\limits_{\mathcal D} d\big(  \iota_{\T[\phi](X, \cdot)^\sharp} \,\vol\big) = \int\limits_{\mathcal D} \Big(\T[\phi]_{\mu \nu} \nabla^\mu X^\nu + \Box_g \phi (X\phi) \Big) \, \vol \;,
\end{equation}
where $\vol$ is the volume form induced by the metric $g$.
\end{lemma}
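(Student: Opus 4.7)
The plan is to identify the differential form $\omega \doteq \iota_{\T[\phi](X,\cdot)^\sharp}\vol$ as the natural codimension-one object on which to apply Stokes' theorem, and then reduce the second equality to a divergence computation using the already-stated identity $\nabla^\mu\T[\phi]_{\mu\nu} = \Box_g\phi\,\partial_\nu\phi$. With this in mind, the first equality in \eqref{EnergyEst} is simply Stokes' theorem applied to $\omega$ on the compact region $\mathcal{D}$, with $\partial\mathcal{D}$ oriented so that the pullback of $\iota_V\vol$ to the boundary represents the flux of $V = \T[\phi](X,\cdot)^\sharp$ through $\partial\mathcal{D}$ relative to the outward-pointing normal. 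No computation is needed here beyond invoking the standard Stokes' theorem on oriented manifolds with corners, which applies because $\partial\mathcal{D}$ is piecewise smooth.

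For the second equality I would invoke the familiar identity $d(\iota_V\vol) = (\nabla_\mu V^\mu)\vol$, valid for any smooth vector field $V$ on an oriented pseudo-Riemannian manifold. Applied to $V^\mu = \T[\phi]^{\mu\nu}X_\nu$, the Leibniz rule gives
\begin{equation*}
\nabla_\mu\bigl(\T[\phi]^{\mu\nu}X_\nu\bigr) = \bigl(\nabla_\mu \T[\phi]^{\mu\nu}\bigr)X_\nu + \T[\phi]^{\mu\nu}\nabla_\mu X_\nu.
\end{equation*}
The first term equals $\Box_g\phi\,(X\phi)$ by the divergence identity for $\T[\phi]$ recalled just before the lemma, and the second term equals $\T[\phi]_{\mu\nu}\nabla^\mu X^\nu$ by the symmetry of $\T[\phi]$ and metric compatibility. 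Combining yields exactly the integrand on the right of \eqref{EnergyEst}.

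There is essentially no main obstacle: the only things to keep track of are (i) the sign/orientation convention implicit in writing the boundary integral as $\int_{\partial\mathcal{D}}\iota_V\vol$ rather than as a flux against an explicit unit normal — these match once one fixes the orientation of $\vol$ and uses the standard outward-pointing convention on $\partial\mathcal{D}$; and (ii) the fact that Stokes' theorem applies despite $\partial\mathcal{D}$ being only piecewise smooth, which is the standard statement for manifolds with corners and will be the form invoked later when $\mathcal{D}$ is bounded by portions of constant-$u$, constant-$v$, and spacelike hypersurfaces in $\overline{\mathcal{M}_{RN}}$.
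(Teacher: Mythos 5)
Your argument is correct and is essentially the proof the paper intends: the lemma is stated as a direct consequence of Stokes' theorem together with the divergence identity $\nabla^\mu \T_{\mu\nu}[\phi]=\Box_g\phi\,\rd_\nu\phi$ recorded just before it, and your reduction via $d(\iota_V\vol)=(\nabla_\mu V^\mu)\vol$ and the Leibniz rule is exactly that computation. No gaps.
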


In order to apply the energy estimates \eqref{EnergyEst}, it is convenient to make the following definition:
\begin{definition}
Given a $C^1$ vector field $X$, define the \emph{deformation tensor} ${ }^{(X)}\pi$
$${ }^{(X)}\pi_{\mu\nu} \doteq \nabla_\mu X_{\nu} +\nabla_\nu X_\mu.$$
\end{definition}

On $(\mathcal M_{RN},g_{RN})$, ${ }^{(X)}\pi_{\mu\nu}$ can be explicitly computed as follows (see \cite{Fra}):
\begin{lemma}\label{RN.K}
For every spherically symmetric $C^1$ vector field $X$ on $(\mathcal M_{RN},g_{RN})$, the following identity holds on $(\mathcal M_{RN},g_{RN})$:
\begin{equation*}
\begin{split}
\mathbb T_{\mu\nu}[\phi]{ }^{(X)}\pi^{\mu\nu}=&-\f{4}{\Omg_{RN}^2}\left((\rd_uX^v)(\rd_v\phi)^2+(\rd_vX^u)(\rd_u\phi)^2\right)-\f {4}{r_{RN}} (X^v+X^u)(\rd_u\phi\rd_v\phi)\\
&- 2\left(\f 12(\rd_vX^v+\rd_uX^u)+X^v \rd_v\log\Omg_{RN} + X^u \rd_u\log\Omg_{RN} \right)|\slashed\nabla\phi|^2,
\end{split}
\end{equation*}
where $|\slashed\nabla\phi|^2$ is the square of the norm of the angular derivatives with respect to the induced metric on the $2$-spheres of symmetry, and is given in the $(\theta,\varphi)$ coordinates by
$$|\slashed\nabla\phi|^2 \doteq \f{1}{r_{RN}^2}(\rd_\theta\phi)^2 + \f{1}{r_{RN}^2\sin^2\theta}(\rd_\varphi\phi)^2.$$
\end{lemma}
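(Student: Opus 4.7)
The plan is a direct tensor-calculus computation in the double-null chart $(u,v,\tht,\varphi)$. Since $X$ is spherically symmetric, write $X = X^u \rd_u + X^v \rd_v$ with $X^u, X^v$ independent of the angular coordinates; then by the symmetry of $\T[\phi]$, $\T_{\mu\nu}[\phi]\,{}^{(X)}\pi^{\mu\nu} = 2\T_{\mu\nu}[\phi]\nb^\mu X^\nu = 2\T_{\mu\nu}[\phi]g^{\mu\alp}(\rd_\alp X^\nu + \Gmm^\nu_{\alp\bt} X^\bt)$, so the proof reduces to reading off this contraction in coordinates.

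First I would record the non-vanishing Christoffel symbols of $g_{RN}$ in double-null form: $\Gmm^u_{uu} = 2\rd_u \log \Omg_{RN}$, $\Gmm^v_{vv} = 2\rd_v \log \Omg_{RN}$; the angular-transport symbols $\Gmm^\tht_{u\tht} = \Gmm^\varphi_{u\varphi} = \rd_u r_{RN}/r_{RN}$ together with their $v$-versions; the ``mixing'' symbols $\Gmm^u_{\tht\tht}, \Gmm^v_{\tht\tht}, \Gmm^u_{\varphi\varphi}, \Gmm^v_{\varphi\varphi}$ proportional to $r_{RN}\rd_{u,v} r_{RN}/\Omg_{RN}^2$; and the purely-spherical $\Gmm^\tht_{\varphi\varphi}, \Gmm^\varphi_{\tht\varphi}$. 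Together with $g^{uv} = -2/\Omg_{RN}^2$, $g^{\tht\tht} = 1/r_{RN}^2$, $g^{\varphi\varphi} = 1/(r_{RN}^2 \sin^2\tht)$, the components of $\T[\phi]$ simplify to $\T_{uu} = (\rd_u\phi)^2$, $\T_{vv} = (\rd_v\phi)^2$, and $\T_{uv} = \tfrac14 \Omg_{RN}^2 |\slashed{\nb}\phi|^2$ (the trace subtraction exactly cancels the $\rd_u\phi\rd_v\phi$ part), with $\T_{\tht\tht}, \T_{\varphi\varphi}$ obtained directly from the definition using $(g^{-1})^{\alp\bt}\rd_\alp\phi\,\rd_\bt\phi = -\tfrac{4}{\Omg_{RN}^2}\rd_u\phi\,\rd_v\phi + |\slashed{\nb}\phi|^2$.

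I would then expand $\T_{\mu\nu}[\phi]\nb^\mu X^\nu$ and group the resulting sum by the species of $\phi$-derivative that appears, noting that the mixed null-angular components of $\pi^{\mu\nu}$ (such as $\pi^{u\tht}$) vanish because $X$ is spherically symmetric. The pure $(\rd_u\phi)^2$ and $(\rd_v\phi)^2$ contributions come only from $\T_{uu}\nb^v X^u$ and $\T_{vv}\nb^u X^v$ --- whose Christoffel corrections vanish since $\Gmm^u_{vv} = \Gmm^v_{uu} = 0$ --- giving after the doubling the first summand $-\tfrac{4}{\Omg_{RN}^2}((\rd_u X^v)(\rd_v\phi)^2 + (\rd_v X^u)(\rd_u\phi)^2)$. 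The mixed $\rd_u\phi\rd_v\phi$ sector comes from $\T_{\tht\tht}\pi^{\tht\tht} + \T_{\varphi\varphi}\pi^{\varphi\varphi}$, specifically from the $-\tfrac12 g_{\tht\tht}(g^{-1})^{\alp\bt}\rd_\alp\phi\rd_\bt\phi$ piece of $\T$ combined with $\pi^{\tht\tht}, \pi^{\varphi\varphi}$ (which are built from $\Gmm^\tht_{\tht u}X^u + \Gmm^\tht_{\tht v}X^v$); substituting the relation $\rd_u r_{RN} = \rd_v r_{RN}$ from \eqref{int.r} collects this into the stated prefactor $-\tfrac{4}{r_{RN}}(X^u+X^v)$. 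The angular $|\slashed{\nb}\phi|^2$ sector is assembled from $2\T_{uv}\pi^{uv}$ (which introduces the $\log\Omg_{RN}$ terms through the Christoffel corrections in $\nb^{u}X^{v}$ and $\nb^{v}X^{u}$) together with the trace contributions from $\T_{\tht\tht}\pi^{\tht\tht} + \T_{\varphi\varphi}\pi^{\varphi\varphi}$; the $(\rd_\tht\phi)^2, (\rd_\varphi\phi)^2$ pieces combine with the corresponding angular trace subtractions to cancel, leaving exactly the stated coefficient of $|\slashed{\nb}\phi|^2$.

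There is no analytic content in the lemma; the only real hazard is careful bookkeeping. I expect the delicate steps to be (i) tracking the simultaneous cancellations between the trace term $-\tfrac12 g_{\mu\nu}(g^{-1})^{\alp\bt}\rd_\alp\phi\rd_\bt\phi$ of $\T$ and the $\log \Omg_{RN}$-type Christoffel corrections in $\nb X$, both within the $\rd_u\phi\rd_v\phi$ block and in the $|\slashed{\nb}\phi|^2$ block; and (ii) combining the mixing Christoffels $\Gmm^{u}_{\tht\tht}, \Gmm^{v}_{\tht\tht}, \Gmm^{u}_{\varphi\varphi}, \Gmm^{v}_{\varphi\varphi}$ together with $\rd_{u,v} r_{RN}$ to recover the expected coefficients. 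This is a standard (if tedious) null-coordinate stress--energy computation (cf.~\cite{Fra}), and a careful execution yields the claimed identity.
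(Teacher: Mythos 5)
The paper offers no proof of this lemma --- it simply cites \cite{Fra} --- so your direct computation in the double-null chart is the right (and essentially the only) route. Your organization is correct: reducing to $2\mathbb T_{\mu\nu}[\phi]\nabla^\mu X^\nu$, recording the null-chart Christoffel symbols, observing that $\mathbb T_{uv}=\tfrac14\Omg_{RN}^2|\slashed\nabla\phi|^2$ after the trace cancellation, that the mixed null--angular components of ${}^{(X)}\pi$ vanish for spherically symmetric $X$, and that the $(\rd_\theta\phi)^2,(\rd_\varphi\phi)^2$ pieces cancel against the angular trace subtractions. The $(\rd_u\phi)^2,(\rd_v\phi)^2$ block and the $|\slashed\nabla\phi|^2$ block come out exactly as you describe.

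The one place your sketch does not close as asserted is the cross-term block. Carrying out the computation you outline, $\mathbb T_{\theta\theta}\pi^{\theta\theta}+\mathbb T_{\varphi\varphi}\pi^{\varphi\varphi}$ contributes $\tfrac{8}{r_{RN}\Omg_{RN}^2}(X^u\rd_u r_{RN}+X^v\rd_v r_{RN})\,\rd_u\phi\,\rd_v\phi$, and substituting $\rd_u r_{RN}=\rd_v r_{RN}=-\tfrac14\Omg_{RN}^2$ from \eqref{int.r} yields $-\tfrac{2}{r_{RN}}(X^u+X^v)\rd_u\phi\,\rd_v\phi$, i.e.\ \emph{half} the stated prefactor. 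This is not a defect of your method but a factor-of-two typo in the Lemma as printed: the paper is internally inconsistent on this point, since the first line of \eqref{KY} carries $-\tfrac4r$ while the second line (the one actually estimated) carries $-\tfrac2r$; the $-\tfrac2r$ version is the correct one, and it is the only one consistent with the other two coefficients of the identity under the paper's convention ${}^{(X)}\pi_{\mu\nu}=\nabla_\mu X_\nu+\nabla_\nu X_\mu$. So you should not claim that careful bookkeeping ``collects this into the stated prefactor $-\tfrac{4}{r_{RN}}(X^u+X^v)$''; if your computation produced that, it would contain a compensating error. The discrepancy is harmless for every application in the paper, since only the sign of the main terms and the Cauchy--Schwarz absorbability of the cross term are used. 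One small notational point: the $(\rd_u\phi)^2$ contribution is $\mathbb T_{uu}\pi^{uu}$ with $\pi^{uu}=2g^{uv}\nabla_v X^u$, so writing it as ``$\mathbb T_{uu}\nabla^v X^u$'' conflates $\nabla^v$ with $\nabla_v$; the coefficient you state for that block is nonetheless correct.
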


\subsection{Energy estimates}\label{sec:SS.EE}

In this section, we use the formalism defined in Section~\ref{sec:vector.field.method} above to prove energy estimates for spherically symmetric solutions to the wave equation $\Box_{g_{RN}} \phi = 0$.

We remark that the energy estimates presented in Theorem~\ref{scat.SS} are by now standard; see for instance \cite{Fra, KSR}. One particular feature of the energy estimates we use is that we apply a multipler vector field which is \emph{non-smooth} ($C^0$ but not $C^1$) at the horizons (see the definitions \eqref{y.def} and \eqref{Y.def}). This type of non-smooth multiplier, which appeared already in \cite[Proposition~9.2]{LO.interior} (see also \cite{DRS}), generates some better spacetime terms, which will be useful later in Section~\ref{sec:EE.inhom}.

Before we proceed, let us define some useful weight functions:
\begin{definition}[Polynomial weights]\label{def:w.weight}
Given $p \in [0,\infty)$, let $w_p:\mathbb R\to \mathbb R$ be a smooth and non-decreasing function such that
$$w_p(x)=\begin{cases}
1 \quad &\mbox{ if }x \leq \f 12\\
2 x^{p} \quad &\mbox{ if }x\geq 1.
\end{cases}
$$

\end{definition}

We are now ready to give the main energy estimates.

\begin{theorem}[Energy estimates in spherical symmetry]\label{scat.SS}
Fix $p_1,\,p_2\in (1,+\infty)$. Let $\phi$ be a solution to 
$$\Box_{g_{RN}} \phi = 0$$
in $\mathcal M_{RN}$ which is smooth up to $\EH_{total} = \EH_1 \cup \EH_2$ with spherically symmetric characteristic initial data on $\EH_{total}$ such that the following holds:
\begin{enumerate}
\item $$\phi\restriction_{\mathcal B_{\EH}} = 0,\quad \lim_{v\to +\infty} \phi\restriction_{\EH_1}(v) = 0 = \lim_{u\to +\infty} \phi\restriction_{\EH_2}(u),$$
\item \begin{equation}\label{eq:def.A.in.EE}
\int_{\mathbb R} w_{p_1}(v)w_{p_2}(-v)(T\phi\restriction_{\EH_1})^2(v)\,dv + \int_{\mathbb R} w_{p_1}(-u)w_{p_2}(u)(T\phi \restriction_{\EH_2})^2(u)\,du \doteq A <+\infty.
\end{equation}
\end{enumerate}

Then the following holds for some $C>0$ depending only on $p_1$, $p_2$, $M$ and $\bf e$:
\begin{enumerate}
\item The following uniform upper bound holds:
\begin{equation}\label{eq:main.energy.est}
\sup_{u\in \mathbb R} \int_{\mathbb R} w_{p_1}(v)w_{p_2}(-v)(\rd_v\phi)^2(u,v)\,dv + \sup_{v\in \mathbb R} \int_{\mathbb R} w_{p_1}(-u)w_{p_2}(u)(\rd_u\phi)^2(u,v)\,d u \leq CA.
\end{equation}
\item The solution $\phi$ extends to a continuous function (which we abuse notation slightly and write) $\phi: \mathcal M_{RN}\cup \EH_{total} \cup \CH_{total}\to \mathbb R$.
\item The derivative $\rd_v\phi$ extends continuously to $\CH_2\setminus \CH_1$ and the derivative $\rd_u \phi$ extends continuously to $\CH_1 \setminus \CH_2$.
\item The extension of $\phi$ satisfies $\phi\restriction_{\CH_1},\,\phi \restriction_{\CH_2}\in W^{1,2}_{loc}$ and obeys the estimate 
\begin{equation}\label{eq:energy.est.on.CH}
\int_{\mathbb R} w_{p_1}(v)w_{p_2}(-v)(T\phi\restriction_{\CH_2})^2(v)\,dv + \int_{\mathbb R} w_{p_1}(-u)w_{p_2}(u)(T\phi \restriction_{\CH_1})^2(u)\,du \leq CA.
\end{equation}
\end{enumerate}
\end{theorem}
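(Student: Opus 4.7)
The plan is to combine the vector field multiplier framework of Section~\ref{sec:vector.field.method} with the transport structure of the spherically symmetric wave equation \eqref{eq:ss.wave}. For part~(1), I would apply the energy identity of Lemma~\ref{lem:energy.identity} with the spherically symmetric multiplier
\begin{equation*}
X \doteq w_{p_1}(v)\,w_{p_2}(-v)\,\partial_v + w_{p_1}(-u)\,w_{p_2}(u)\,\partial_u
\end{equation*}
on an exhausting family of domains $\mathcal{D}_{u_\ast, v_\ast}$ bounded to the past by $\mathcal{H}_{total}$ and to the future by the null hypersurfaces $\{u = u_\ast\}$ and $\{v = v_\ast\}$. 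Since $X^v$ depends only on $v$ and $X^u$ only on $u$, the coefficients $\partial_u X^v$ and $\partial_v X^u$ in the formula of Lemma~\ref{RN.K} both vanish, so the potentially singular $\Omega_{RN}^{-2}$ terms drop out, and for spherically symmetric $\phi$ the bulk reduces to the indefinite cross term $-\tfrac{2}{r_{RN}}(X^v + X^u)\,\partial_u\phi\,\partial_v\phi$. A direct computation using \eqref{eq:vol.u.v} together with $r_- \le r_{RN} \le r_+$ shows that the boundary fluxes through $\{u = u_\ast\}$ and $\{v = v_\ast\}$ are comparable, up to multiplicative constants depending only on $(M, \mathbf{e})$, to the two terms on the left-hand side of \eqref{eq:main.energy.est}, while the flux on $\mathcal{H}_{total}$ is equivalent, via \eqref{T.on.the.horizons}, to the quantity $A$ of \eqref{eq:def.A.in.EE}. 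As foreshadowed in the discussion immediately preceding the theorem, I would also replace $X$ near $\mathcal{B}_{\mathcal{H}}$ by a $C^0$-but-not-$C^1$ modification, in the spirit of \cite[Proposition~9.2]{LO.interior}, producing an additional favorably-signed jump contribution along a hypersurface through $\mathcal{B}_{\mathcal{H}}$ that handles the bifurcation sphere cleanly and supplies the ``better spacetime terms'' alluded to before the theorem.

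Control of the cross-term bulk is the main analytic step. After integration against the volume form \eqref{vol.est}, the bulk carries an extra factor of $\Omega_{RN}^2$, which by \eqref{eq:Omg.near.H}--\eqref{eq:Omg.near.CH} decays like $e^{2\kappa_+ r^\ast}$ near $\mathcal{H}_{total}$ and like $e^{-2\kappa_- r^\ast}$ near $\mathcal{CH}_{total}$, and is uniformly bounded in between. Combining this with $|\partial_u \phi\,\partial_v\phi| \le \tfrac12 (\partial_u\phi)^2 + \tfrac12(\partial_v\phi)^2$ and exploiting that the polynomially growing factors $(X^v + X^u)\Omega_{RN}^2$ are uniformly bounded in $\overline{\mathcal{M}_{RN}}$, I would show that the space-time integral is bounded by $C\int_{-\infty}^{u_\ast} E_v(u')\,du' + C\int_{-\infty}^{v_\ast} E_u(v')\,dv'$, where
\begin{equation*}
E_v(u) \doteq \int_{\mathbb{R}} w_{p_1}(v) w_{p_2}(-v)(\partial_v\phi)^2(u,v)\,dv
\end{equation*}
and $E_u(v)$ is its analogue in $u$. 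The energy identity then yields a coupled Grönwall-type inequality for $E_v(u_\ast) + E_u(v_\ast)$, whose solution, together with the $CA$ contribution from $\mathcal{H}_{total}$, delivers \eqref{eq:main.energy.est}.

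Parts (2)--(4) follow by relatively soft arguments. For part~(2), writing $\phi(u,v)$ as an integral of $\partial_u \phi$ (resp.~$\partial_v\phi$) from the appropriate component of $\mathcal{H}_{total}$ and applying Cauchy--Schwarz against $w_{p_1}(\mp \cdot) w_{p_2}(\pm \cdot)$ yields a uniform pointwise bound on $\phi$, where the integrability of the reciprocal weight on $\mathbb{R}$ crucially uses $p_1, p_2 > 1$; this gives continuous extension to $\mathcal{CH}_{total}$. For parts~(3)--(4), the key new tool is the transport equation $\partial_u(r_{RN}\partial_v\phi) = -\partial_v r_{RN}\cdot \partial_u\phi$ obtained from \eqref{eq:ss.wave}; integrating from $\mathcal{H}_1$ and using \eqref{T.on.the.horizons} gives
\begin{equation*}
r_{RN}\partial_v\phi(u,v) \;=\; 2 r_+\,T\phi|_{\mathcal{H}_1}(v) - \int_{-\infty}^u \partial_v r_{RN}(u',v)\,\partial_u\phi(u',v)\,du'.
\end{equation*}
The exponential smallness of $\partial_v r_{RN} = -\tfrac14\Omega_{RN}^2$ in $u'$ near $\mathcal{H}_1$ from \eqref{eq:Omg.near.H}, paired with the $L^2$-in-$u'$ control on $\partial_u\phi$ with weight $w_{p_1}(-u') w_{p_2}(u')$ from part~(1), shows the right-hand side is continuous in $(u,v)$ on $\mathcal{CH}_2 \setminus \mathcal{CH}_1$, proving part~(3) there; the symmetric argument handles $\partial_u\phi$ on $\mathcal{CH}_1 \setminus \mathcal{CH}_2$. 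Letting $u_\ast, v_\ast \to +\infty$ in \eqref{eq:main.energy.est} and applying Fatou's lemma, after identifying $T\phi|_{\mathcal{CH}_2} = \tfrac12\partial_v\phi|_{\mathcal{CH}_2}$ and $T\phi|_{\mathcal{CH}_1} = -\tfrac12\partial_u\phi|_{\mathcal{CH}_1}$ via \eqref{T.on.the.horizons}, then gives \eqref{eq:energy.est.on.CH}.

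The main obstacle is the bulk control in the second paragraph: the cross term is not sign-definite and carries polynomially growing weights in both $u$ and $v$. Closing the Grönwall loop requires simultaneously exploiting the exponential decay of $\Omega_{RN}^2$ in $r^\ast$ at \emph{both} horizons to absorb the polynomial growth uniformly, and matching the Cauchy--Schwarz factorization so that the $u$- and $v$-weights appear symmetrically on both sides of the iteration. The assumption $p_1, p_2 > 1$ enters both through the $L^1$ integrability of the reciprocal weights in part~(2) and through the finiteness of the Cauchy horizon fluxes in part~(4), so it is not purely cosmetic.
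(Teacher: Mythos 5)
Your treatment of parts (2)--(4) tracks the paper's argument closely (continuity of $\phi$ from \eqref{eq:main.energy.est} plus Cauchy--Schwarz using $p_1>1$; continuity of $\rd_v\phi$ from the transport equation \eqref{wave.eq:rdvphi} together with $L^1$-in-$u$ control of its right-hand side; the Cauchy horizon flux bound by passing to weak limits in \eqref{eq:main.energy.est}). The gap is in your Step 1. With the multiplier $X = w_{p_1}(v)w_{p_2}(-v)\rd_v + w_{p_1}(-u)w_{p_2}(u)\rd_u$ the coefficients $\rd_u X^v$ and $\rd_v X^u$ vanish, so by Lemma~\ref{RN.K} the entire bulk is the indefinite cross term and there is nothing coercive to absorb it into. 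Your key claim that $(X^v+X^u)\Omg_{RN}^2$ is uniformly bounded is false: at $(u,v)=(-R,R)$ with $R\to+\infty$ one has $r^*=0$, hence $\Omg_{RN}^2$ bounded away from $0$, while $X^v+X^u\sim R^{p_1}\to\infty$. In this diagonal region $\{u+v=O(1),\ |u|\sim|v|\to\infty\}$ the exponential smallness of $\Omg_{RN}^2$ from \eqref{eq:Omg.near.H}--\eqref{eq:Omg.near.CH} is unavailable, and after Cauchy--Schwarz the best one gets is
$$E_v(u_*)+E_u(v_*)\ \le\ CA + C\int_{-\infty}^{u_*}E_v(u')\,du' + C\int_{-\infty}^{v_*}E_u(v')\,dv'$$
with a coefficient of order $1$ (not integrable) multiplying the energies; a Gr\"onwall iteration over the half-infinite intervals $(-\infty,u_*]$ and $(-\infty,v_*]$ with a non-integrable coefficient does not close.

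The paper's fix, which is the one idea your plan is missing, is to multiply the whole vector field by $y^N(r^*)$ with $y$ as in \eqref{y.def}: since $y$ is strictly decreasing in $r^*=u+v$, the coefficients $\rd_u Y^v$ and $\rd_v Y^u$ are now \emph{negative}, and Lemma~\ref{RN.K} produces the manifestly positive bulk term $\f{2N(\sigma-1)}{\Omg_{RN}^2}y^{-1}(1+|r^*|)^{-\sigma}\big(Y^v(\rd_v\phi)^2+Y^u(\rd_u\phi)^2\big)$. A case analysis shows $\Omg_{RN}^{-2}(1+|r^*|)^{-\sigma}\min\{Y^u,Y^v\}\gtrsim\max\{Y^u,Y^v\}$ in every region --- in particular on the diagonal, where $Y^u\sim Y^v$ --- so for $N$ large this term dominates the cross term pointwise and no Gr\"onwall argument is needed. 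Note also that this $y^N$ weight is a global modification of the multiplier (it is what the remark about a ``$C^0$ but not $C^1$'' vector field refers to, the non-smoothness being only at the horizons in the regular coordinates), not a cutoff near $\mathcal B_{\EH}$ producing a jump contribution; that part of your plan does not correspond to anything in, or needed for, the proof.
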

\begin{proof}
In this proof, constants $C$ and implicit constants in $\ls$ depend only on $p_1$, $p_2$, $M$ and $\bf e$.

\pfstep{Step~1: Energy estimates and proof of \eqref{eq:main.energy.est}} Fix $\sigma>1$ and define the function $y(r^*)$ by
\begin{equation}\label{y.def}
y(r^*) \doteq 2-\f{\sigma-1}{2}\int_{-\infty}^{r^*} (1+|(r^*)'|)^{-\sigma}\, d(r^*)'.
\end{equation}
Here, $r^*$ is a function of $u$, $v$ given by \eqref{eq:uv.RN.def}. Notice that \eqref{y.def} is well-defined since $\sigma>1$. Moreover, 
\begin{enumerate}
\item $y$ is decreasing in $r^*$, 
\item $1\leq y\leq 2$, and
\item $y\restriction_{\EH} = 2$ and $y\restriction_{\CH} = 1$.
\end{enumerate}

Define the vector field $Y$ by 
\begin{equation}\label{Y.def}
Y \doteq y^N(r^*) \left(w_{p_1}(v)w_{p_2}(-v)\rd_v + w_{p_1}(-u)w_{p_2}(u)\rd_u\right).
\end{equation}
First note that $Y^v = y^N(r^*) w_{p_1}(v)w_{p_2}(-v)\geq 0$ and $Y^u = y^N(r^*) w_{p_1}(-u)w_{p_2}(u) \geq 0$, i.e.,~$Y$ is future-directed and causal. Hence, by Lemma~\ref{positivity}, we get a non-negative energy. Indeed, by Lemma~\ref{lem:energy.identity} with $X = Y$, and using \eqref{vol.est}, \eqref{eq:vol.u.v}, we obtain that there is some $c_0>0$ such that for any $u_*,\,v_*\in \mathbb R$,
\begin{equation}\label{eq;main.EE.before.est.bulk}
\begin{split}
&\: \int_{\{u=u_*\}} y^N w_{p_1}(v)w_{p_2}(-v) r_{RN}^2 (\rd_v\phi)^2(u,v) \, dv + \int_{\{v=v_*\}} y^N w_{p_1}(-u)w_{p_2}(u) r_{RN}^2(\rd_u\phi)^2(u,v) \, du \\
= &\:  \int_{\EH_1} 2^N w_{p_1}(v)w_{p_2}(-v) r_+^2 (\rd_v\phi)^2(u,v) \, dv + \int_{\EH_2} 2^N w_{p_1}(-u)w_{p_2}(u) r_+^2 (\rd_u\phi)^2(u,v) \, du \\
&\: - \frac{1}{2} \int_{\{(u,v): u\in (-\infty,u_*),\,v\in (-\infty,v_*)\}} \mathbb T_{\mu\nu}{ }^{(Y)}\pi^{\mu\nu} \,r_{RN}^2 \Omg_{RN}^2 \, du\, dv \\
=&\: 2^{N+2} r_+^2 A - \frac{1}{2} \int_{\{(u,v): u\in (-\infty,u_*),\,v\in (-\infty,v_*)\}} \mathbb T_{\mu\nu}{ }^{(Y)}\pi^{\mu\nu} \,r_{RN}^2 \Omg_{RN}^2 \, du\, dv,
\end{split}
\end{equation}
where in the penultimate identity we used properties of $y$ above, and in the last line we used \eqref{T.on.the.horizons} and \eqref{eq:def.A.in.EE}.

We claim that there exists $N$ sufficiently large and $c>0$ (both depending on $\sigma$, $M$ and ${\bf e}$) such that
\begin{equation}\label{bulk.term.to.prove}
\mathbb T_{\mu\nu}{ }^{(Y)}\pi^{\mu\nu}r_{RN}^2 \Omg_{RN}^2\geq \f{c}{(1+|r^*|)^{\sigma}}(w_{p_1}(v)w_{p_2}(-v)(\rd_v\phi)^2+w_{p_1}(-u)w_{p_2}(u)(\rd_u\phi)^2)\geq 0.
\end{equation}
The verification of \eqref{bulk.term.to.prove} will be postponed to Step~1(a) below.

Accepting \eqref{bulk.term.to.prove} for the moment and fixing $N$ such that \eqref{bulk.term.to.prove} holds, the desired estimate \eqref{eq:main.energy.est} is an immediate consequence of \eqref{eq;main.EE.before.est.bulk}, $y\geq 1$ and $r\geq r_-$.

\pfstep{Step~1(a): Controlling the bulk term in the energy estimates} Our goal now is to prove \eqref{bulk.term.to.prove}. Since $\phi$ is spherically symmetric, by Lemma \ref{RN.K}, we have
\begin{equation}\label{KY}
\begin{split}
\mathbb T_{\mu\nu}{ }^{(Y)}\pi^{\mu\nu}=&-\f{4}{\Omg_{RN}^2}\left((\rd_uY^v)(\rd_v\phi)^2+(\rd_vY^u)(\rd_u\phi)^2\right)-\f {4}r (Y^v+Y^u)(\rd_u\phi\rd_v\phi)\\
=&\underbrace{\f{2N(\sigma-1)}{\Omg_{RN}^2}y^{-1}(1+|r^*|)^{-\sigma}\left(Y^v(\rd_v\phi)^2+Y^u(\rd_u\phi)^2\right)}_{Main\,term}-\underbrace{\f {2}r (Y^v+Y^u)(\rd_u\phi\rd_v\phi)}_{Error\,term}.
\end{split}
\end{equation}
Note that $r_{RN}^2 \Omg_{RN}^2 \mbox{(Main term in \eqref{KY})}\gtrsim \mbox{(RHS of \eqref{bulk.term.to.prove})}$ since $r_{RN}$ and $y$ are both bounded above and away from $0$. Therefore, in order to prove \eqref{bulk.term.to.prove}, it suffices to show that for $N$ sufficiently large, the ``Error term'' in \eqref{KY} can be dominated by the $\f 12 \times$``Main term''. We consider the following cases, which exhaust all possibilities (although not mutually exclusive):

\underline{Case 1: $v\geq 0$, $v\geq 2|u|$.} In this region, $r$ is bounded away from $r_+$. Therefore, $\Omg_{RN}^2\lesssim e^{-2\kappa_-(v+u)}$ by \eqref{eq:Omg.near.CH} (where here, and below, the implicit constant depends on $M$ and ${\bf e}$, but is independent of $u$ and $v$). Using $v\geq 2|u|$, this implies $\Omg_{RN}^2\lesssim \min\{e^{-\kappa_- v},e^{-2\kappa_- |u|}\}$. Since $r^*=v+u$, there exists $c>0$ such that $\f 1{\Omg^2 y (1+|r^*|)^{\sigma}}\gtrsim \max\{e^{cv},e^{c|u|}\}$. This implies 
\begin{equation}\label{key.lower.bound}
\min\{\f {Y^v}{\Omg_{RN}^2 y (1+|r^*|)^{\sigma}},\, \f {Y^u}{\Omg_{RN}^2 y (1+|r^*|)^{\sigma}} \}\gtrsim\max\{Y^v, \, Y^u\}.
\end{equation}
Therefore, by choosing $N>0$ sufficiently large and using the Cauchy--Schwarz inequality for the ``Error term'', one sees that \eqref{KY} is positive in this region.

\underline{Case 2: $v\leq 0$, $|v|\geq 2|u|$.} In this region, $r$ is bounded away from $r_-$ and hence $\Omg^2\lesssim e^{2\kappa_+(v+u)}$. This then implies $\Omg^2\lesssim \min\{e^{-\kappa_+ |v|},e^{-2\kappa_+|u|}\}$. As a consequence, \eqref{key.lower.bound} holds and the rest of the proof proceeds as in Case 1.

\underline{Case 3: $u\geq 0$, $u\geq 2|v|$.} This can be treated similarly as Case 1. 

\underline{Case 4: $u\leq 0$, $|u|\geq 2|v|$.} This can be treated similarly as Case 2. 

\underline{Case 5: $uv\leq 0$, $|u|\leq 2|v|\leq 4|u|$.} In this region, we have $Y^v\sim Y^u$ and $\f 1{\Omg^2 y (1+|r^*|)^{\sigma}}\gtrsim 1$. As a consequence, \eqref{key.lower.bound} holds and the rest of the proof proceeds as in Case 1.

\underline{Case 6: $uv\geq 0$.} Since $u$ and $v$ have the same sign, one checks that there exists $c>0$ such that $\f 1{\Omg^2 y (1+|r^*|)^{\sigma}}\gtrsim e^{c(|v|+|u|)}$. In particular, \eqref{key.lower.bound} holds and the rest of the proof proceeds as in Case 1.

We have thus verified the claim \eqref{bulk.term.to.prove}.

\pfstep{Step~2: Proof of continuous extendibility of $\phi$} Continuous extendibility is a direct consequence of \eqref{eq:main.energy.est}. Indeed, \eqref{eq:main.energy.est} and the Cauchy--Schwarz inequality imply that for $u,\,u',\,v,\,v'\in [ 1, +\infty)$,
\begin{equation}\label{eq:phi.Cauchy}
|\phi(u,v) - \phi(u',v')| \ls |v^{-\f{p_1-1}{2}} - (v')^{-\f{p_1-1}{2}} | + | u^{-\f{p_1-1}{2}} - (u')^{-\f{p_1-1}{2}}|.
\end{equation}

It therefore follows that for $u,\,v\in \mathbb R$, we can define 
$\phi \restriction_{\CH_1}(u)$ and $\phi \restriction_{\CH_2}(v)$ by
$$ \phi \restriction_{\CH_1}(u) = \lim_{v\to +\infty} \phi(u,v),\qquad \phi \restriction_{\CH_2}(v) = \lim_{u\to +\infty} \phi(u,v),$$
where the limits exist because of \eqref{eq:phi.Cauchy}. Using \eqref{eq:phi.Cauchy} again, we see that
$$ \lim_{u\to +\infty} \phi \restriction_{\CH_1}(u) = \lim_{v\to +\infty} \phi \restriction_{\CH_2}(v),$$
and thus we can define
$$\phi \restriction_{\mathcal B_{\CH}} = \lim_{u\to +\infty} \phi \restriction_{\CH_1}(u) = \lim_{v\to +\infty} \phi \restriction_{\CH_2}(v).$$

We have thus defined an extension of $\phi$ to $\mathcal M_{RN} \cup \EH_{total} \cup \CH_{total}$. Finally, using again \eqref{eq:main.energy.est} again it is easy to check that the extension is continuous.

\pfstep{Step~3: Proof of continuous extendibility of $\rd_v\phi$ and $\rd_u\phi$} We show that $\rd_v\phi$ extends continuously to $\CH_2 \setminus \CH_1$; the corresponding statement for $\rd_u\phi$ can be proven in a very similar manner.

Using \eqref{eq:main.energy.est}, we have
\begin{equation}\label{eq:EE.duphi.L1}
\sup_{v\in \mathbb R} \left( \int_{-\infty}^0 (1+u^2)^{\f{p_1}2} (\rd_u\phi)^2(u,v) \, du + \int_0^{+\infty} (1+ u^2)^{\f {p_2}2} (\rd_u\phi)^2(u,v) \, du \right) < +\infty.
\end{equation}

Since $\phi$ is spherically symmetric, the wave equation $\Box_g\phi = 0$ takes the following form (cf.~\eqref{eq:ss.wave}):
\begin{equation}\label{wave.eq:rdvphi}
\rd_u (r_{RN}\rd_v\phi) = -(\rd_v r_{RN})(\rd_u\phi) = \f 14\Omg^2_{RN} \rd_u\phi,
\end{equation}
where we have used \eqref{int.r} in the second inequality.

Notice that by \eqref{eq:Omg.near.H} and \eqref{eq:Omg.near.CH}, we have the na\"ive bound $\Omg_{RN} \ls 1$. Thus \eqref{eq:EE.duphi.L1} and the Cauchy--Schwarz inequality imply that the RHS of \eqref{wave.eq:rdvphi} is $L^1$ in $u$ (uniformly in $v$). It therefore follows from \eqref{wave.eq:rdvphi} that $\rd_v\phi$ extends continuously to the Cauchy horizon $\CH_2\setminus \CH_1$.

\pfstep{Step~4: Proof of \eqref{eq:energy.est.on.CH}} By \eqref{eq:main.energy.est}, the Banach--Alaoglu theorem, and the pointwise convergence of $\phi$ as $v\to +\infty$ (established in Step~2 above), there exists a sequence $v_i \to +\infty$ such that $(\rd_u\phi)\restriction_{\{v=v_i\}}$ has a weak $L^2(w_{p_1}(-u) w_{p_2}(u)\, du)$ limit, and it is straightforward to see that this limit must coincide with the weak (and hence actual) $\partial_u$-derivative of $\phi$ along $\CH_1$. Hence, $\rd_u \phi \restriction_{\CH_1}$ satisfies
\begin{equation}\label{eq:energy.est.on.CH.1}
\|\rd_u \phi \restriction_{\CH_1} \|_{L^2(w_{p_1}(-u) w_{p_2}(u)\, du)}^2 \leq \liminf_{i\to +\infty} \| (\rd_u\phi)\restriction_{\{v=v_i\}}\|_{L^2(w_{p_1}(-u) w_{p_2}(u)\, du)}^2 \leq CA,
\end{equation}
by \eqref{eq:main.energy.est}. An entirely analogous argument gives
\begin{equation}\label{eq:energy.est.on.CH.2}
\|\rd_v \phi \restriction_{\CH_2} \|_{L^2(w_{p_1}(-u) w_{p_2}(u)\, du)}^2 \leq CA.
\end{equation}
Combining \eqref{eq:energy.est.on.CH.1} and \eqref{eq:energy.est.on.CH.2}, and then using \eqref{T.on.the.horizons}, yield \eqref{eq:energy.est.on.CH}.
\qedhere

\end{proof}

\subsection{Definition of the transmission and reflection maps}\label{sec:scat.ss}

Given Theorem~\ref{scat.SS}, we now define the \emph{transmission map} and the \emph{reflection map}. 

We prescribe $rT\phi\restriction_{\EH_1} = \Psi$ and $rT\phi\restriction_{\EH_2} = 0$, solve the wave equation, and define $\mathcal T\Psi = r T\phi\restriction_{\CH_2}$ and $\mathcal R\Psi = r T\phi\restriction_{\CH_1}$. More precisely, 
\begin{definition}\label{def:TR}
Let $p_1,\,p_2\in (1,+\infty)$. Suppose $\Psi$ is a smooth function on $\EH_1\cup \mathcal B_{\EH}$ with $\Psi\restriction_{\mathcal B_{\EH}} = 0$, $\int_{\EH_1} \Psi(v) \, dv = 0$ and 
$$\int_{\EH_1} w_{p_1}(v) w_{p_2}(-v) \Psi^2(v) \, dv <+\infty.$$

Let $\phi: \mathcal M_{RN} \cup \EH\to \mathbb R$ be the unique smooth solution to $\Box_{g_{RN}}\phi = 0$ arising from the (smooth) characteristic initial data
$$\phi \restriction_{\EH_1}(v) = \f {2}{r_+} \int_{-\infty}^v \Psi(v') \,dv' ,\quad \phi\restriction_{\EH_2}(u) \equiv 0.$$

Define the  \emph{transmission map} $\mathcal T$ and the \emph{reflection map} $\mathcal R$ by
\begin{equation}\label{eq:def.transmission.reflection}
\mathcal T \Psi  \doteq  r_- T\phi\restriction_{\CH_2},\qquad \mathcal R\Psi \doteq r_- T\phi\restriction_{\CH_1}.
\end{equation}

Note that both maps are well-defined by Theorem~\ref{scat.SS}. 
\end{definition}

\section{Statement of the main theorem for the linear wave equation}\label{sec:thm}

In this section, we give a precise statement of Theorem~\ref{thm:main.intro}. (As already mentioned in introduction, the discussion of the main theorem on mass inflation in the nonlinear setting (i.e.,~Theorem~\ref{thm:intro.MI.2}) will be postponed to Section~\ref{sec.application}.)

We give two versions of the theorem in Theorem~\ref{thm:main} and Corollary~\ref{cor:main}. The statement in Corollary~\ref{cor:main} should be thought of as the main, more important result, though it is convenient to first prove a slightly weaker statement as in Theorem~\ref{thm:main}. (The difference between the two theorems is that in Theorem~\ref{thm:main}, we make some global assumptions of the data, including requiring the data to vanish on $\mathcal B_{\EH}$ and on a large portion of $\EH_2$. In Corollary~\ref{cor:main}, we apply an easy cut-off argument to show that we only need assumptions of $\phi$ on $\EH_1$ as $v\to +\infty$.)

The conclusion of the proofs of Theorem~\ref{thm:main} and Corollary~\ref{cor:main} can be found respectively in Sections~\ref{sec.reflection} and \ref{sec:pf.cor.main}.

\begin{theorem}\label{thm:main}
Let $\phi$ be a smooth spherically symmetric solution to \eqref{wave.eqn} on $\mathcal M_{RN} \cup \mathcal H_{total}^+$. Suppose the following holds:
\begin{enumerate}
\item There exists $v_*\in \mathbb R$ such that $\phi\restriction_{\EH_1}(v) = 0$ for every $v\leq v_*$.
\item There exists $u_*\in \mathbb R$ such that $\phi \restriction_{\EH_2}(u) = 0$ for every $u \geq u_*$.
\item $\lim_{v\to +\infty} \phi\restriction_{\EH_1}(v) = 0$.
\item $\int_1^{+\infty} (1+v^2) (T\phi \restriction_{\EH_1})^2\, dv <+\infty$.
\item There exists an \underline{even} integer $p \geq 4$ which is the \underline{smallest} even integer for which
\begin{equation}\label{linear.lower.bd}
\int_1^{+\infty} (1+v^2)^{\f p2} (T\phi\restriction_{\EH_1})^2 \, dv = +\infty.
\end{equation}
\item For $p$ as above, 
\begin{equation}\label{imp.decay}
\int_1^{+\infty} (1+v^2)^{\f p2} (T^2\phi \restriction_{\EH_1})^2\, dv <+\infty.
\end{equation}
\end{enumerate}

Then the following holds (with $p$ as above):
\begin{enumerate}
\item For any $u\in \mathbb R$,
\begin{equation}\label{eq:main.instab.conclu}
\int_{1}^{+\infty} (1+v^2)^{\f p 2}(\rd_v\phi)^2(u,v)\, dv =+\infty.
\end{equation}
\item The following weighted energy along the Cauchy horizon is infinite: 
\begin{equation}\label{eq:main.CH.conclu}
\int_{-\infty}^{-1} (1+ u^2)^{\f p 2}(\lim_{v\to+\infty}(T\phi)^2(u,v))\, du =+\infty.
\end{equation}
\end{enumerate}
\end{theorem}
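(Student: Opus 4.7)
The plan is to deduce both conclusions from an analysis of the scattering maps $\mathcal{T}, \mathcal{R}$ of Definition~\ref{def:TR} at zero $T$-frequency. Assumptions (4)--(6), together with Hardy's inequality applied using (6) and the decay $\Psi\to 0$ at both ends of $\EH_1$, pin down $\Psi \doteq r_+ T\phi\restriction_{\EH_1}$ as lying in $L^2((1+v^2)^{(p-2)/2}\,dv)$ but not in $L^2((1+v^2)^{p/2}\,dv)$. Heuristically, the non-integrable tail at weight $v^p$ is borderline and its spectral content is concentrated near $T$-frequency zero. Each conclusion will follow by showing that the relevant scattering coefficient at zero frequency does not vanish.

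For conclusion (1), I would use the identity, valid for any fixed $u_0 \in \mathbb{R}$,
\[
r(u_0, v)\partial_v \phi(u_0, v) - r_+ \partial_v\phi\restriction_{\EH_1}(v) = -\int_{-\infty}^{u_0} \partial_v r(u,v) \, \partial_u \phi(u,v)\, du,
\]
obtained by integrating $\partial_u(r \partial_v\phi)$ in $u$ and using \eqref{eq:ss.wave}. The boundary term equals $2 r_+ T\phi\restriction_{\EH_1}$ and has infinite $L^2((1+v^2)^{p/2}\,dv)$-norm by (5). The correction integral is controlled using $|\partial_v r| = \Omega_{RN}^2/4$, the exponential decay \eqref{eq:Omg.near.CH} of $\Omega_{RN}^2$ in the $\CH$ region, and a uniform-in-$v$ bound on $\partial_u \phi$ in $L^2(w_{p-2}(-u)w_2(u)\,du)$ obtained from Theorem~\ref{scat.SS} with $p_1 = p-2$ and $p_2 = 2$ (finite data norms being ensured by Hardy applied to (6) and the exponential decay of $\phi\restriction_{\EH_2}$ near the bifurcation sphere coming from smoothness). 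A Cauchy--Schwarz estimate, exploiting exponential decay in $v$ of $\int \Omega_{RN}^4 \, w_{p-2}^{-1}(-u) w_2^{-1}(u)\, du$ for large $v$, then shows the correction has finite $(1+v^2)^{p/2}$-weighted $L^2$-norm. Together with $r\geq r_-$, the triangle inequality yields (1).

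For conclusion (2), zero-$T$-frequency analysis of $\mathcal{R}$ becomes essential. Decompose $\Psi = \Psi_{\mathrm{reg}} + \Psi_{\mathrm{tail}}$, where $\Psi_{\mathrm{tail}}$ is a smooth explicit zero-mean profile supported at large $v$ that captures the $(1+v^2)^{p/2}$-non-integrable part of $\Psi$, while $\Psi_{\mathrm{reg}}\in L^2((1+v^2)^{p/2}\,dv)$. Theorem~\ref{scat.SS} applied to the solution with data $\Psi_{\mathrm{reg}}$ yields $\mathcal{R}\Psi_{\mathrm{reg}} \in L^2((1+u^2)^{p/2}\,du)$ on $\CH_1$. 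For $\Psi_{\mathrm{tail}}$ --- whose spectral content is concentrated near $T$-frequency zero --- the action of $\mathcal{R}$ is governed by the zero-frequency behavior of the scattering map, obtained by reducing \eqref{eq:ss.wave} at $T$-frequency zero to a 1+1 ODE on $\mathcal{Q}_{RN}$ whose two linearly independent time-independent solutions have explicit asymptotics at the horizons $r \to r_\pm$. The geometric point is that the far future of $\EH_1$ ($v\to +\infty$) and the far past of $\CH_1$ ($u\to -\infty$) both approach the same timelike infinity $i^+$, so the low-frequency tail of $\Psi$ near $i^+$ seeds a tail of $\mathcal{R}\Psi$ near $i^+$. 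Provided the zero-$T$-frequency reflection coefficient does not vanish, $\mathcal{R}\Psi_{\mathrm{tail}}$ inherits a tail with infinite $L^2((1+u^2)^{p/2}\,du)$-norm, and the triangle inequality yields (2).

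The main technical obstacle is establishing the non-vanishing of the zero-$T$-frequency reflection coefficient in the preceding paragraph. The $T$-conservation law (cf.~Remark~\ref{rmk:grav.pert}) provides a Wronskian-type identity of schematic form $|\mathcal{T}|^2 + |\mathcal{R}|^2 = 1$ at zero frequency, which in particular guarantees $\mathcal{T}\neq 0$ --- sufficient for proving the analogue of (1) along $\CH_2$ via the transmission map --- but does not on its own rule out $\mathcal{R} = 0$. Verifying $\mathcal{R}\neq 0$ requires a direct computation of the two linearly independent time-independent solutions of the reduced spherical wave equation on $(\mathcal{Q}_{RN}, g_{\mathcal{Q}_{RN}})$ and an explicit matching of their asymptotics at $r = r_\pm$; this nonvanishing is the central linear input driving conclusion (2) of Theorem~\ref{thm:main} and the mass inflation application in Theorem~\ref{thm:intro.MI.2}.
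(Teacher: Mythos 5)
Your argument for conclusion (1) has a genuine gap, and it is not repairable along the lines you propose. The key quantitative claim --- that $\int_{-\infty}^{u_0} \Omega_{RN}^4\, w_{p-2}^{-1}(-u)\, w_{2}^{-1}(u)\, du$ decays exponentially in $v$ --- is false. The integrand is a function of $r^* = u+v$ through $\Omega_{RN}^4$, and the region $u+v = O(1)$ (i.e.\ $u \approx -v$), where $\Omega_{RN}^2 \sim 1$, always lies inside the domain of integration $(-\infty, u_0]$ once $v$ is large; there the weight is only $w_{p-2}^{-1}(-u)\,w_2^{-1}(u) \sim v^{-(p-2)}$, so the integral is bounded \emph{below} by $c\, v^{-(p-2)}$. (You cannot take a stronger $u$-weight from Theorem~\ref{scat.SS}, since assumption (5) caps the admissible data weight on $\EH_1$ at $p_1 = p-2$.) Cauchy--Schwarz then gives only $|{\rm correction}| \lesssim v^{-(p-2)/2}$, and $\int_1^\infty v^{p}\cdot v^{-(p-2)}\,dv = \int_1^\infty v^2\, dv = +\infty$, so the correction is \emph{not} controlled in the $(1+v^2)^{p/2}$-weighted norm and the triangle inequality yields nothing. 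The failure is structural, not just a lost exponent: your identity treats the interior evolution as a perturbation of the identity map from $\EH_1$ to $\{u=u_0\}$, but the ``correction'' integral is precisely the scattering by the potential, which is concentrated where $r^* = O(1)$ and is of the same order as the main term at zero frequency (indeed $\mathfrak T(0) = \tfrac12(r_-/r_+ + r_+/r_-) > 1 $, so the transmitted wave is genuinely different from the incident one). This is why the paper does not compare $\{u=u_0\}$ with $\EH_1$ directly; it first proves blow-up on $\CH_2$ via the Fourier-multiplier representation of $\mathcal T$ together with $\mathfrak T(0)\neq 0$ (Propositions~\ref{prop:T0neq0}--\ref{prop:local.blowup}), and only then propagates \emph{backwards} from $\CH_2$ to finite $u$ by a finite-time energy estimate (Proposition~\ref{prop:final.blowup}). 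Note also that a preliminary reduction to $\phi\restriction_{\EH_2}=0$ is needed, which you do not address.

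For conclusion (2) your strategy is morally the paper's (the non-vanishing of the zero-frequency reflection coefficient, imported from the explicit computation $\mathfrak R(0) = \tfrac12(r_-/r_+ - r_+/r_-) \neq 0$ of \cite{KSR}, is indeed the decisive input, and you are right that the $T$-conservation law alone cannot exclude $\mathfrak R=0$ --- though the identity has a minus sign, $|\mathfrak T|^2 - |\mathfrak R|^2 = 1$, not a plus sign). However, your decomposition $\Psi = \Psi_{\mathrm{reg}} + \Psi_{\mathrm{tail}}$ with $\Psi_{\mathrm{tail}}$ an ``explicit profile capturing the non-integrable part'' is not available under the hypotheses: assumptions (4)--(6) are purely $L^2$-averaged and furnish no pointwise asymptotic expansion of $\Psi$, so there is no reason such a profile exists. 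The paper avoids this by working entirely in frequency space: Plancherel converts (5)--(6) into $\int |\partial_\omega^{p/2}\widehat\Psi|^2 = +\infty$ and $\int |\partial_\omega^{p/2}(\omega\widehat\Psi)|^2 < +\infty$, a commutator estimate (using minimality of $p$) localizes the divergence to $\int_{-1}^{1}|\partial_\omega^{p/2}\widehat\Psi|^2 = +\infty$, and analyticity plus $\mathfrak R(0)\neq 0$ transfers the divergence through the multiplier; the localization of the resulting blow-up on $\CH_1$ to $u\to-\infty$ (rather than $u\to+\infty$) is then a separate physical-space step via Theorem~\ref{scat.SS}, which your sketch also omits.
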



\begin{corollary}\label{cor:main}
Let $u_s \in (-\infty, -1)$. Suppose $\phi$ is a spherically symmetric solution to \eqref{wave.eqn} on $(\mathcal M_{RN} \cup \mathcal H^+_1) \cap \{(u,v): u\in (-\infty,u_s],\, v\in [1,+\infty)\}$ which is smooth up to $\EH_1$. Suppose that \underline{only} assumptions (3), (4), (5) and (6) of Theorem~\ref{thm:main} hold, then the following slight modifications of conclusions (1) and (2) of Theorem~\ref{thm:main} still hold:
\begin{primenumerate}
\item For any $u\in (-\infty, u_s)$,
\begin{equation}\label{eq:main.instab.conclu.prime}
\int_{1}^{+\infty} (1+v^2)^{\f p 2}(\rd_v\phi)^2(u,v)\, dv =+\infty.
\end{equation}
\item The following weighted energy along the Cauchy horizon is infinite: 
\begin{equation}\label{eq:main.CH.conclu.prime}
\int_{-\infty}^{u_s} (1+ u^2)^{\f p 2}(\lim_{v\to+\infty}(T\phi)^2(u,v))\, du =+\infty.
\end{equation}
\end{primenumerate}
\end{corollary}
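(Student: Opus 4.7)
The plan is to reduce the corollary to Theorem~\ref{thm:main} via an easy cut-off. Concretely, pick a smooth $\chi\colon\mathbb{R}\to[0,1]$ with $\chi(v)=0$ for $v\le 1$ and $\chi(v)=1$ for $v\ge 2$, set $\Psi_1(v)\doteq\chi(v)\phi\restriction_{\EH_1}(v)$ for $v\ge 1$ and $\Psi_1(v)\doteq 0$ for $v\le 1$, and let $\tilde\phi$ be the unique smooth spherically symmetric solution to \eqref{wave.eqn} on $\mathcal{M}_{RN}\cup\EH_{total}$ with data $\tilde\phi\restriction_{\EH_1}=\Psi_1$ and $\tilde\phi\restriction_{\EH_2}\equiv 0$. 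Since $T^k\tilde\phi\restriction_{\EH_1}(v)=T^k\phi\restriction_{\EH_1}(v)$ for $v\ge 2$ and $k\in\{1,2\}$ while the transition window $v\in[1,2]$ contributes only a bounded smooth piece, $\tilde\phi$ satisfies all six hypotheses of Theorem~\ref{thm:main} with $v_*=1$, any $u_*$, and the same integer $p$; applying Theorem~\ref{thm:main} then gives \eqref{eq:main.instab.conclu} and \eqref{eq:main.CH.conclu} for $\tilde\phi$.

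The remaining task is to show that the difference $w\doteq\tilde\phi-\phi$ contributes only a finite weighted energy on the hypersurfaces of interest inside $R\doteq(\mathcal{M}_{RN}\cup\EH_1)\cap\{u\le u_s,\,v\ge 1\}$. On $R$ one has $\Box_{g_{RN}}w=0$ with data $w\restriction_{\EH_1\cap\{v\ge 1\}}=(\chi-1)\phi\restriction_{\EH_1}$ smooth and compactly supported in $v\in[1,2]$, together with a priori uncontrolled smooth data $w\restriction_{\{v=1,\,u\le u_s\}}$. The crucial observation is that both $\phi$ (by hypothesis) and $\tilde\phi$ (by the smoothness of $\Psi_1$ and local well-posedness for characteristic data) are smooth up to $\EH_1$ in the $(U_{\EH},v)$ coordinates; since $\partial_u=e^{2\kappa_+ u}\partial_{U_{\EH}}$ by \eqref{U.EH.def}, this forces $\partial_u w(u,1)$ to decay like $e^{2\kappa_+ u}$ as $u\to -\infty$, so that $\int_{-\infty}^{u_s}(1+u^2)^{q/2}(\partial_u w)^2(u,1)\,du<+\infty$ for every $q\ge 0$.

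With this in hand I apply the multiplier $Y$ from the proof of Theorem~\ref{scat.SS} (with $p_1=p$, $p_2=2$) to $w$ on $\{u\le u_0,\,v\ge 1\}$ for arbitrary $u_0\in(-\infty,u_s]$; the bulk-positivity argument of that proof is purely pointwise and hence survives the restriction, so the resulting energy identity bounds the $(1+v^2)^{p/2}$-weighted $L^2$-norm of $\partial_v w$ on $\{u=u_0,\,v\ge 1\}$ and, uniformly in $u_0$, the analogous $(1+u^2)^{p/2}$-weighted norm of $\partial_u w$ on $\CH_1\cap\{u\le u_0\}$ in terms of the two finite ingoing boundary terms on $\EH_1\cap\{v\ge 1\}$ and $\{v=1,\,u\le u_0\}$. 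Writing $\partial_v\phi=\partial_v\tilde\phi-\partial_v w$ on $\{u=u_0\}$ and similarly $T\phi=T\tilde\phi-Tw=-\frac{1}{2}(\partial_u\tilde\phi-\partial_u w)$ on $\CH_1$, conclusions \eqref{eq:main.instab.conclu.prime} and \eqref{eq:main.CH.conclu.prime} follow by subtraction: an infinite weighted norm for $\tilde\phi$ minus a finite one for $w$ remains infinite. For \eqref{eq:main.CH.conclu.prime} one additionally invokes the continuous extension of $T\tilde\phi\restriction_{\CH_1}$ to the compact interval $[u_s,-1]$ (via item~(3) of Theorem~\ref{scat.SS}) to pass from Theorem~\ref{thm:main}'s range $(-\infty,-1]$ to the corollary's range $(-\infty,u_s]$. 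I expect the most delicate step to be verifying this clean localization of the multiplier identity of Theorem~\ref{scat.SS} to the truncated sub-region, which works precisely because the exponential decay at $\EH_1$ renders the boundary contribution on $\{v=1\}$ harmless.
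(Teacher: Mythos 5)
Your proposal is correct, but the way you transfer the conclusion back to $\phi$ is genuinely different from the paper's argument. Both proofs begin the same way, by manufacturing a global solution on $\mathcal M_{RN}\cup\EH_{total}$ that satisfies all six hypotheses of Theorem~\ref{thm:main}; the difference is where the modification of the data is localized. The paper extends $\phi$ \emph{outside} the region of interest: it prescribes $\widetilde{\phi}\restriction_{\EH_1}$ to agree with $\phi\restriction_{\EH_1}$ for all $v\geq 1$ (modifying only $v<1$ so as to vanish for $v\leq 0$), prescribes $\widetilde{\phi}=\phi$ on $\{v=1,\,u\leq u_s\}$, and solves characteristic initial value problems in the complementary regions; by uniqueness/domain of dependence, $\widetilde{\phi}\equiv\phi$ exactly on $\{u\leq u_s,\,v\geq 1\}$, so conclusions \eqref{eq:main.instab.conclu} and \eqref{eq:main.CH.conclu} restrict to $\phi$ with no error term whatsoever (only the trivial observation that the integral over the compact interval $[u_s,-1]$ is finite is needed for (2$'$)). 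Your multiplicative cutoff places the transition window $v\in[1,2]$ \emph{inside} the region where the conclusions are asserted, so $\tilde\phi\neq\phi$ there and you must estimate $w=\tilde\phi-\phi$; this costs you an additional localized weighted energy estimate on $\{u\leq u_0,\,v\geq 1\}$, which you correctly close by observing that the compactly supported data on $\EH_1$ and the exponential decay $|\rd_u w|(u,1)\ls e^{2\kappa_+u}$ (from smoothness of both solutions in the $(U_{\EH},v)$ chart together with \eqref{U.EH.def}) render both ingoing fluxes finite, while the pointwise bulk positivity \eqref{bulk.term.to.prove} survives the restriction of the domain. What each approach buys: the paper's is shorter and error-free but leans on well-posedness of the characteristic problem on the extended domains (which it justifies via spherical symmetry, with a footnote that a cutoff-plus-energy-estimate argument like yours is the right generalization beyond spherical symmetry); yours avoids constructing any extension below $v=1$ at the price of a perturbative estimate that is essentially a special homogeneous case of the machinery the paper develops anyway in Theorem~\ref{scat.SS.inho.u}. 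Both are sound.
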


\begin{remark}[Alternative assumption for the improved decay of higher derivative]
In both Theorem~\ref{thm:main} and Corollary~\ref{cor:main}, the assumption \eqref{imp.decay} can be replaced by $\exists k\geq 2$ such that 
$$\int_{1}^{+\infty} (1+v^2)^{\f p2} (T^k\phi \restriction_{\EH_1})^2\, dv <+\infty.$$ 
The proof of this more general statement is essentially the same; we omit the details.
\end{remark}

\section{The Kehle--Shlapentokh-Rothman scattering theory}\label{sec:phase.space}

In this section, we collect some facts about the transmission and reflection maps (recall Definition~\ref{def:TR}) proven in \cite{KSR}. (Some of the results in \cite{KSR} were stated in slightly different function spaces from those we consider, but their proofs can be easily adapted to our setting.) In particular, we will recall that the transmission and reflection maps admit simple phase space representations as operators defined by Fourier multipliers; see Proposition~\ref{prop:T.R.phase.rep.2}. 

\subsection{The radial ODE}

\begin{definition}
Define $V$ by 
\begin{equation}\label{eq:def.V}
V(r^*) =  \f{(1-\f{2M}{r_{RN}} + \f{{\bf e}^2}{r_{RN}^2}) [ r_{RN}(r_+ + r_-) - 2r_+ r_-]}{r_{RN}^3},
\end{equation}
where we think of $r_{RN}$ as a function of $r^*$ using \eqref{eq:r*.RN.def}.

For $\om \in \mathbb R$, define $\mathfrak u_1$ and $\mathfrak u_2$ as the unique solutions to the Volterra integral equations
\begin{equation}\label{eq:def.u1}
\mathfrak u_1(\om,r^*) = e^{i\om r^*} + \int_{-\infty}^{r^*} \f{\sin(\om(r^*-y))}{\om} V(y) \mathfrak u_1(\om,y)\, \ud y,
\end{equation}
\begin{equation}\label{eq:def.u2}
\mathfrak u_2(\om,r^*) = e^{-i\om r^*} + \int_{-\infty}^{r^*} \f{\sin(\om(r^*-y))}{\om} V(y) \mathfrak u_2(\om,y)\, \ud y;
\end{equation}
and define $\mathfrak v_1$ and $\mathfrak v_2$ as the unique solutions to the Volterra integral equations
\begin{equation}\label{eq:def.v1}
\mathfrak v_1(\om,r^*) = e^{i\om r^*} + \int_{r^*}^{+\infty} \f{\sin(\om(r^*-y))}{\om} V(y) \mathfrak v_1(\om,y)\, \ud y,
\end{equation}
\begin{equation}\label{eq:def.v2}
\mathfrak v_2(\om,r^*) = e^{-i\om r^*} + \int_{r^*}^{+\infty} \f{\sin(\om(r^*-y))}{\om} V(y) \mathfrak v_2(\om,y)\, \ud y,
\end{equation}
Here, when $\om = 0$, we define $\f{\sin(\om(r^*-y))}{\om}\restriction_{\om = 0} \doteq r^* - y$.
\end{definition}

\begin{remark}
The Volterra integral equations relate to the wave equation as follows.

The functions $\mathfrak u_1$, $\mathfrak u_2$, $\mathfrak v_1$ and $\mathfrak v_2$ are defined to satisfy the ODE
\begin{equation}\label{eq:renorm.ode}
\mathfrak u'' + (\om^2 -V) \mathfrak u =0.
\end{equation}

For sufficiently regular spherically symemtric function $\phi$, define 
$$\hat{\phi}(\om,r) = \f 1{\sqrt{2\pi}} \int_{\mathbb R} e^{-i\om t} \phi(t,r)\, dt.$$
If $\phi$ solve $\Box_g \phi = 0$, then $u = r\hat{\phi}$ satisfies \eqref{eq:renorm.ode}.

\end{remark}

\begin{definition}
Define the transmission coefficient $\mathfrak T: \mathbb R\setminus\{0\}\to \mathbb C$ and reflection coefficient $\mathfrak R:\mathbb R \setminus\{0\}\to \mathbb C$ to be the unique coefficients such that
$$\mathfrak u_1 = \mathfrak T \mathfrak v_1 + \mathfrak R \mathfrak v_2.$$

Note that they are well-defined since when $\om\neq 0$, $\mathfrak v_1$ and $\mathfrak v_2$ are linearly independent. In fact, $\mathfrak T$ and $\mathfrak R$ take the form
$$\mathfrak T(\om) = \f{\mathfrak W(\mathfrak u_1,\mathfrak v_2)}{-2i\om},\quad \mathfrak R(\om) = \f{\mathfrak W(\mathfrak u_1,\mathfrak v_1)}{2i\om},$$
where $\mathfrak W$ is the Wronskian defined by $\mathfrak W(f,g) \doteq  fg'-f'g$.
\end{definition}

We will use some properties of $\mathfrak T$ and $\mathfrak R$ proven in \cite{KSR}.
\begin{proposition}\label{prop:basic.T.R}
\begin{enumerate}
\item (Proposition~2.5 in \cite{KSR})  $\mathfrak T$ and $\mathfrak R$ extend to analytic functions on $\mathbb C\setminus \mathbb P$, where $\mathbb P = \{im\kappa_+ : m \in \mathbb N\} \cup \{ ik\kappa_-: k\in \mathbb Z\setminus \{0\}\}$ ($\kappa_{\pm}$ as in \eqref{eq:def.kappa}), are the locations of possible poles.

In particular, $\mathfrak T$ and $\mathfrak R$ are well-defined and analytic on $\mathbb R$.
\item (Theorem~2 in \cite{KSR}) $\mathfrak T$ and $\mathfrak R$ are uniformly bounded on the real line, i.e.,
$$\sup_{\om \in \mathbb R} (|\mathfrak T(\om)| + |\mathfrak R(\om)|) \ls 1.$$
\item (Proposition~2.4 in \cite{KSR}) For every $\om \in \mathbb R$,
$$|\mathfrak T(\om)|^2 - |\mathfrak R(\om)|^2 = 1.$$
\item (Proposition~2.5 in \cite{KSR}) 
$$\mathfrak T(0) = \f 12 (\f{r_-}{r_+} + \f{r_+}{r_-}),\quad \mathfrak R(0) = \f 12 (\f{r_-}{r_+} - \f{r_+}{r_-}).$$
\end{enumerate}
\end{proposition}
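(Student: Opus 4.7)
The plan is to follow the strategy in \cite{KSR}, which relies crucially on the exponential decay of the potential $V$ at both horizons: combining \eqref{eq:def.V} with \eqref{eq:Omg.near.H} and \eqref{eq:Omg.near.CH} gives $V = O(e^{2\kappa_+ r^*})$ as $r^* \to -\infty$ and $V = O(e^{-2\kappa_- r^*})$ as $r^* \to +\infty$. For parts~(1) and~(2), I would iterate the Volterra equations \eqref{eq:def.u1}--\eqref{eq:def.v2} as a Neumann series; each term is entire in $\om$, and the series converges locally uniformly in the strip $\abs{\Im \om} < \min(\kappa_+, \kappa_-)$ by the exponential decay of $V$. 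To meromorphically continue past the first thresholds $\Im \om = \pm \kappa_+, \pm \kappa_-$, I would pass to rescaled unknowns (e.g., $e^{-i \om r^*} \mathfrak u_1$) which absorb the problematic oscillation and allow a fresh Volterra iteration in a wider strip; iterating this procedure exposes potential simple poles at heights $im\kappa_+$ or $ik\kappa_-$ coming from boundary terms in the rescaled iteration, which yields the claimed pole set $\mathbb P$. Part~(2) then follows by combining uniform-in-$\om$ Volterra bounds on the real axis with the Wronskian representations $\mathfrak T = \mathfrak W(\mathfrak u_1, \mathfrak v_2)/(-2i\om)$ and $\mathfrak R = \mathfrak W(\mathfrak u_1, \mathfrak v_1)/(2i\om)$, the apparent singularity at $\om = 0$ being cancelled by the vanishing of the numerator (verified in part~(4) below).

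Part~(3) is a standard Wronskian conservation identity. Reality of $V$ forces $\mathfrak u_2 = \overline{\mathfrak u_1}$ and $\mathfrak v_2 = \overline{\mathfrak v_1}$ for real $\om$; since \eqref{eq:renorm.ode} has no first-order term, $\mathfrak W(\mathfrak u_1, \overline{\mathfrak u_1})$ is constant in $r^*$. Evaluating at $r^* \to -\infty$ using $\mathfrak u_1 \sim e^{i\om r^*}$ gives $-2i\om$; evaluating at $r^* \to +\infty$ using the scattering expansion $\mathfrak u_1 \sim \mathfrak T e^{i\om r^*} + \mathfrak R e^{-i\om r^*}$ (a consequence of $\mathfrak u_1 = \mathfrak T \mathfrak v_1 + \mathfrak R \mathfrak v_2$ together with $\mathfrak v_j \sim e^{\mp i\om r^*}$) gives $-2i\om(\abs{\mathfrak T}^2 - \abs{\mathfrak R}^2)$. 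Equating the two yields (3).

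For part~(4), I would solve the $\om = 0$ problem explicitly: the equation $\Box_{g_{RN}}\phi = 0$ admits the static solutions $\phi \equiv 1$ and $\phi_{II}$ with $d\phi_{II}/dr^* = 1/r_{RN}^2$, so that $u_I = r_{RN}$ and $u_{II} = r_{RN} \phi_{II}$ are linearly independent solutions of $u'' = V u$. Using \eqref{eq:def.r^*} one computes $u_{II}(r^*) \sim r^*/r_+$ as $r^* \to -\infty$ and $u_{II}(r^*) \sim r^*/r_-$ as $r^* \to +\infty$ (up to an additive constant absorbed by the integration constant of $\phi_{II}$). Matching the incoming data forces $\mathfrak u_1(0, \cdot) = r_{RN}/r_+$ and $\mathfrak v_1(0, \cdot) = \mathfrak v_2(0, \cdot) = r_{RN}/r_-$, which in particular confirms that the Wronskian numerators defining $\mathfrak T, \mathfrak R$ vanish at $\om = 0$. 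Differentiating the Volterra equation for $\mathfrak u_1$ in $\om$ at $\om = 0$ yields an inhomogeneous ODE for $g \doteq \partial_\om \mathfrak u_1 \vert_{\om = 0}$ with incoming data $g \sim i r^*$, solved in the $\{u_I, u_{II}\}$ basis as $g = i r_+ u_{II}$. Writing $\mathfrak u_1(\om, r^*) = \mathfrak u_1(0, r^*) + \om g(r^*) + O(\om^2)$ and comparing the $r^* \to +\infty$ asymptotics with the $O(\om)$-expansion of $\mathfrak T(\om) e^{i\om r^*} + \mathfrak R(\om) e^{-i\om r^*}$: the constant-in-$r^*$ term yields $\mathfrak T(0) + \mathfrak R(0) = r_-/r_+$ and the coefficient of $i\om r^*$ yields $\mathfrak T(0) - \mathfrak R(0) = r_+/r_-$, from which the claimed values follow (the product being $1$, consistent with (3), is a useful sanity check). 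The main obstacle throughout is part~(1), where locating the pole lattice exactly on $\mathbb P$ and ruling out additional singularities require a careful iteration of the rescaling argument; this is the longest part of the analysis in \cite{KSR}. Parts~(2)--(4) then follow relatively directly from the setup once the analytic framework is in place.
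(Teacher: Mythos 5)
The paper does not actually prove this proposition: all four parts are imported verbatim from Kehle--Shlapentokh-Rothman \cite{KSR} (Propositions~2.4, 2.5 and Theorem~2 there), and no argument is given in the text. What you have written is therefore not an alternative to the paper's proof but a reconstruction of the cited one, and as a reconstruction it is sound and follows the standard Jost-function playbook that \cite{KSR} uses: Volterra iteration and exponential decay of $V$ at both horizons for analyticity in a strip, the constancy of $\mathfrak W(\mathfrak u_1, \overline{\mathfrak u_1})$ evaluated at $r^* \to \mp\infty$ for the pseudo-unitarity relation (3), and an explicit zero-frequency computation for (4) using the two static solutions $\phi \equiv 1$ and $\phi_{II}$ with $\phi_{II}' = r_{RN}^{-2}$, i.e.\ $u_I = r_{RN}$ and $u_{II} = r_{RN}\phi_{II}$, together with a first-order Taylor expansion of $\mathfrak u_1$ in $\om$ to resolve the $0/0$ in the Wronskian quotients. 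Your asymptotics $\mathfrak u_1(0,\cdot) = r_{RN}/r_+$, $\mathfrak v_1(0,\cdot) = \mathfrak v_2(0,\cdot) = r_{RN}/r_-$ and the resulting system $\mathfrak T(0)+\mathfrak R(0) = r_-/r_+$, $\mathfrak T(0)-\mathfrak R(0) = r_+/r_-$ are correct and reproduce part (4); the consistency check against (3) is apt. Two caveats. First, your claim that $u_I = r_{RN}$ solves $u'' = Vu$ is true for the intended potential $V = (\partial_{r^*}^2 r_{RN})/r_{RN}$, which is what the reduction $u = r\hat\phi$ in the remark after \eqref{eq:def.v2} produces; the displayed formula \eqref{eq:def.V} appears to carry a typo (the denominator should be $r_{RN}^4$ rather than $r_{RN}^3$ once the prefactor is written as $(1 - 2M/r_{RN} + \bfe^2/r_{RN}^2)$), so your computation is consistent with the equation actually being solved, not with the literal formula. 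Second, the only part you do not substantiate is the full meromorphic continuation in (1) with the precise pole lattice $\mathbb P$ (in particular the asymmetry between the $\kappa_+$ and $\kappa_-$ families); you rightly flag this as the longest piece of \cite{KSR}, and it is worth noting that the present paper only ever uses analyticity and boundedness of $\mathfrak T, \mathfrak R$ on (a neighborhood of) the real axis together with (3) and (4), all of which your sketch does cover.
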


\begin{remark}\label{rmk:use.conservation.for.lower.bd}
Regarding points~(3) and (4) in Proposition~\ref{prop:basic.T.R}, the only thing we need below is that $\mathfrak T(0) \neq 0$ and $\mathfrak R(0)\neq 0$. Part (3) in Proposition~\ref{prop:basic.T.R} can be thought of as a (microlocalized) version of the $T$-conservation law. In particular, this means that $\mathfrak T(0) \neq 0$ follows from the $T$-conservation law as an easy consequence. On the other hand, one does not need to appeal to the conservation law, and can compute $\mathfrak T(0)$ directly, as in done is part (4) of the proposition.
\end{remark}

\subsection{The scattering map and the radial ODE}

\begin{definition}\label{def.fourier.horizon}
Let $\Psi:\EH_1\to \mathbb R$ be a spherically symmetric function (of $v$). Define the Fourier transform $\widehat{\Psi}$ (whenever well-defined) by
$$\widehat{\Psi}(\om)  \doteq  \f{1}{\sqrt{\pi}}\int_{-\infty}^\infty e^{2i \om v} \Psi(v) \, dv.$$
Similarly, for $\mathcal T\Psi:\CH_2\to \mathbb R$ and $\mathcal R\Psi:\CH_1\to \mathbb R$, we define the Fourier transform (whenever well-defined) by
$$\widehat{\mathcal T\Psi}(\om)  \doteq  \f{1}{\sqrt{\pi}}\int_{-\infty}^\infty e^{2i \om v} \widehat{\mathcal T\Psi}(v) \, dv,\quad \widehat{\mathcal R\Psi}(\om)  \doteq  \f{1}{\sqrt{\pi}}\int_{-\infty}^\infty e^{-2i \om u} \widehat{\mathcal R\Psi}(u) \, dv.$$
\end{definition}

\begin{proposition}(Theorem~3 in \cite{KSR})\label{prop:T.R.phase.rep.2}
The transmission and reflection maps defined in Definition~\ref{def:TR} are given by the following Fourier multiplier operators:
$$\widehat{\mathcal T \Psi}(\om) = \mathfrak T(\om) \widehat{\Psi}(\om),\quad \widehat{\mathcal R \Psi}(\om) = \mathfrak R(\om) \widehat{\Psi}(\om). $$
\end{proposition}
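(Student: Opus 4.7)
The plan is to Fourier transform the spherically symmetric wave equation in the Killing variable $t$, reducing it to the radial ODE \eqref{eq:renorm.ode}, and then apply the scattering relation $\mathfrak{u}_1 = \mathfrak{T}\mathfrak{v}_1 + \mathfrak{R}\mathfrak{v}_2$ to read off the boundary behavior at the Cauchy horizon. By the boundedness of $\mathcal T, \mathcal R$ on the relevant weighted $L^2$ space (Theorem~\ref{scat.SS}) together with the uniform boundedness of $\mathfrak T, \mathfrak R$ on $\mathbb R$ (Proposition~\ref{prop:basic.T.R}(2)), a density argument first reduces matters to $\Psi \in C_c^\infty(\EH_1)$ with $\int \Psi\, dv = 0$, so that the primitive $\phi\restriction_{\EH_1}$ is itself smooth and compactly supported.

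For such $\Psi$, I would first observe that $\phi$ is tempered in $t$ on every compact $r^*$-set (for instance polynomially bounded, via Theorem~\ref{scat.SS} applied to $T$-derivatives together with Sobolev embedding in $t$), so the $t$-Fourier transform $\hat\phi(\omega, r^*)$ makes sense as a tempered distribution in $\omega$ with smooth dependence on $r^*$. Using \eqref{eq:ss.wave}, a direct computation shows that, after absorbing the first-order terms in \eqref{eq:ss.wave} by an explicit rescaling, the function $\mathfrak{u}(\omega, r^*) := r_{RN}(r^*)\hat\phi(\omega, r^*)$ solves \eqref{eq:renorm.ode} with potential $V$ as in \eqref{eq:def.V}.

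The next step is to identify $\mathfrak{u}$ via its asymptotic behavior as $r^* \to -\infty$. The coordinate identities $t + r^* = 2v$ on $\EH_1$ and $t - r^* = -2u$ on $\EH_2$ mean that after $t$-Fourier transform, the $e^{i\omega r^*}$ asymptotic component of $\mathfrak u$ as $r^* \to -\infty$ encodes the boundary value of $\phi$ on $\EH_1$, while the $e^{-i\omega r^*}$ component encodes the boundary value on $\EH_2$. Since $\phi\restriction_{\EH_2} \equiv 0$ by hypothesis, the latter component must vanish, and by the Volterra uniqueness behind \eqref{eq:def.u1} we get $\mathfrak u(\omega, \cdot) = C(\omega)\, \mathfrak u_1(\omega, \cdot)$ for some coefficient $C(\omega)$ explicitly proportional to $\widehat\Psi(\omega)/\omega$ --- the factor $1/\omega$ arising because $T = \frac12\partial_v$ on $\EH_1$ by \eqref{T.on.the.horizons}, so that passing from $\Psi = r_+ T\phi\restriction_{\EH_1}$ to $\phi\restriction_{\EH_1}$ itself corresponds to dividing by $i\omega$ in the $2v$-Fourier variable.

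Finally, substituting the scattering identity $\mathfrak u_1 = \mathfrak T \mathfrak v_1 + \mathfrak R \mathfrak v_2$ and using the asymptotics $\mathfrak v_1 \sim e^{i\omega r^*}$, $\mathfrak v_2 \sim e^{-i\omega r^*}$ as $r^* \to +\infty$, together with the analogous boundary identifications at the Cauchy horizon (using $T = \frac12\partial_v$ on $\CH_2$ and $T = -\frac12\partial_u$ on $\CH_1$ by \eqref{T.on.the.horizons}), one recovers $\widehat{\mathcal T\Psi}(\omega) = \mathfrak T(\omega)\widehat\Psi(\omega)$ and $\widehat{\mathcal R\Psi}(\omega) = \mathfrak R(\omega)\widehat\Psi(\omega)$; the $1/\omega$ factor in $C(\omega)$ is precisely cancelled by the $i\omega$ factor needed to recover $T\phi$ from $\phi$ on the Cauchy horizon, with the sign conventions in Definition~\ref{def.fourier.horizon} arranged to make this cancellation clean. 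The principal technical obstacle is rigorously linking the pointwise asymptotics of $\mathfrak u(\omega, r^*)$ as $r^* \to \pm\infty$ to the limiting values of $\phi$ on the bifurcate horizons, uniformly in $\omega$, so that the inverse Fourier transform may be taken in a controlled way; this matching, together with the extension from the dense class back to all admissible $\Psi$, is the heart of the proof of Theorem~3 in \cite{KSR} and rests crucially on the analyticity and $L^\infty$ bounds for $\mathfrak T, \mathfrak R$ recorded in Proposition~\ref{prop:basic.T.R}.
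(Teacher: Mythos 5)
The paper does not actually prove this proposition: it is imported verbatim as Theorem~3 of \cite{KSR}, so there is no in-paper argument to compare against. Your sketch is the standard stationary-scattering argument that underlies the cited result, and it is consistent with every convention in the paper: the identification of the $e^{i\om r^*}$ (resp.\ $e^{-i\om r^*}$) asymptotic component with data parametrized by $2v = t+r^*$ (resp.\ $-2u = t - r^*$), hence with $\EH_1/\CH_2$ (resp.\ $\EH_2/\CH_1$) via \eqref{T.on.the.horizons}; the reduction $\mathfrak u = C(\om)\mathfrak u_1$ from $\phi\restriction_{\EH_2}\equiv 0$; and the cancellation of the $1/\om$ against the $i\om$ when passing between $\phi$ and $T\phi$ on the two horizons, with $\widehat\Psi(0)=0$ (from the mean-zero hypothesis in Definition~\ref{def:TR}) keeping $C(\om)$ regular at $\om=0$. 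The density step also closes: since $w_{p_1},w_{p_2}\geq 1$, the weighted $L^2$ space of Definition~\ref{def:TR} embeds into unweighted $L^2$, so the physical-space maps (bounded by Theorem~\ref{scat.SS}) and the multiplier operators (bounded by Proposition~\ref{prop:basic.T.R}(2)) are both continuous into a common target, and agreement on the dense class extends.

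That said, be aware that what you have is an outline, not a proof: the step you yourself flag --- rigorously matching the $r^*\to\pm\infty$ asymptotics of the tempered-distribution-valued solution $\mathfrak u(\om,\cdot)$ of \eqref{eq:renorm.ode} to the limiting boundary values of $\phi$ on the bifurcate event and Cauchy horizons, uniformly enough in $\om$ to invert the Fourier transform --- is precisely the content of the proof of Theorem~3 in \cite{KSR}, and it is nontrivial (the limit $u\to-\infty$ at fixed $v$ is $t\to+\infty$ at fixed $t+r^*$, which is not the same regime as $r^*\to-\infty$ at fixed $t$, so one must control the error between $\phi$ and the free profile $F(2v)+G(-2u)$ near each horizon). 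Since the present paper deliberately outsources exactly this to \cite{KSR}, your proposal is an accurate reconstruction of the cited route rather than an independent or complete argument.
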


\section{Proof of the main theorem}\label{sec:proof}
In this section, we prove our main results on the linear wave equation, i.e.,~Theorem~\ref{thm:main} and Corollary~\ref{cor:main}. 

The statements (1) and (2) of Theorem~\ref{thm:main} have very similar proofs. We will give a detailed proof of the statement (1) in \textbf{Section~\ref{sec.transmission}} and then briefly discuss the necessary modifications for the statement (2) in \textbf{Section~\ref{sec.reflection}}.

The proof of Corollary~\ref{cor:main} then follows as a consequence and will be carried out in \textbf{Section~\ref{sec:pf.cor.main}}.

\subsection{The transmission map and instability results in the black hole interior}\label{sec.transmission}
In this subsection, we prove statement (1) of Theorem~\ref{thm:main}.

Before we proceed, let us give a brief summary of the argument.
\begin{enumerate}
\item First, we show that without loss of generality, we may assume that $\phi\restriction_{\mathcal{H}^+_2}(u) = 0$.
\item Next, we note that the transmission coefficient is bounded below at $\om=0$ (see Proposition~\ref{prop:T0neq0}).
\item Then, using Plancherel's theorem, we show that the assumptions of Theorem~\ref{thm:main} and the previous step to imply the blowup of a \emph{global} weighted energy at the Cauchy horizon for the transmission map (see Proposition~\ref{prop:global.blowup}).
\item Next, we argue using a physical space argument (more precisely, Theorem~\ref{scat.SS}) that the global weighted energy blows up due to the behavior as $v\to+\infty$ on $\CH_2$ (instead of $v\to -\infty$) (see Proposition~\ref{prop:local.blowup}).
\item Finally, using a local energy estimate, we show that the blowup on $\CH_2$ translates to the blow-up statement for finite $u$ stated in Theorem~\ref{thm:main} (see Proposition~\ref{prop:final.blowup}).
\end{enumerate}

We begin with the first step, which is to show that in the proof of statement (1) of Theorem~\ref{thm:main}, without loss of generality, we may take $\phi\restriction_{\mathcal{H}^+_2}(u) = 0$.
\begin{proposition} To prove Theorem~\ref{thm:main}, it suffices to prove statement (1) of the theorem under the additional assumption that $\phi\restriction_{\mathcal{H}^+_2}(v) = 0$.
\end{proposition}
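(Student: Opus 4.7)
My plan is to proceed by linearity: split $\phi = \phi_1 + \phi_2$, where $\phi_1$ and $\phi_2$ are the smooth solutions of \eqref{wave.eqn} determined by the characteristic data
\begin{align*}
\phi_1\restriction_{\EH_1} = \phi\restriction_{\EH_1}, & \qquad \phi_1\restriction_{\EH_2} = 0, \\
\phi_2\restriction_{\EH_1} = 0, & \qquad \phi_2\restriction_{\EH_2} = \phi\restriction_{\EH_2}.
\end{align*}
Because $\phi\restriction_{\EH_1}$ vanishes identically for $v \leq v_*$ by assumption (1) of Theorem~\ref{thm:main}, all matching conditions at $\mathcal B_{\EH}$ are trivially satisfied to every order, so both $\phi_1$ and $\phi_2$ are genuinely smooth solutions with $\phi = \phi_1 + \phi_2$. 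By construction, $\phi_1$ satisfies the additional hypothesis $\phi_1\restriction_{\EH_2} \equiv 0$ and all other hypotheses of Theorem~\ref{thm:main}, as these involve only the data on $\EH_1$. It thus suffices to show that $\phi_2$ contributes only a finite amount to the integrals appearing in conclusion (1) (and, with the obvious modification, conclusion (2)) of Theorem~\ref{thm:main}, so that the required divergence for $\phi$ forces the same divergence for $\phi_1$.

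The crux is to control the data of $\phi_2$ along $\EH_2$ in a strong weighted $L^2$ norm. Assumption (1) combined with continuity of $\phi$ at $\mathcal B_{\EH}$ gives $\phi\restriction_{\mathcal B_{\EH}} = 0$, and the smoothness of $\phi$ up to $\mathcal B_{\EH}$ in the regular coordinates $(U_{\EH}, V_{\EH})$ of Section~\ref{sec:horizons} yields a Taylor expansion from which $\phi(U_{\EH}, 0) = O(U_{\EH})$ as $U_{\EH} \to 0$, and similarly for $\partial_{U_{\EH}}\phi$. Since $U_{\EH}(u) = (2\kappa_+)^{-1} e^{2\kappa_+ u}$ by \eqref{U.EH.def}, this translates to $T\phi_2\restriction_{\EH_2}(u) = O(e^{2\kappa_+ u})$ as $u \to -\infty$; together with $\phi\restriction_{\EH_2}(u) \equiv 0$ for $u \geq u_*$ from assumption (2), we conclude that
\[
\int_{\mathbb R} w_{p_1}(-u) w_{p_2}(u) (T\phi_2\restriction_{\EH_2})^2(u)\, du < +\infty \qquad \text{for all } p_1, p_2 > 1.
\]

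Feeding this into Theorem~\ref{scat.SS} applied to $\phi_2$ with $p_1 = p$ (the integer from the statement of Theorem~\ref{thm:main}) and any $p_2 > 1$ yields $\sup_{u\in\mathbb R}\int_{\mathbb R} w_p(v) w_{p_2}(-v) (\partial_v \phi_2)^2(u,v)\, dv < +\infty$, together with the corresponding weighted bound $\int_{\mathbb R} w_p(-u) w_{p_2}(u)(T\phi_2\restriction_{\CH_1})^2(u)\, du < +\infty$ from part (4) of the same theorem. Since $w_p(v)$ dominates $(1+v^2)^{p/2}$ for $|v| \geq 1$ (and both are comparable there), the triangle inequality in the appropriate weighted $L^2$ space transfers the divergence \eqref{eq:main.instab.conclu} from $\phi$ to $\phi_1$, and the same argument with the Cauchy-horizon bound in Theorem~\ref{scat.SS} handles \eqref{eq:main.CH.conclu}. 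This completes the reduction. I do not anticipate any serious obstacle; the only delicate point is the exponential decay of $\phi\restriction_{\EH_2}$ near $\mathcal B_{\EH}$, which is a purely local regularity consequence of smoothness at the bifurcation sphere.
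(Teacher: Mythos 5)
Your proposal is correct and is essentially the paper's own argument: the same decomposition into the solution generated by the $\EH_1$ data (with zero $\EH_2$ data) plus the solution generated by the $\EH_2$ data (with zero $\EH_1$ data), followed by an application of Theorem~\ref{scat.SS} to the latter piece to show its contribution to the weighted integrals is finite, so that divergence transfers between $\phi$ and $\phi_1$. The only difference is that you spell out the exponential smallness of $T\phi\restriction_{\EH_2}$ near $\mathcal B_{\EH}$ (coming from smoothness in the regular coordinate $U_{\EH}$ and \eqref{U.EH.def}), which the paper compresses into the phrase ``in view of the smoothness and support assumptions of $\phi$.''
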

\begin{proof}Let $\phi$ satisfy the assumptions of Theorem~\ref{thm:main}. We now define two auxiliary smooth solutions $\phi_{left}$ and $\phi_{right}$ to the wave equation on $\mathcal M_{RN} \cup \EH_{total}$ as follows:
\begin{enumerate}
	\item For $\phi_{left}$ we solve a characteristic initial value problem with 
	\[\phi_{left}\restriction_{\mathcal{H}^+_2}(u) = \phi\restriction_{\mathcal{H}^+_2}(u)\text{ and }\phi_{left}\restriction_{\mathcal{H}^+_1}(v) = 0.\]
	\item For $\phi_{right}$ we solve a characteristic initial value problem with 
	\[\phi_{right}\restriction_{\mathcal{H}^+_2}(u) = 0\text{ and }\phi_{right}\restriction_{\mathcal{H}^+_1}(v) = \phi\restriction_{\mathcal{H}^+_1}(v).\]
\end{enumerate}
We have, of course, that 
\begin{equation}\label{sumup}
\phi = \phi_{left} + \phi_{right}.
\end{equation}

Observe that in view of the smoothness and support assumptions of $\phi$, we have that~\eqref{eq:def.A.in.EE} holds for $\phi_{left}$ with any non-negative values of $p_1$ and $p_2$. Moreover, the other hypotheses for Theorem~\ref{scat.SS} clearly hold for $\phi_{left}$, and it is thus an immediate consequence that 
\begin{equation}\label{doesnotblowup}
\int_1^\infty (1+|v|)^{p} |\rd_v\phi_{left}|^2(u,v) \, dv < +\infty,\qquad \forall u \in \mathbb{R}.
\end{equation}
Now it suffices to observe that for any $u \in \mathbb{R}$,~\eqref{doesnotblowup} and~\eqref{sumup} imply that
\begin{equation}\label{willitblowup}
\int_1^\infty (1+|v|)^{p} |\rd_v\phi|^2(u,v) \, dv = +\infty \Leftrightarrow \int_1^\infty (1+|v|)^{p} |\rd_v\phi_{right}|^2(u,v) \, dv = +\infty.\end{equation}

\end{proof}

We now turn to the next step, which is to show that the transmission coefficient is non-zero at $\om=0$. 

\begin{proposition}\label{prop:T0neq0}
$$\mathfrak T(0)\neq 0.$$
\end{proposition}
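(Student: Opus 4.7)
The plan is to observe that the statement $\mathfrak T(0) \neq 0$ is an immediate corollary of the results already recorded in Proposition~\ref{prop:basic.T.R}. The most direct route is through part~(4) of that proposition, which gives the closed-form expression
$$\mathfrak T(0) = \frac{1}{2}\left(\frac{r_-}{r_+} + \frac{r_+}{r_-}\right).$$
Since $0 < r_- < r_+$ (because $0 < |{\bf e}| < M$ forces $r_\pm = M \pm \sqrt{M^2 - {\bf e}^2}$ to be distinct positive reals), each summand is strictly positive; in fact, writing $x = r_-/r_+ \in (0,1)$ and applying AM--GM to $x + 1/x$ yields $\mathfrak T(0) > 1$, and in particular $\mathfrak T(0) \neq 0$.

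Alternatively, one can avoid invoking the explicit value and rely only on part~(3) of Proposition~\ref{prop:basic.T.R}: the conservation identity $|\mathfrak T(\om)|^2 - |\mathfrak R(\om)|^2 = 1$, evaluated at $\om = 0$, gives $|\mathfrak T(0)|^2 = 1 + |\mathfrak R(0)|^2 \geq 1$, so $\mathfrak T(0) \neq 0$. This second argument has the virtue of requiring only the (microlocalized) $T$-conservation law at zero frequency rather than the explicit value of $\mathfrak T(0)$.

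In keeping with the discussion in Remark~\ref{rmk:grav.pert} and Remark~\ref{rmk:use.conservation.for.lower.bd}, however, the first argument based on the direct evaluation in part~(4) of Proposition~\ref{prop:basic.T.R} is conceptually preferable for the present method: it does not appeal to a conservation law and therefore can be expected to generalize to settings (such as gravitational perturbations of the Kerr interior) in which no analogue of the $T$-conservation law is available. Since the entire content of the proposition has been absorbed into Proposition~\ref{prop:basic.T.R}, there is no genuine obstacle in the proof; the serious analytic work lies in the derivation of those formulas in \cite{KSR}, which we simply cite.
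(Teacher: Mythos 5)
Your proof is correct and matches the paper's own argument, which likewise observes that the claim is immediate from either part~(3) or part~(4) of Proposition~\ref{prop:basic.T.R}; you have simply spelled out both routes in more detail. No issues.
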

\begin{proof}
This is immediate using either part (3) or part (4) of Theorem~\ref{prop:basic.T.R}. \qedhere
\end{proof}

Next, we use Plancherel's theorem to prove a global blow-up result.
\begin{proposition}\label{prop:global.blowup}
Let $\phi$ be as in the assumptions of Theorem~\ref{thm:main} and satisfy $\phi\restriction_{\mathcal{H}^+_2} = 0$. Define 
\begin{equation}\label{eq:def.Psi.in.theorem}
\Psi(v) = r_+ (T\phi)\restriction_{\EH_1}(v).
\end{equation}

Then
$$\int_{-\infty}^{\infty} |v|^{p}|\mathcal T \Psi|^2(v)\, dv = +\infty.$$
\end{proposition}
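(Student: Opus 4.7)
The plan is to transfer the problem to frequency space. Using the Fourier convention of Definition~\ref{def.fourier.horizon}, the identity $\widehat{v^n\Psi} = (2i)^{-n}\partial_\om^n\hat\Psi$ together with Plancherel yields, for any even $p = 2n$,
\[
    \int_{-\infty}^{\infty} v^p \abs{f}^2 \, dv = 4^{-n}\int_{-\infty}^{\infty} \abs{\partial_\om^{n} \hat f}^2\, d\om
\]
(with the convention that both sides are simultaneously infinite). Combined with the multiplier representation $\widehat{\mathcal T\Psi}(\om) = \mathfrak T(\om)\hat\Psi(\om)$ of Proposition~\ref{prop:T.R.phase.rep.2}, the claim reduces to showing $\partial_\om^{p/2}(\mathfrak T\hat\Psi)\notin L^2(\mathbb R)$.

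I would then extract three ingredients from the hypotheses of Theorem~\ref{thm:main}. \textbf{(i)} Assumption~(5) and Plancherel give $\|\partial_\om^{p/2}\hat\Psi\|_{L^2}^2 = +\infty$. \textbf{(ii)} The minimality of $p$ in assumption~(5), combined with $\Psi(v) = 0$ for $v\leq v_*$ from assumption~(1), yields $\|\partial_\om^k\hat\Psi\|_{L^2}^2 < +\infty$ for every $0\leq k\leq \tfrac{p}{2}-1$. \textbf{(iii)} Assumption~(6), together with $T = \tfrac12\partial_v$ on $\EH_1$ and the identity $\widehat{\partial_v\Psi}(\om) = -2i\om\hat\Psi(\om)$, translates the condition $v^{p/2}\partial_v\Psi\in L^2$ into $\partial_\om^{p/2}(\om\hat\Psi)\in L^2$. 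Expanding this via Leibniz gives $\om\,\partial_\om^{p/2}\hat\Psi + \tfrac{p}{2}\partial_\om^{p/2-1}\hat\Psi\in L^2$, and since the second summand already belongs to $L^2$ by \textbf{(ii)}, we conclude $\om\,\partial_\om^{p/2}\hat\Psi \in L^2$. In particular, for every $\delta>0$,
\[
    \int_{\abs{\om}\geq \delta}\abs{\partial_\om^{p/2}\hat\Psi}^2 d\om \leq \delta^{-2}\|\om\,\partial_\om^{p/2}\hat\Psi\|_{L^2}^2 < +\infty,
\]
so together with \textbf{(i)} the divergence of $\|\partial_\om^{p/2}\hat\Psi\|_{L^2}^2$ must be concentrated at $\om = 0$.

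To finish, I would apply Leibniz to the main object:
\[
    \partial_\om^{p/2}(\mathfrak T\hat\Psi) = \mathfrak T\,\partial_\om^{p/2}\hat\Psi \;+\; \sum_{k=1}^{p/2}\binom{p/2}{k}\mathfrak T^{(k)}\,\partial_\om^{p/2-k}\hat\Psi.
\]
Each summand in the remainder pairs $\mathfrak T^{(k)}$, which will be bounded on $\mathbb R$ (see the next paragraph), with $\partial_\om^{p/2-k}\hat\Psi \in L^2$ from \textbf{(ii)}, so the remainder lies in $L^2$. For the principal term, Proposition~\ref{prop:T0neq0} gives $\mathfrak T(0)\neq 0$, and continuity (Proposition~\ref{prop:basic.T.R}\,(1)) produces $\delta_0>0$ with $\abs{\mathfrak T(\om)}\geq \tfrac12\abs{\mathfrak T(0)}$ for $\abs{\om}\leq \delta_0$. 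By the concentration noted above,
\[
    \int_{\abs{\om}\leq \delta_0}\abs{\mathfrak T\,\partial_\om^{p/2}\hat\Psi}^2 \geq \tfrac14 \abs{\mathfrak T(0)}^2 \int_{\abs{\om}\leq \delta_0}\abs{\partial_\om^{p/2}\hat\Psi}^2 = +\infty,
\]
so the triangle inequality in $L^2$ forces $\partial_\om^{p/2}(\mathfrak T\hat\Psi)\notin L^2$, which is the claim.

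The main obstacle is the uniform boundedness of $\mathfrak T^{(k)}$ on $\mathbb R$ for $1\leq k\leq p/2$. Since $\mathfrak T$ is analytic in the strip $\abs{\Im\om}<\min(\kappa_+,\kappa_-)$ by Proposition~\ref{prop:basic.T.R}\,(1), this follows from the Cauchy integral formula once one has a uniform bound on $\abs{\mathfrak T}$ in a narrower open strip, which in turn is obtained from the Volterra representations \eqref{eq:def.u1}--\eqref{eq:def.v2} via a contraction-mapping argument exploiting the exponential decay of $V$ as $r^*\to\pm\infty$ recorded in \eqref{eq:Omg.near.H}--\eqref{eq:Omg.near.CH}.
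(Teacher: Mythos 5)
Your argument is correct in substance and follows the same strategy as the paper's proof: pass to frequency space via Plancherel, use the minimality of $p$ (together with assumption (4) and the support of $\Psi$) to place $\partial_\om^k\hat\Psi$ in $L^2$ for all $k\le p/2-1$, use assumption (6) plus the commutator identity to get $\om\,\partial_\om^{p/2}\hat\Psi\in L^2$ so that the divergence of $\|\partial_\om^{p/2}\hat\Psi\|_{L^2}$ concentrates at $\om=0$, and then exploit $\mathfrak T(0)\neq 0$. The one place you genuinely diverge is the treatment of the Leibniz remainder $\sum_{k\ge 1}\binom{p/2}{k}\mathfrak T^{(k)}\partial_\om^{p/2-k}\hat\Psi$: you estimate it in $L^2(\mathbb R)$, which forces you to prove uniform bounds on all derivatives of $\mathfrak T$ over the whole real line. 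That is strictly more than what Proposition~\ref{prop:basic.T.R} supplies (uniform boundedness of $\mathfrak T$ itself and analyticity near $\mathbb R$), and the Jost-function/strip-analyticity argument you sketch for it, while plausible, is real additional work and the weakest link in your write-up. The paper sidesteps this entirely by discarding everything outside a fixed compact interval at the outset: since $\int_{\mathbb R}\ge\int_{-1}^{1}$, it suffices to show $\partial_\om^{p/2}(\mathfrak T\hat\Psi)\notin L^2(-1,1)$, and on $[-1,1]$ every $\partial_\om^{k}\mathfrak T$ is bounded simply because $\mathfrak T$ is analytic on a neighbourhood of a compact set. Your own observation that the divergence concentrates in every interval $[-\delta,\delta]$ already hands you the same fix: run the Leibniz decomposition only on $[-\delta_0,\delta_0]$ and your ``main obstacle'' disappears, with no need for global bounds on $\mathfrak T^{(k)}$.
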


\begin{proof}
\pfstep{Step~1: Writing the assumptions in phase space} Conditions (1), (4) and (5) in Theorem~\ref{thm:main}, together with \eqref{eq:def.Psi.in.theorem}, imply that 
\begin{equation}\label{eq:condition.in Psi.1}
\int_{-\infty}^{\infty} |v|^{p}|\Psi|^2(v)\, dv = +\infty;
\end{equation}
and the condition (6), together with \eqref{eq:def.Psi.in.theorem}, imply that
\begin{equation}\label{eq:condition.in Psi.2}
\int_{-\infty}^{\infty} |v|^{p}|T\Psi|^2(v)\, dv< +\infty.
\end{equation}

By Plancherel's theorem (and recalling that $p$ is even), \eqref{eq:condition.in Psi.1} and \eqref{eq:condition.in Psi.2} imply
\begin{equation}\label{blow.up.H.Fourier}
\int_{-\infty}^\infty \left|\rd_\om^{\f{p}{2}}\widehat{\Psi}\right|^2 (\om)\, d\om = +\infty,\quad\int_{-\infty}^\infty \left|\rd_\om^{\f{p}{2}}(\om\widehat{\Psi})\right|^2 (\om)\, d\om< +\infty.
\end{equation}
Since $p$ is the \underline{smallest} even integer such that \eqref{linear.lower.bd} holds, we can bound
\begin{equation}\label{Psi.commutator}
\int_{-\infty}^\infty \left|[\rd_\om^{\f{p}{2}},\om]\widehat{\Psi}\right|^2 (\om)\, d\om = \f{p^2}{4} \int_{-\infty}^\infty \left|\rd_\om^{\f{p-2}{2}}\widehat{\Psi}\right|^2 (\om)\, d\om < +\infty.
\end{equation}
Combining \eqref{Psi.commutator} with the second estimate in \eqref{blow.up.H.Fourier}, we obtain
\begin{equation}\label{bound.der.hor}
\int_{-\infty}^\infty \left|\om(\rd_\om^{\f{p}{2}}\widehat{\Psi})\right|^2 (\om)\, d\om< +\infty.
\end{equation}

Combining \eqref{bound.der.hor} with the first estimate in \eqref{blow.up.H.Fourier}, we obtain
\begin{equation}\label{most.useful.est}
\int_{-1}^1 \left|\rd_\om^{\f{p}{2}}\widehat{\Psi}\right|^2 (\om)\, d\om=+\infty.
\end{equation}

\pfstep{Step~2: Finishing the proof} By Plancherel's theorem and Proposition~\ref{prop:T.R.phase.rep.2}, it suffices to prove
\begin{equation}\label{eq:blowup.intermediate}
\int_{-a}^a \left|\rd_\om^{\f{p}{2}}(\mathfrak T\widehat{\Psi})\right|^2 (\om)\, d\om= +\infty
\end{equation}
for some $a\in \mathbb R_{>0}$. We will prove \eqref{eq:blowup.intermediate} for $a=1$.

To achieve \eqref{eq:blowup.intermediate}, first notice since 
\begin{itemize}
\item $p$ is the \underline{smallest} even integer such that \eqref{linear.lower.bd} holds, and 
\item $\mathfrak T$ and its derivatives are uniformly bounded for $\om \in [-1,1]$ by analyticity (Proposition~\ref{prop:basic.T.R}), 
\end{itemize}
we have (for some constant $C>0$ depending only on $p$),
$$\int_{-a}^a \left|[\rd_\om^{\f{p}{2}},\mathfrak T]\widehat{\Psi}\right|^2 (\om)\, d\om \leq C\sum_{\substack{k_1+k_2=\f{p}{2}\\k_2\leq \f{p-2}{2}}}\int_{-1}^1 \left|(\rd_\om^{k_1}\mathfrak T)(\rd_\om^{k_2}\widehat{\Psi})\right|^2 (\om)\, d\om< +\infty.$$
It therefore suffices to establish
\begin{equation}\label{goal.inf}
\int_{-1}^1 \left|\mathfrak T(\rd_\om^{\f{p}{2}}\widehat{\Psi})\right|^2 (\om)\, d\om = +\infty.
\end{equation}
For the sake of contradiction, we assume \eqref{goal.inf} fails. Since $\mathfrak T(0)\neq 0$ (by Proposition~\ref{prop:T0neq0}) and $\mathfrak T$ is continuous (in fact even analytic by Proposition~\ref{prop:basic.T.R}), there exist $\ep \in (0,1]$ and $\eta>0$ such that $|\mathfrak T(\om)|\geq \eta$ whenever $|\om|\leq \ep$. Therefore, using this fact with \eqref{bound.der.hor} and the assumed failure of \eqref{goal.inf}, we obtain
\begin{equation*}
\begin{split}
\int_{-1}^1 \left|\rd_\om^{\f{p}{2}}\widehat{\Psi}\right|^2 (\om)\, d\om
\leq \eta^{-2}\int_{-\ep}^\ep \left|\mathfrak T(\rd_\om^{\f{p}{2}}\widehat{\Psi})\right|^2 (\om)\, d\om+\ep^{-2}\int_{[-1,1]\setminus [-\ep,\ep]} \left|\om(\rd_\om^{\f{p}{2}}\widehat{\Psi})\right|^2 (\om)\, d\om< +\infty.
\end{split}
\end{equation*}
This contradicts \eqref{most.useful.est}. Therefore, \eqref{goal.inf} holds.
\end{proof}

At this point, the blow-up result that we obtained in Proposition~\ref{prop:global.blowup} is global, i.e.,~we do not yet know whether the blowup is due to the behavior of $\mathcal T\Psi$ as $v\to +\infty$ or as $v\to -\infty$. Nevertheless, in the following proposition, we are able to use the energy estimates in Section~\ref{sec:scat.ss} to deduce that the blowup is due to a lower bound of the decay as $v\to +\infty$.

\begin{proposition}\label{prop:local.blowup}
Let $\Psi$ be as in the assumptions of Proposition~\ref{prop:global.blowup}. Then $\mathcal T \Psi$ satisfies
$$\int_{-\infty}^{1} (1+|v|)^{p}|\mathcal T \Psi|^2(v)\, dv < +\infty.$$
Therefore, when combined with Proposition~\ref{prop:global.blowup}, we obtain
$$\int_{1}^{\infty} (1+|v|)^{p}|\mathcal T \Psi|^2(v)\, dv = +\infty.$$
\end{proposition}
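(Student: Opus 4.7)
The claim is essentially a direct physical-space corollary of Theorem~\ref{scat.SS}, and the essential observation is that $\Psi$ has \emph{compact support in the past}. Indeed, by assumption~(1) of Theorem~\ref{thm:main}, $\phi\restriction_{\EH_1}(v) = 0$ for all $v \leq v_*$, so $\Psi(v) = r_+ (T\phi)\restriction_{\EH_1}(v) = 0$ for $v \leq v_*$ as well; and by the reduction already carried out in Proposition~\ref{prop:global.blowup}, we additionally have $\phi\restriction_{\EH_2} \equiv 0$.

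My plan is therefore to apply Theorem~\ref{scat.SS} to this $\phi$ with the particular choice of weights $p_1 = 2$ and $p_2 = p$. Both lie in $(1, +\infty)$ because $p \geq 4$ by assumption~(5). Verifying that the quantity $A$ in \eqref{eq:def.A.in.EE} is finite is straightforward: the contribution along $\EH_2$ vanishes identically, and on $\EH_1$ the integrand is supported in $\{v \geq v_*\}$, where $w_p(-v)$ is bounded (it equals $1$ for $v \geq 1$ and is continuous on the compact interval $[v_*, 1]$) and $w_2(v) \ls 1 + v^2$. Thus $A$ is controlled, up to a trivially bounded contribution on $[v_*, 1]$, by $\int_1^{+\infty} (1 + v^2) (T\phi\restriction_{\EH_1})^2\, dv$, which is finite by assumption~(4) of Theorem~\ref{thm:main}.

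Reading off the estimate \eqref{eq:energy.est.on.CH} of Theorem~\ref{scat.SS} and restricting to $\CH_2$ gives
\begin{equation*}
\int_{\mathbb R} w_2(v)\, w_p(-v) (T\phi\restriction_{\CH_2})^2(v)\, dv \leq C A < +\infty.
\end{equation*}
On the set $\{v \leq 1\}$ one has $w_2(v) w_p(-v) \gtrsim (1+|v|)^p$ (since $w_p(-v) \sim |v|^p$ for $v \leq -1$ and both weights are bounded below on $[-1,1]$), and since $\mathcal T \Psi = r_- T\phi\restriction_{\CH_2}$ with $r_- > 0$ a fixed positive constant, the bound $\int_{-\infty}^{1} (1+|v|)^p |\mathcal T \Psi|^2(v)\, dv < +\infty$ follows. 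The ``Therefore'' part of the proposition is then immediate from Proposition~\ref{prop:global.blowup} upon noting that $|v|^p$ and $(1+|v|)^p$ are comparable away from the origin.

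I do not anticipate a serious obstacle here; the entire argument is essentially bookkeeping, and the only real idea is to exploit the compact past support of $\Psi$ (forced by the support condition on $\phi\restriction_{\EH_1}$ and the reduction to $\phi\restriction_{\EH_2} \equiv 0$) in order to accommodate the maximal $p_2 = p$ weight as $v \to -\infty$ in Theorem~\ref{scat.SS}, while only needing the mild $p_1 = 2$ weight as $v \to +\infty$, which is precisely what assumption~(4) of Theorem~\ref{thm:main} provides.
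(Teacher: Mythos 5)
Your proof is correct and follows essentially the same route as the paper: the paper likewise proves this as an immediate corollary of Theorem~\ref{scat.SS} with $p_1=2$ and $p_2=p$, using assumptions (1)--(4) of Theorem~\ref{thm:main} together with the reduction $\phi\restriction_{\EH_2}\equiv 0$ to verify its hypotheses. You have merely spelled out the bookkeeping (compact past support of $\Psi$, the weight comparison on $\{v\le 1\}$, and the comparability of $|v|^p$ and $(1+|v|)^p$) in more detail than the paper does.
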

\begin{proof}
This is an immediate corollary of Theorem~\ref{scat.SS} with $p_1 = 2$ and $p_2 = p$. Notice that the assumptions (1)--(4) of Theorem~\ref{thm:main}, as well as the additional assumption $\phi \restriction_{\calH_{2}^{+}} = 0$ as in Proposition~\ref{prop:global.blowup}, imply that the assumptions (1)--(2) of Theorem~\ref{scat.SS} are satisfied. 
\qedhere
\end{proof}

\begin{proposition}\label{prop:final.blowup}
Let $\phi$ satisfy the assumptions of Theorem~\ref{thm:main} and satisfy $\phi\restriction_{\mathcal{H}^+_2} = 0$. Then the following holds for every $u\in \mathbb R$:
\begin{equation}\label{blow.up.local}
\int_1^\infty (1+|v|)^{p} |\rd_v\phi|^2(u,v) \, dv = +\infty.
\end{equation}
\end{proposition}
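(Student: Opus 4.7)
The plan is to argue by contradiction: suppose there exists some $u_{0} \in \mathbb R$ for which $\int_{1}^{\infty} (1+|v|)^{p} |\rd_{v}\phi|^{2}(u_{0},v) \, dv < +\infty$. I will then derive $\int_{1}^{\infty} (1+|v|)^{p} |\rd_{v}\phi \restriction_{\CH_{2}}|^{2}(v) \, dv < +\infty$, which contradicts Proposition~\ref{prop:local.blowup} after identifying $\mathcal T\Psi$ with $\frac{r_{-}}{2}\rd_{v}\phi\restriction_{\CH_{2}}$ via \eqref{T.on.the.horizons} and Definition~\ref{def:TR}.

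The first step will be to upgrade the contradiction hypothesis to a bound on the full real line in $v$. I will apply Theorem~\ref{scat.SS} to the original initial data with $p_{1} = 2$ and $p_{2} = p$, whose hypotheses are satisfied by conditions~(1), (3), (4) of Theorem~\ref{thm:main} together with $\phi\restriction_{\EH_{2}} \equiv 0$. This yields $\sup_u \int_{\mathbb R} w_{2}(v) w_{p}(-v) (\rd_{v}\phi)^{2}(u,v) \, dv < +\infty$, which in particular controls the $v \leq -1$ portion of our integral at $u = u_0$ with weight $(1+|v|)^{p}$. Combining this with the contradiction hypothesis (for $v \geq 1$) and smoothness of $\phi$ on compact subsets of $\mathcal M_{RN}$ (for $v \in [-1,1]$), I obtain the full bound $\int_{\mathbb R} w_{p}(v) w_{p}(-v) (\rd_{v}\phi)^{2}(u_{0},v) \, dv < +\infty$.

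The second step will be to re-run the energy estimate of Theorem~\ref{scat.SS} in the shifted region $\mathcal D = \{(u, v) : u \geq u_{0}\}$, viewing $\{u = u_{0}\} \cup (\EH_{2} \cap \{u \geq u_{0}\})$ as the ``past'' boundary (with zero flux on $\EH_2$ by assumption) and $\CH_{2} \cup (\CH_{1} \cap \{u \geq u_{0}\})$ as the ``future'' boundary. Using the multiplier $Y$ from \eqref{Y.def} with both weights now taken to be $p_{1} = p_{2} = p$, the case-by-case positivity of the bulk term $\T\pi \cdot r_{RN}^{2}\Omg_{RN}^{2}$ proven in Step~1(a) of the proof of Theorem~\ref{scat.SS} is a pointwise statement that continues to hold in $\mathcal D$ for $N$ sufficiently large. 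Applying the resulting energy identity in $\mathcal D \cap \{u \leq U\}$ and passing to the limit $U \to +\infty$ (with a Banach--Alaoglu / lower-semicontinuity argument at $\CH_{2}$ as in Step~4 of the proof of Theorem~\ref{scat.SS}), and discarding the non-negative $\CH_{1}$ flux and bulk terms on the left, I arrive at
\begin{equation*}
\int_{\mathbb R} w_{p}(v) w_{p}(-v) (\rd_{v}\phi\restriction_{\CH_{2}})^{2}(v) \, dv \leq C \int_{\mathbb R} w_{p}(v) w_{p}(-v) (\rd_{v}\phi)^{2}(u_{0}, v) \, dv < +\infty,
\end{equation*}
whose $v \geq 1$ portion gives the desired contradiction.

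The main technical hurdle will be the bookkeeping when re-deploying the multiplier argument of Theorem~\ref{scat.SS} in the unbounded region $\mathcal D$ and justifying the limits at the null boundaries $\CH_{1}$ and $\CH_{2}$. Conceptually, however, nothing new is needed beyond what was already developed in Theorem~\ref{scat.SS}: the positivity \eqref{bulk.term.to.prove} is pointwise, so the case analysis transfers verbatim; the limiting procedure at $\CH_{2}$ is identical to that used to establish \eqref{eq:energy.est.on.CH}; and the vanishing of $\phi$ on $\EH_{2}$ is precisely what kills the only non-trivial ``past'' flux besides the finite one at $\{u=u_{0}\}$. I therefore do not expect any essential new difficulty to arise.
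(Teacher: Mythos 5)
Your proposal is correct, and it follows the same overall strategy as the paper: argue by contradiction and propagate the assumed finiteness of the weighted $\rd_v\phi$ flux on $\{u=u_0\}$ forward to $\CH_2$, contradicting Proposition~\ref{prop:local.blowup}. Where you differ is in how the propagation is implemented. The paper works exclusively in the region $v\geq 1$: it multiplies \eqref{wave.eq:rdvphi} by $(1+|v|)^{p}r_{RN}\rd_v\phi$, integrates over $[u_i,u_f]\times[1,v_f]$, and controls the resulting cross term $\int (1+|v|)^{p}\Omg_{RN}^2 r_{RN}(\rd_u\phi)(\rd_v\phi)$ by H\"older, exploiting the exponential decay of $\Omg_{RN}^2$ toward the Cauchy horizon together with the uniform bound on $\int(\rd_u\phi)^2$ from Theorem~\ref{scat.SS}; this deliberately avoids needing any information about $\rd_v\phi(u_0,\cdot)$ for $v<1$. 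You instead first upgrade the contradiction hypothesis to a bound over all $v\in\mathbb R$ --- which is legitimate, since for $v\leq -1$ the weight $w_2(v)w_p(-v)\sim |v|^{p}$ is already controlled by \eqref{eq:main.energy.est} with $p_1=2$, $p_2=p$ (assumption (1) of Theorem~\ref{thm:main} making the data integral \eqref{eq:def.A.in.EE} finite) --- and then rerun the full coercive multiplier argument of Theorem~\ref{scat.SS} in $\{u\geq u_0\}$, letting the pointwise positivity \eqref{bulk.term.to.prove} absorb the cross terms. Both routes are sound: yours reuses the global machinery wholesale at the price of the extra extension step and some boundary bookkeeping at $\EH_2$ (where the flux vanishes since $\rd_u\phi\restriction_{\EH_2}\equiv 0$), while the paper's is more economical in what it demands at $u=u_0$ but needs the separate H\"older estimate on the error term. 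Your identification of $\mathcal T\Psi$ with $\tfrac{r_-}{2}\rd_v\phi\restriction_{\CH_2}$ (valid because, with $\Psi=r_+T\phi\restriction_{\EH_1}$ and $\phi\restriction_{\EH_2}\equiv 0$, the solution in Definition~\ref{def:TR} is $\phi$ itself) is exactly the one the paper uses in \eqref{blow.up.global}.
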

\begin{proof}
To show that \eqref{blow.up.local} also holds for all finite $u$, we will argue by contradiction. Assuming that \eqref{blow.up.local} fails for some $u_i \in \mathbb R$, we use a standard propagation of regularity argument with \emph{finite time} energy estimates to arrive at a contradiction with the conclusion of Proposition~\ref{prop:local.blowup}.

Assume for the sake of contradiction that $\exists u_i\in \mathbb R$ such that
\begin{equation}\label{localEE.contra}
\int_1^{\infty} (1+|v|)^{p}(\rd_v\phi)^2(u_i,v)\, dv < +\infty.
\end{equation}

\pfstep{Step~0: Some preliminary estimates for $\phi$}
Before we proceed, first note that due to the conditions (1)--(4) of Theorem~\ref{thm:main}, we can apply Theorem~\ref{scat.SS} with $p_1 = 2$ and $p_2 = p$. In particular, \eqref{eq:main.energy.est} implies that
\begin{equation}\label{eq:localEE.nonlinear.duphi.L1}
\sup_{v\in \mathbb R} \left( \int_{-\infty}^{-1} (1+u^2) (\rd_u\phi)^2(u,v) \, du + \int_1^{+\infty} (1+ u^2)^{\f p2} (\rd_u\phi)^2(u,v) \, du \right) < +\infty.
\end{equation}

\pfstep{Step~1: Propagation of regularity by finite time energy estimates} Multiplying \eqref{wave.eq:rdvphi} by $(1+|v|)^{p} r_{RN}\rd_v\phi$, integrating a region $(u,v)\in [u_i,u_f]\times [0,v_f]$ and integrating by parts, we obtain
\begin{equation}\label{localEE}
\begin{split}
&\:\f 12 \int_1^{v_f} (1+|v|)^{p} r_{RN}^2(\rd_v\phi)^2(u_f,v)\, dv \\
=&\: \f 12 \int_1^{v_f} (1+|v|)^{p} r_{RN}^2(\rd_v\phi)^2(u_i,v)\, dv + \f 14 \int_1^{v_f}\int_{u_i}^{u_f} (1+|v|)^{p} \Omg_{RN}^2 r_{RN} (\rd_u\phi)(\rd_v\phi)(u,v) \,du\,dv.
\end{split}
\end{equation}
The last term on the RHS of \eqref{localEE} can be controlled as follows using H\"older's inequality:
\begin{equation*}
\begin{split}
&\: \left|\int_1^{v_f}\int_{u_i}^{u_f} (1+|v|)^{p} \Omg_{RN}^2 r_{RN} (\rd_u\phi)(\rd_v\phi)(u,v) \,du\,dv \right|\\
\leq &\: r_+ \left(\int_1^{v_f}\int_{u_i}^{u_f} (1+|v|)^{p} \Omg_{RN}^2|\rd_v\phi|^2(u,v) \,du\,dv\right)^{\f 12}\left(\int_1^{v_f}\int_{u_i}^{u_f} (1+|v|)^{p} \Omg_{RN}^2 |\rd_u\phi|^2(u,v) \,du\,dv\right)^{\f 12}\\
\ls &\: \left(\sup_{u\in [u_i,u_f)}\int_{1}^{v_f} (1+|v|)^{p}|\rd_v\phi|^2(u,v)\, dv\right)^{\f 12} \underbrace{\left(\int_{u_i}^{u_f} \sup_{v\in [1,v_f)}\Omg_{RN}^2(u,v)\, du\right)^{\f 12}}_{\doteq I}\\
&\:\times \underbrace{\left(\sup_{v\in [1,v_f)}\int_{u_i}^{u_f} |\rd_u\phi|^2(u,v)\, du\right)^{\f 12}}_{\doteq II} \underbrace{\left(\int_{1}^{v_f} \sup_{u\in [u_i,u_f)}(1+|v|)^{p}\Omg_{RN}^2(u,v)\, dv\right)^{\f 12}}_{\doteq III}.
\end{split}
\end{equation*}
Note that in this region (which is a subset of $\{(u,v): u\geq u_i,\, v\geq 0\}$), we have $\sup_{v\in [0,+\infty)} \Omg_{RN}^2(u,v) \ls \min\{e^{- 2\kappa_- u}, 1\}$ and $\sup_{u\in [u_i,+\infty)} \Omg_{RN}^2 \ls e^{-2 \kappa_- v}$ by \eqref{eq:Omg.near.CH}. Thus, $I$ and $III$ are bounded uniformly for all $(u_f,v_f)\in [u_i,\infty)\times [0,\infty)$. 

Moreover, $II$ is uniformly bounded for all $(u_f,v_f)\in [u_i,\infty)\times [1,\infty)$ thanks to \eqref{eq:localEE.nonlinear.duphi.L1}.

Therefore, using \eqref{localEE.contra}, \eqref{localEE} and Young's inequality, we conclude that $\f 12 \int_1^{v_f} (1+|v|)^{p} r^2(\rd_v\phi)^2(u_f,v)\, dv$ is uniformly bounded for all $(u_f,v_f)\in [u_i,\infty)\times [0,\infty)$. Recall now from part (3) of Theorem~\ref{scat.SS} that the pointwise limit $\lim_{u\to\infty}|\rd_v\phi|^2(u,v)$ exists. By Fatou's lemma, we can therefore take $u_f,\,v_f\to +\infty$ so as to show that 
\begin{equation}\label{eq:to.contradict.the.global.blowup}
\f 12 \int_1^{\infty} (1+|v|)^{p} r_{RN}^2(\lim_{u\to\infty}|\rd_v\phi|^2(u,v))\, dv < +\infty.
\end{equation}

On the other hand, by the definition of $\mathcal T$ in \eqref{eq:def.transmission.reflection}, the estimate in Proposition~\ref{prop:local.blowup}, and \eqref{T.on.the.horizons},
\begin{equation}\label{blow.up.global}
\int_1^\infty (1+|v|)^{p} r_{RN}^2 (\lim_{u\to +\infty}|\rd_v\phi|^2(u,v)) \, dv = +\infty.
\end{equation}
Obviously, \eqref{eq:to.contradict.the.global.blowup} and \eqref{blow.up.global} contradict each other. It follows that \eqref{localEE.contra} does not hold. \qedhere
\end{proof}

\subsection{The reflection map and power-law tails along the Cauchy horizon}\label{sec.reflection}

We now turn to the proof of the second statement of Theorem~\ref{thm:main}. Inspecting the argument for statement (1) of Theorem~\ref{thm:main} in Section~\ref{sec.transmission}, one sees that as long as we can show $\mathfrak R(0)\neq 0$, the remainder of the argument proceeds in an identical manner. 

The fact that $\mathfrak R(0)\neq 0$ is an immediate consequence of part (4) of Proposition~\ref{prop:basic.T.R}. (Notice that unlike the corresponding statement for $\mathfrak T(0)$, the conservation law in part (3) of Proposition~\ref{prop:basic.T.R} does not provide any lower bound for $\mathfrak R$.) Once we know that $\mathfrak R(0)\neq 0$, the remaining steps of the proof proceed in exactly the same manner as Propositions~\ref{prop:global.blowup} and \ref{prop:local.blowup}; we omit the details. (Note that in this case, the analogue of Proposition~\ref{prop:local.blowup} already gives the desired result and we do not need to consider an analogue of Proposition~\ref{prop:final.blowup}.) With this, we also conclude the proof of Theorem~\ref{thm:main}.

\subsection{Proof of Corollary~\ref{cor:main}}\label{sec:pf.cor.main}

\begin{proof}[Proof of Corollary~\ref{cor:main}]
We prove Corollary~\ref{cor:main} by reducing it to Theorem~\ref{thm:main}. The main difference between Corollary~\ref{cor:main} and Theorem~\ref{thm:main} is that in Corollary~\ref{cor:main}, the solution is only defined on a subset of $\mathcal M_{RN}$. In particular, Corollary~\ref{cor:main} does not impose the support properties assumptions (1)--(2) in Theorem~\ref{thm:main}. We will therefore extend $\phi$ to all of $\mathcal M_{RN}$ and then use the finite speed of propagation to reduce to Theorem~\ref{thm:main}

More precisely, given $\phi$ satisfying only the assumptions of Corollary~\ref{cor:main}, we want to construct a solution $\widetilde{\phi}$ to $\Box_{g_{RN}} \widetilde{\phi} = 0$ which is smooth on $\mathcal M_{RN} \cup \EH_{total}$ and satisfies
\begin{align}
	\widetilde{\phi}(u,v) = \phi(u,v) & \mbox{\quad when $u\leq u_s$ and $v\geq 1$}, \label{eq:tildephi.1}\\
	\widetilde{\phi}\restriction_{\EH_1}(v) = 0 & \mbox{\quad when $v \leq 0$}, \label{eq:tildephi.2}\\
	\widetilde{\phi}\restriction_{\EH_2}(u) = 0 & \mbox{\quad when $u \geq 0$}. \label{eq:tildephi.3}
\end{align}

We complete the proof of the corollary assuming for the moment that such a $\widetilde{\phi}$ exists. By \eqref{eq:tildephi.1}, assumptions (3)--(6) of Theorem~\ref{thm:main} are satisfied by $\widetilde{\phi}$ (since the asymptotics are not changed at all). By \eqref{eq:tildephi.2}, assumption (1) of Theorem~\ref{thm:main} is satisfied by $\widetilde{\phi}$; while by \eqref{eq:tildephi.3}, assumption (2) of Theorem~\ref{thm:main} is satisfied by $\widetilde{\phi}$. It therefore follows that Theorem~\ref{thm:main} can be applied to $\widetilde{\phi}$ so that 
\eqref{eq:main.instab.conclu} and \eqref{eq:main.CH.conclu} both hold, but for $\widetilde{\phi}$ instead of $\phi$. Using \eqref{eq:tildephi.1} again, this then implies \eqref{eq:main.instab.conclu} and \eqref{eq:main.CH.conclu} both hold for $\phi$.

It remains to construct a smooth solution $\widetilde{\phi}$ to $\Box_{g_{RN}}\widetilde{\phi} = 0$ satisfying \eqref{eq:tildephi.1}--\eqref{eq:tildephi.3}. This is achieved in the following two steps.

\pfstep{Step~1: Construction of $\widetilde{\phi}$ in $\{(u,v): u \leq u_s,\, v\in \mathbb R\}$} We first define, on the event horizon $\EH_1$, $\widetilde{\phi}\restriction_{\EH_1}$ to be a smooth function so that $\widetilde{\phi}\restriction_{\EH_1}(v) = {\phi}\restriction_{\EH_1}(v)$ for $v \geq 1$ and \eqref{eq:tildephi.2} holds. Define also $\widetilde{\phi}$ on the line $\{(u,v): u\leq u_s, v = 1\}$ so that $\widetilde{\phi}\restriction_{\{(u,v): u\leq u_s, v = 1\}} = \phi\restriction_{\{(u,v): u\leq u_s,\, v = 1\}}$.

We now solve the characteristic initial problem for $\Box_{g_{RN}}\widetilde{\phi} = 0$ with data imposed above. First, by solving the characteristic initial value problem, we define a solution in $\{(u,v): u\leq u_s,\,v\geq 1\}$, which obeys \eqref{eq:tildephi.1} by a domain of dependence argument. Then, we solve again the characteristic initial value problem\footnote{We used spherical symmetry here to show that this characteristic initial value problem is well-posed. However, notice that Corollary~\ref{cor:main} can also be proven with an argument that generalizes outside spherical symmetry. For instance, after introducing appropriate cut-offs, we can show using energy estimates that the cutoff error only gives finite contributions to \eqref{eq:main.instab.conclu.prime} and \eqref{eq:main.CH.conclu.prime}.}, but now with data on $\{(u,v): u\leq u_s,\, v=1\}$ and $\{(u,v): u=-\infty,v\leq 1\}$ so that we obtain a solution in the region $\{(u,v): u\leq u_s,\,v\leq 1\}$ which contains $\mathcal B_{\EH}$. 

We have thus constructed $\widetilde{\phi}$ in $\{(u,v): u \leq u_s,\, v\in \mathbb R\}$. Importantly, \eqref{eq:tildephi.2} is satisfied by definition of the initial data, and $\widetilde{\phi}$ is smooth in $\{(u,v): u \leq u_s,\, v\in \mathbb R\}$ up to the event horizon.

\pfstep{Step 2: Construction of $\widetilde{\phi}$ in $\{(u,v): u \geq u_s,\, v\in \mathbb R\}$} To define $\widetilde{\phi}$ in the remaining region, we again solve a characteristic initial value problem. First, we take $\widetilde{\phi}\restriction_{\{(u,v): u=u_s\}}$ to be as given by the construction in Step~1. Define also $\widetilde{\phi} \restriction_{\EH_2\cap \{(u,v): u\geq u_s\} }$ so that $\widetilde{\phi} \restriction_{\EH_2}$ is smooth and $\widetilde{\phi} \restriction_{\EH_2}(u) = 0$ for $u\geq 0$.

Next, we solve the characteristic initial value problem with the data given above to obtain a solution in the region $\{(u,v): u \geq u_s,\, v\in \mathbb R\}$. The resulting $\widetilde{\phi}$ is therefore a solution to the wave equation and is smooth in $\mathcal M_{RN} \cup \EH_{total}$. By definition of the data, \eqref{eq:tildephi.3} is satisfied. Combining this with Step~1, in which we proved \eqref{eq:tildephi.1}--\eqref{eq:tildephi.2}, we have thus concluded the construction. \qedhere
\end{proof}

\section{Application to mass inflation for the Einstein--Maxwell--scalar field system in spherical symmetry}\label{sec.application}

In this section, we apply our Theorem~\ref{thm:main} to the problem of \textbf{mass inflation} for the Einstein--Maxwell--scalar field system \eqref{EMSFS} in spherical symmetry. This proves Theorem~\ref{thm:intro.MI.2}. See Theorem~\ref{mass.infl.thm} in Section~\ref{sec:main.result.mass.inflation} for the precise statement that we prove.

We will first recall the definition and various facts about the Einstein--Maxwell--(real) scalar field system in spherical symmetry in \textbf{Section~\ref{sec.SS}}. In \textbf{Section~\ref{sec:the.old.LO.results}}, we then recall the results established in \cite{LO.interior, LO.exterior}. In \textbf{Section~\ref{sec:main.result.mass.inflation}}, we state our main result on mass inflation (Theorem~\ref{mass.infl.thm}). In \textbf{Section~\ref{sec:MI.crit}}, we begin the proof of Theorem~\ref{mass.infl.thm} by proving a mass inflation criterion which reduces the proof of Theorem~\ref{mass.infl.thm} to showing that the scalar field is not identically $0$ at the Cauchy horizon. In \textbf{Section~\ref{sec:EE.inhom}}, we establish general energy estimates for inhomogeneous wave equations on Reissner--Nordstr\"om. This in particular allows us to use our main linear result (part (2) of Corollary~\ref{cor:main}) in a perturbative argument. Finally, in \textbf{Section~\ref{sec:proofofmassinflation}}, we conclude the proof of Theorem~\ref{mass.infl.thm}. 

\subsection{Einstein--Maxwell--(real) scalar field system in spherical symmetry}\label{sec.SS}

We first discuss spherically symmetric solutions to the Einstein--Maxwell--(real) scalar field system \eqref{EMSFS}. The following definition is directly from \cite{LO.interior}.
\begin{definition}[Spherically symmetric solutions]\label{def.SS}
Let $(\mathcal M,g,\phi,F)$ be a suitably regular\footnote{The precise regularity is irrelevant here. For relevant well-posedness statements, see \cite[Propositionss~2.4, 2.5]{LO.interior}.} solution to the Einstein--Maxwell--(real)--scalar--field system \eqref{EMSFS}. We say that $(\mathcal M,g,\phi,F)$ is \emph{spherically symmetric} if the following properties hold:
\begin{enumerate}
\item The symmetry group $SO(3)$ acts on $(\mathcal M,g)$ by isometry with spacelike orbits.
\item The metric $g$ on $\mathcal M$ is given by
\begin{equation}\label{SS.metric.1}
g=g_{\mathcal Q}+r^2 g_{\mathbb S^2},
\end{equation}
where
\begin{equation}\label{SS.metric.2}
g_{\mathcal Q}=-\f{\Omg^2 }{2}(du\otimes dv+dv\otimes du)
\end{equation}
is a Lorentzian metric on the $2$-dimensional manifold $\mathcal Q=\mathcal M/SO(3)$ and $r$ is defined to be the area radius function of the group orbit, i.e.,
$$r=\sqrt{\f{\mbox{Area}({\boldsymbol \pi}^{-1}(p))}{4\pi}},$$
for every $p\in \mathcal Q$, where ${\boldsymbol \pi}$ is natural projection ${\boldsymbol \pi}:\mathcal M\to \mathcal Q$ taking a point to the group orbit it belongs to. Here, as in the introduction, $g_{\mathbb S^2}$ denotes the standard round metric on $\mathbb S^2$ with radius $1$.
\item The function $\phi$ at a point $x$ depends only on ${\boldsymbol \pi}(x)$, i.e.,~for $p\in \mathcal Q$ and $x,y\in {\boldsymbol \pi}^{-1}(p)$, it holds that $\phi(x)=\phi(y)$.
\item The Maxwell field $F$ is invariant under pullback by the action (by isometry) of $SO(3)$ on $\calM$.
Moreover, there exists ${\bf e}:\mathcal Q\to \mathbb R$ such that
$$F=\f{\bfe}{2({\boldsymbol \pi}^* r)^2}{\boldsymbol \pi}^*(\Omg^2\,du\wedge dv).$$
\end{enumerate}
\end{definition}

It is well-known that for a solution to \eqref{EMSFS}, ${\bf e}$ is in fact a constant.

In spherical symmetry, the Einstein-Maxwell-(real)-scalar-field system reduces to the following system of coupled wave equations for $(r,\phi,\Omega)$
\begin{equation}\label{WW.SS}
\begin{cases}
\rd_u\rd_v r=-\f {\Omg^2}{4 r}-\f{\rd_u r\rd_v r}{r}+\f{\Omg^2 {\bf e}^2}{4r^3},\\
\rd_u\rd_v \phi=-\f{\rd_v r\rd_u\phi}{r}-\f{\rd_u r\rd_v\phi}{r},\\
\rd_u\rd_v\log\Omg=-\rd_u\phi\rd_v\phi-\f {\Omg^2 {\bf e}^2}{2 r^4}+\f{\Omg^2}{4r^2}+\f{\rd_u r\rd_v r}{r^2}.
\end{cases}
\end{equation}

The solution moreover satisfy the following Raychaudhuri equations:
\begin{equation}\label{eqn.Ray}
\begin{cases}
\rd_v(\f{\rd_v r}{\Omg^2})=-\f{r(\rd_v\phi)^2}{\Omg^2},\\
\rd_u(\f{\rd_u r}{\Omg^2})=-\f{r(\rd_u\phi)^2}{\Omg^2}.
\end{cases}
\end{equation}
In the context of the characteristic initial value problem, it can be easily shown that if \eqref{eqn.Ray} are initially satisfied, then they are propagated by \eqref{WW.SS}.

We will often use the following short hand for $\rd_v r$ and $\rd_u r$:
$$\lambda \doteq \rd_v r,\quad \nu \doteq \rd_u r.$$

Define the \emph{Hawking mass} $m:\mathcal Q\to \mathbb R$ by
$$m \doteq \f r2(1-g_{\mathcal Q}(\nabla r,\nabla r)),$$
where $g_{\mathcal Q}$ is as defined in \eqref{SS.metric.2}. Alternatively, we can write
$$m = \f r 2(1-\f{4\rd_u r\rd_v r}{\Omg^2}).$$
Finally, define the \emph{renormalized Hawking mass} $\varpi:\mathcal Q\to \mathbb R$ by 
\begin{equation}\label{eq:renormalized}
\varpi = m + \f{{\bf e}}{r^2}.
\end{equation}
Combining \eqref{WW.SS} and \eqref{eqn.Ray} (see also \cite[equation (2.8)]{LO.interior}), it is easy to deduce that
\begin{equation}\label{eq:varpi}
\rd_u\varpi=-\f{2 r^2 \rd_v r(\rd_u\phi)^2}{\Omg^2}, \quad \rd_v\varpi=-\f{2 r^2 \rd_u r(\rd_v\phi)^2}{\Omg^2}.
\end{equation}

\subsection{Results in \cite{LO.interior, LO.exterior}}\label{sec:the.old.LO.results}

In this subsection, we recall some results in the earlier works \cite{LO.interior, LO.exterior}. These works show that generic solutions are $C^2$-future inextendible. However, these works by themselves do not guarantee that mass inflation holds.

For expositional purposes, let us only focus on the case where the initial data have compactly supported scalar fields. In the works \cite{LO.interior, LO.exterior}, one can also allow polynomially decaying but non-compactly supported initial scalar fields. It is straightforward to obtain modifications of our Theorem~\ref{mass.infl.thm} in that setting; we omit the details.

In \cite{LO.interior}, we introduced a notion of \emph{generic} two-ended asymptotically flat future-admissible spherically symmetric Cauchy data for \eqref{EMSFS}. Since the precise definition is largely irrelevant to the remainder of the paper, we will not repeat the definitions and refer the reader to \cite[Sections~3.1--3.3]{LO.interior} instead. For the purpose of this discussion, let us just denote by $\mathcal G_c$ the set of generic data introduced in \cite{LO.interior} which moreover \emph{are smooth and have compactly supported initial scalar field}. The theorems that we cite below apply in particular to solutions arising from initial data in $\mathcal G_c$.

\subsubsection{A priori boundary characterization} We first state a preliminary result regarding the a priori boundary characterization of solutions. This allows us to talk about various regions and boundary components of the solutions. In the statement of the following theorem, we will only refer to the a priori boundary characterization in terms of the corresponding Penrose diagram. To make precise all the relevant notions needed for the Penrose diagram will take us too far afield. Instead, we refer the reader to \cite[Theorem~4.1]{LO.interior}.
\begin{theorem}[Persistence of the Cauchy horizon (Dafermos \cite{D2}, Dafermos--Rodnianski \cite{DRPL}) and the boundary characterization of the solution (Dafermos \cite{D3}, Kommemi \cite{Kommemi})]\label{thm:Kommemi}
Consider a two-ended asymptotically flat future-admissible spherically symmetric initial data set in $\mathcal G_c$ and let $(\calM, g, \phi, F)$ be the corresponding maximal globally hyperbolic future development.

Then $(\calM, g)$ has one of the following two Penrose diagrammatic representations:
\begin{figure}[h] \label{fig:main.structure}
\begin{center}
\def\svgwidth{450px}
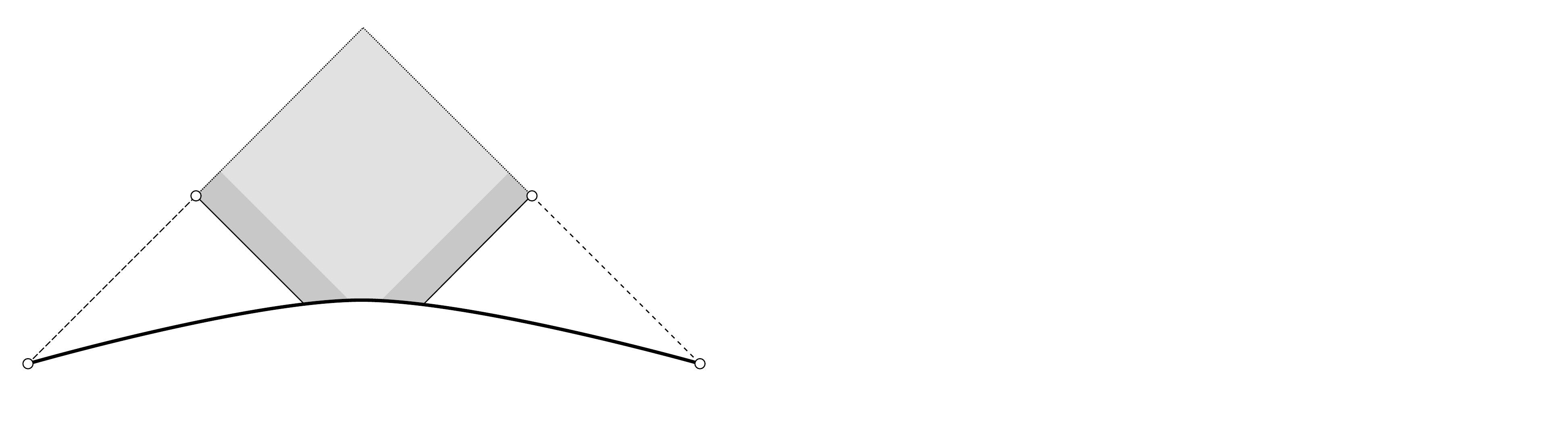 
\caption{Penrose diagram of the maximal globally hyperbolic future development of future admissible initial data} 
\end{center}
\end{figure}
\end{theorem}

We note already in terms of the boundary components defined above, \emph{mass inflation} will mean that $m=+\infty$ identically on both $\CH_1$ and $\CH_2$. Since the situation is completely symmetric for $\CH_1$ and $\CH_2$, from now on, we focus our discussion on $\CH_1$. Completely analogous statements also hold for $\CH_2$.

\subsubsection{Stability of the Cauchy horizon} Already implicit in Theorem~\ref{thm:Kommemi} is a statement (proven in \cite{D2}) that solutions arising from generic initial data set in $\mathcal G_c$ must have ``a piece'' of (null) Cauchy horizon. In fact, in the ``rectangular strip'' $\mathscr B_{i^+_1}$ near the event horizon $\EH_1$ (in either diagram) in Figure~1, the solution approaches Reissner--Nordstr\"om toward $i_1^+$ with a quantitative inverse polynomial rate. 

To summarize this stability result, let us first introduce the following change of coordinates.
\begin{definition}\label{def:u.v.system.from.U.v}
Given a system of null coordinates $(U,v)$ in $\{(U,v): 0\leq U\leq U_s,\, 1\leq v\leq +\infty\}$, construct a new system of coordinates $(u,v)$ by defining a new coordinate function $u = u(U)$ so that the inverse function $U = U(u)$ satisfies the following (compare \eqref{U.EH.def})
\begin{equation}\label{U.EH.def.nonlinear}
\f{dU}{du}=e^{2\kappa_+ u} \mbox{ and }\, U(u)\to 0 \mbox{ as }u\to -\infty,
\end{equation}
where $\kpp_{+} = \frac{r_{+}-r_{-}}{2 r_{+}^{2}}$ and $r_{\pm} = M \pm \sqrt{M^{2}-\bfe^{2}}$ as in \eqref{eq:def.kappa}, and $M$ and $\bfe$ shall be specified when the definition is applied.
Moreover, define $u_s$ by
\begin{equation}\label{eq:def.us}
u_s \doteq u(U_s).
\end{equation}
\end{definition}

The following theorem can be obtained by combining Theorems~4.4~and~5.1 in \cite{LO.interior}.

\begin{theorem}[Stability of the Cauchy horizon]\label{main.theorem.C0.stability}
Consider a two-ended asymptotically flat future-admissible spherically symmetric initial data set in $\mathcal G_c$ and let $(\calM, g, \phi, F)$ be the corresponding maximal globally hyperbolic future development.

Then there exist 
\begin{itemize}
\item constants $M$ and ${\bf e}$ with $0<|{\bf e}|<M$, 
\item a double null coordinate system $(U,v)$ on $\mathcal Q = \mathcal M/SO(3)$, and
\item a spacetime region $\{ (U,v): 0\leq U \leq U_s,\, 1\leq v < +\infty\}$ (in a double null coordinate system above) in the black hole interior
\end{itemize} 
such that the solution settles down to the Reissner--Nordstr\"om interior with parameters $M$ and ${\bf e}$.

More precisely, after
\begin{itemize} 
\item introducing a system of double null coordinate $(u,v)$ as in Definition~\ref{def:u.v.system.from.U.v} with the above values of $M$ and $\bfe$, (and assuming that $u_s\leq -1$), and
\item letting $r_{RN}$ and $\Omg_{RN}$ denote the functions of $(u,v)$ corresponding to the metric components of the Reissner--Nordstr\"om interior of parameters $M$ and ${\bf e}$ in the coordinate system in Section~\ref{sec.null.1},
\end{itemize}
the following estimates hold for every $\rho <3$ in the $(u,v)$ coordinate system in $\{(u,v): -\infty< u< u_s,\, 1\leq v<+\infty\}$, for some constant $C>0$ (depending on $\rho$):
$$|\phi|(u,v)+|r-r_{RN}|(u,v)+|\log\Omg-\log\Omg_{RN}|(u,v) \leq C(v^{-\rho}+|u|^{-\rho+1}),$$
$$|\rd_v\phi|(u,v)+|\rd_v(r-r_{RN})|(u,v)+|\rd_v(\log\Omg-\log\Omg_{RN})|(u,v) \leq Cv^{-\rho}.$$
Furthermore, for every $A\in \mathbb R$, there exists $C>0$ depending on $A$ and $\rho$ such that the following estimates hold $\{(u,v): -\infty< u< u_s,\, 1\leq v<+\infty\}$:
$$|\rd_u\phi|(u,v)+|\rd_u(r-r_{RN})|(u,v)+|\rd_u(\log\Omg-\log\Omg_{RN})|(u,v) \leq 
\begin{cases}
C\Omg_{RN}^2 v^{-\rho} &\mbox{for }u+v\leq A\\
C|u|^{-\rho} &\mbox{for }u+v\geq A.
\end{cases}
$$
\end{theorem}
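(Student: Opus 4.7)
The plan is to derive the interior decay estimates by propagating polynomial decay from the event horizon $\EH_1$ into the interior via a bootstrap argument on the coupled system \eqref{WW.SS}--\eqref{eqn.Ray}. The input on $\EH_1$ is the polynomial decay of $\phi$ and $\rd_v\phi$ in $v$ at any rate below $v^{-3}$, which is the content of Dafermos--Rodnianski \cite{DRPL}; it yields, via an ODE analysis along $\EH_1$ (using $\rd_u r\restriction_{\EH_1}=0$ and the Raychaudhuri equation), the corresponding convergence of the metric quantities $r$ and $\log\Omega$ along $\EH_1$ to the Reissner--Nordstr\"om values with limiting parameters $(M,\bfe)$ that determine the coordinate system via Definition~\ref{def:u.v.system.from.U.v}.

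First I would set up the bootstrap region $\{(u,v): u \le u_{s},\, v\ge 1\}$ with bootstrap assumptions encoding exactly the bounds in the conclusion, with a slightly worse constant, for the differences $\phi$, $r-r_{RN}$, $\log\Omega-\log\Omega_{RN}$ and their first derivatives. The main strategy is to treat the equations for $\phi$ as a spherically symmetric wave equation on a dynamical background close to Reissner--Nordstr\"om, and then to recover $r$ and $\Omega$ from the Raychaudhuri equations \eqref{eqn.Ray} and from $\rd_u\rd_v r$ in \eqref{WW.SS}, which under the bootstrap assumptions have schematic form ``Reissner--Nordstr\"om source'' plus quadratic error in $\rd\phi$. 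Integrating the Raychaudhuri equations along outgoing/ingoing null segments starting from $\EH_1$, where the initial data are close to Reissner--Nordstr\"om, produces the desired closeness for $\rd_v r$ and $\rd_v\log\Omega$ with weight $v^{-\rho}$.

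Next, to obtain the scalar field bounds, I would integrate $\rd_u\rd_v\phi = -r^{-1}(\rd_v r\,\rd_u\phi + \rd_u r\,\rd_v\phi)$ twice, first in $u$ from $\EH_1$ and then in $v$ from $v=1$, exploiting the crucial sign of $\rd_u r / r \approx -\Omega_{RN}^{2}/(4 r)$ which yields a red-shift effect near $\EH_1$ (cost of order $\Omega_{RN}^2 \lesssim e^{2\kpp_+ u}$ uniformly in $v$, see \eqref{eq:Omg.near.H}). This red-shift effect is what allows the $\rd_u\phi$ bound to be gained by the factor $\Omega_{RN}^2$ in the region $u+v\le A$. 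In the far region $u+v\ge A$, one instead uses the $|u|^{-\rho}$ decay coming from integrating the transport equations in $v$ and using $\int_1^\infty v^{-\rho}\,dv$ to produce the shift in exponent ($|u|^{-\rho+1}$ for $\phi$ itself versus $|u|^{-\rho}$ for $\rd_u\phi$). Throughout, the smallness of $\Omega_{RN}^{2}$ near $\CH_1$ (see \eqref{eq:Omg.near.CH}) ensures that the nonlinear error terms in the wave equation for $\phi$ are integrable in $u$ uniformly in $v$, which is what prevents the naive blowup of $\rd_u\phi$ near the Cauchy horizon.

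The main obstacle I anticipate is closing the bootstrap for the \emph{renormalized Hawking mass} or equivalently for $\log\Omega-\log\Omega_{RN}$: the equation for $\log\Omega$ in \eqref{WW.SS} contains the quadratic term $\rd_u\phi\,\rd_v\phi$, and although each factor decays polynomially, the product must be integrated over the infinite double-null rectangle $\{u\le u_{s},v\ge 1\}$ to recover the pointwise bound, requiring both factors to be integrable in their respective directions with matching weights. The strategy is to use that $\rd_u\phi$ is effectively $O(\Omega_{RN}^{2}v^{-\rho})$ in the bulk, so the integral $\int^{u} \Omega_{RN}^{2}\,du' \lesssim 1$ is uniform, while $\rd_v\phi$ provides the decaying weight in $v$. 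Combining these bounds in the right order, one should close the bootstrap for $\rho<3$ exactly when the input decay rate on $\EH_1$ is also $v^{-\rho}$ with $\rho<3$, matching the sharp threshold of \cite{DRPL}.
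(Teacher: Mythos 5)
The paper does not actually prove this theorem: it is imported from the earlier works with the one-line justification that it ``can be obtained by combining Theorems~4.4 and 5.1 in \cite{LO.interior}'', so there is no in-paper argument to compare yours against. What you have written is a reconstruction of the strategy of \cite{D2,LO.interior}, and at the level of architecture it is the right one: Price-law upper bounds on $\EH_1$ from \cite{DRPL} as input, a bootstrap on the differences $\phi$, $r-r_{RN}$, $\log\Omg-\log\Omg_{RN}$, the red-shift gain of $\Omg_{RN}^2$ for $\rd_u$-quantities near $\EH_1$, and the uniform bound $\int \Omg_{RN}^2\,du \ls 1$ (which is exact, since $\rd_u r_{RN}=-\Omg_{RN}^2/4$) to tame the quadratic interaction $\rd_u\phi\,\rd_v\phi$ in the $\log\Omg$ equation.

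As a proof, however, the proposal has gaps and a few misstatements. First, $\rd_u r$ does not vanish on the dynamical event horizon (that would characterize an apparent horizon, not $\EH_1$); the ODE analysis along $\EH_1$ that pins down the limiting parameters $(M,\bfe)$ instead uses the signs $\rd_u r\le 0\le \rd_v r$ there, the monotonicity \eqref{eq:varpi} of the renormalized Hawking mass, and the decay of $\rd_v\phi$. Second, the passage from $|u|^{-\rho}$ for $\rd_u\phi$ to $|u|^{-\rho+1}$ for $\phi$ comes from integrating $\rd_u\phi$ over the semi-infinite $u$-range, not from $\int_1^\infty v^{-\rho}\,dv$, which is merely a constant. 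Third, and most substantively, the two-region split $u+v\lessgtr A$ stands in for the genuinely delicate part of \cite{LO.interior}: there the interior is partitioned into red-shift, no-shift, and early/late blue-shift regions, each with its own hierarchy of weights, precisely because near $\CH_1$ the bound $|\rd_u\phi|\ls \Omg_{RN}^2 v^{-\rho}$ cannot be propagated (it would decay exponentially in $v$ at fixed $u$) and one must hand off to the $|u|^{-\rho}$ hierarchy without degrading the bootstrap constants; closing that handoff, while simultaneously verifying that the bootstrap region is nonempty, open and closed and actually reaches $\CH_1$, is where essentially all of the work in \cite{LO.interior} lies. Your plan correctly identifies the destination but not the route through these points.
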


\subsubsection{Lower bound of the scalar field along the event horizon} Next, we state a lower bound for the scalar field along the event horizon, which holds for solutions arising from a generic initial set in $\mathcal G_c$; see Theorem~\ref{final.blow.up.step}. This result requires a genuine genericity condition, in addition to just requiring the charge ${\bf e} \neq 0$. 

The lower bound in Theorem~\ref{final.blow.up.step} is intimately connected to the blowup in the black hole interior and the generic $C^2$-inextendibility proven in \cite{LO.interior, LO.exterior}.

\begin{theorem}[Lower bound along the event horizon]\label{final.blow.up.step}
Consider a two-ended asymptotically flat future-admissible spherically symmetric initial data set in $\mathcal G_c$ and let $(\calM, g, \phi, F)$ be the corresponding maximal globally hyperbolic future development. 
Then for an advanced null coordinate $v$ such that
\begin{equation*}
	C^{-1} < \inf_{\EH_1} \frac{\rd_{v} r}{1-\frac{2m}{r}} \leq \sup_{\EH_1} \frac{\rd_{v} r}{1-\frac{2m}{r}} < C
\end{equation*}
for some $C > 0$, we have\footnote{Note that in \cite{LO.interior}, we have the stronger result that for any $\alp>7$,
$$\int_{\EH_1} v^{\alp} (\rd_{v} \phi)^{2} \, \ud v = +\infty.$$
We will not need this stronger statement.
}
\begin{equation}\label{eq:mass.inflation.lower.bd}
	\int_{\EH_1} v^{8} (\rd_{v} \phi)^{2} \, \ud v = +\infty.
\end{equation}
\end{theorem}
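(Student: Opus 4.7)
The plan is to establish \eqref{eq:mass.inflation.lower.bd} via a generic non-vanishing statement for a leading-order coefficient in the late-time expansion of $\partial_v \phi$ on $\EH_1$. Since $v^{-4}$ is exactly borderline non-integrable against the weight $v^{8}$, it suffices to show that, outside a non-generic set of Cauchy data, $\partial_v \phi|_{\EH_1}$ satisfies an asymptotic expansion of the form $\partial_v \phi|_{\EH_1}(v) = c\, v^{-4}(1+o(1))$ with $c \neq 0$. The proof has to be carried out in the nonlinear Einstein--Maxwell--scalar field setting, but it will ultimately reduce to a linear late-time-tail computation on a Reissner--Nordstr\"om background with the final Bondi parameters $(M, \bfe)$.

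First, I would invoke the Dafermos--Rodnianski exterior stability framework in $\Ext_1$ to obtain Price's-law-type upper bounds. Using $r^{p}$-weighted energy estimates together with commutators by $T$ and an appropriate red-shift/Morawetz multiplier, one derives $|\partial_v(r\phi)|_{\EH_1}(v)| \lesssim v^{-4+\delta}$ for every $\delta > 0$, as well as corresponding decay for $T^k \phi$. These same estimates also control the difference between the exterior dynamical metric and the Reissner--Nordstr\"om exterior with parameters $(M,\bfe)$ in a weighted $C^{1}$-in-$r$ sense.

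Second, I would extract the leading coefficient by an iterated integration of the spherically symmetric wave equation. Setting $\psi = r\phi$ and writing the equation as $\partial_u \partial_v \psi = -V(r)\psi + \mathcal{E}$, where $V = O(r^{-3})$ is the short-range potential on the Reissner--Nordstr\"om exterior and $\mathcal{E}$ gathers all the nonlinear corrections coming from $g - g_{RN}$, I would integrate first from $\mathcal{I}^{+}$ inward along the outgoing null cones and then from $\Sigma_{0}$ forward along the ingoing null cones. Using the compactness of the support of $\phi$ on $\Sigma_0$ and the decay bounds from the first step to absorb $\mathcal{E}$ into $o(v^{-4})$, this yields an explicit asymptotic formula
\begin{equation*}
\partial_v \phi|_{\EH_1}(v) = \frac{c[\phi_{0}, g_{0}]}{v^{4}} + o(v^{-4}),
\end{equation*}
where $c[\phi_0,g_0]$ is a non-trivial continuous functional of the initial data involving a specific weighted moment of $\phi_0$. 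The non-generic subset of $\mathcal{G}_c$ is then precisely $\{c[\phi_0,g_0]=0\}$, which is closed with empty interior (in the natural weighted $C^{k}$ topology), so its complement is open and dense; for data in the complement, squaring the expansion gives $(\partial_v \phi)^{2}|_{\EH_1}(v) \gtrsim c^{2} v^{-8}$ for $v \geq v_{0}$, and hence $\int_{\EH_1} v^{8}(\partial_v \phi)^{2}\, \ud v = +\infty$.

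The main obstacle is the control of the nonlinear error $\mathcal{E}$: the background is only asymptotically Reissner--Nordstr\"om in a weak, weighted sense, and the potential $V(r)$ depends on the dynamical $r$. The tail-expansion analysis must therefore be implemented by a bootstrap in which the final parameters $(M,\bfe)$ are determined self-consistently, and at every order of iteration one must exploit the Dafermos--Rodnianski stability estimates to absorb $g - g_{RN}$. This is precisely the technical core of the exterior analysis in \cite{DRPL, LO.exterior}, and everything in the argument above is tuned so that the error generated by $\mathcal{E}$ is strictly better than $v^{-4}$ and thus does not contaminate the leading coefficient $c$.
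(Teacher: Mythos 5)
There is a genuine gap, and it is flagged by the paper itself. First, note that the paper does not prove Theorem~\ref{final.blow.up.step} at all: it is quoted from \cite{LO.interior, LO.exterior}, so the relevant comparison is with the argument there. Your strategy is to establish a precise late-time expansion $\rd_v\phi\restriction_{\EH_1}(v) = c\,v^{-4}(1+o(1))$ with generically non-vanishing $c$, for the \emph{nonlinear} Einstein--Maxwell--scalar field system. If you could do this, you would in particular obtain the pointwise lower bound \eqref{eq:Dafermos.condition} with $p=4<9$, hence unconditional mass inflation via Dafermos's Theorem~\ref{thm:intro.dafermos} --- but the paper states explicitly (Remark~\ref{rmk:expectation.AAG}) that this pointwise lower bound ``remains an open problem,'' and the entire point of Section~\ref{sec.application} is to get a \emph{conditional} mass inflation result precisely because such tail asymptotics are not available. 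Your proposal therefore proves far more than the theorem and contradicts the paper's own account of the state of the art; the place it breaks is concrete: your Step 1 already asserts $|\rd_v(r\phi)\restriction_{\EH_1}|\ls v^{-4+\dlt}$ from $r^p$-weighted estimates, whereas the known nonlinear upper bounds of \cite{DRPL} are substantially weaker (the paper emphasizes that ``the upper bounds proven in \cite{DRPL} are worse than those expected,'' and the improved decay \eqref{add.assumption.0}--\eqref{add.assumption} is an unproven \emph{hypothesis} of Theorem~\ref{mass.infl.thm}). Step 2, identifying the leading coefficient by iterated integration while ``absorbing $\mathcal E$ into $o(v^{-4})$,'' is the hard core of late-time asymptotics; it is delicate even for the linear equation on a fixed Reissner--Nordstr\"om background (\cite{AAG, HintzPriceLaw}) and is not known nonlinearly.

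The actual proof in \cite{LO.interior, LO.exterior} avoids tail expansions entirely and only ever produces the $L^2$-averaged statement \eqref{eq:mass.inflation.lower.bd}. Roughly, one establishes a dichotomy: either the weighted $L^2$ lower bound holds along $\EH_1$, or a certain explicitly computable quantity associated with the asymptotics of the solution towards null infinity vanishes; genericity is then the statement that this quantity is a continuous, non-trivially varying functional of the Cauchy data, so its zero set is closed with empty interior. The soft genericity step in your write-up is in the same spirit, but the analytic input is completely different: no leading-order coefficient of a $v^{-4}$ tail is ever extracted, which is exactly why the conclusion is an averaged rather than pointwise lower bound. As written, your argument cannot be completed with currently available estimates.
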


\subsubsection{Blow-up at $\CH_1$}\label{sec:CH.blowup.from.old.paper}

In order to describe the result concerning blow-up in the black hole interior, we introduce yet another double null coordinate system. 
\begin{definition}\label{def:uV}
We introduce the coordinate as follows. 
\begin{enumerate}
\item We start with the coordinate system $(U,v)$ as in Theorem~\ref{main.theorem.C0.stability}, except that $U$ is defined for the whole $\CH_1$, beyond the perturbative region $U\leq U_s$. 
\item Given the coordinate system $(U,v)$, define $(u,v)$ as in Definition~\ref{def:u.v.system.from.U.v}.
\item Define $V = V(v)$ by the following relation
\begin{equation}\label{V.CH.def.1}
\f{dV}{dv}=e^{-2\kappa_- v}\mbox{ and }V(v)\to 1\mbox{ as }v\to +\infty.
\end{equation}
where $\kpp_{-} = \frac{r_{+}-r_{-}}{2 r_{-}^{2}}$ and $r_{\pm} = M \pm \sqrt{M^{2}-\bfe^{2}}$ as in \eqref{eq:def.kappa}, and $M$ and $\bfe$ are as in Theorem~\ref{main.theorem.C0.stability}
\end{enumerate}
Moreover, 
\begin{enumerate}
\item define $V_1 \doteq V(1)$, and
\item define $u_{\CH_1} \in (-\infty,\infty]$ to be such that $p_{\CH_1} = (u_{\CH_1}, 1) \in \mathcal Q$ corresponds to the future end-point of $\CH_1$ in the $(u,V)$ coordinates.
\end{enumerate}
\end{definition}

In this subsection $\Omg$ is taken to be the metric component in $(u,V)$ coordinates, i.e.,~$g_{\mathcal Q} = -\f{\Omg^2}2 (du \otimes dV + dV \otimes du)$.

The following theorem can be found in \cite[Theorem~5.5]{LO.interior}, specialized to $\mathcal G_c$. This is a global theorem concerning $\CH_1$, even beyond the perturbative region considered in Theorem~\ref{main.theorem.C0.stability}.
\begin{theorem} \label{thm.nonpert}
Consider a two-ended asymptotically flat future-admissible spherically symmetric initial data set in $\mathcal G_c$ and let $(\calM, g, \phi, F)$ be the corresponding maximal globally hyperbolic future development.  

In a neighborhood of $\EH_1$ in the interior of the black hole, consider the null coordinates $(u, V)$ defined as in Definition~\ref{def:uV}. Then the metric components $\Omg^{2}(u, V)$ and $r(u, V)$, as well as the scalar field $\phi(u, V)$, extend continuously to $\CH_{1} \setminus \set{p_{\CH_{1}}} = \set{(u, V) : -\infty < u < u_{\CH_{1}}, V = 1}$. The extended metric components $\Omg^{2}(u, V)$ and $r(u, V)$ are nonvanishing on $\CH_{1} \setminus \set{p_{\CH_{1}}}$.

Moreover, if the lower bound \eqref{eq:mass.inflation.lower.bd} holds on $\EH_{1}$, then for every $u \in (-\infty, u_{\CH_{1}})$, the following blow up of $\rd_{V} \phi$ and $\rd_{V} r$ hold:
\begin{align} 
	\int_{0}^{1} \frac{(\rd_{V} \phi)^{2}}{\Omg^{2}}(u, V) d V = \infty, \label{eq:nonpert-blowup-phi} \\
	\lim_{V \to 1} \frac{\rd_{V} r}{\Omg^{2}}(u, V) = - \infty.		\label{eq:nonpert-blowup-dvr}
\end{align}
In particular, the scalar field is not in $W^{1,2}_{loc}$ and the metric is not in $C^1$ in the above $C^0$ extension obtained by adjoining $\set{(u, V) : -\infty < u < u_{\CH_{1}}, V = 1}$.
\end{theorem}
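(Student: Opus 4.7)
The proof naturally splits into three parts: the continuous extension of $\Omg^{2}, r, \phi$ to $\CH_{1} \setminus \set{p_{\CH_{1}}}$ with nonvanishing $\Omg^{2}$ and $r$ (conclusions 1--2); the blow-up \eqref{eq:nonpert-blowup-phi} of the non-degenerate energy of $\phi$ under the lower bound hypothesis on $\EH_{1}$; and the blow-up \eqref{eq:nonpert-blowup-dvr} of $\rd_{V} r/\Omg^{2}$. The first part is global in $u$ and essentially independent of the lower bound, while the latter two will be obtained by a perturbative application of the linear theory of the previous sections.

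For the continuous extension, the plan is a continuity/bootstrap argument starting from the perturbative region of Theorem~\ref{main.theorem.C0.stability}, where all relevant bounds are inherited from Reissner--Nordstr\"om. To extend beyond it, one sets up bootstrap assumptions of the form
\begin{equation*}
0 < c < r < C, \quad c < -\nu < C, \quad c < \Omg^{2} < C
\end{equation*}
on a slab $\{u_{0} \leq u \leq u_{1}, \, V \in (V_{1}, 1)\}$ with $u_{1} < u_{\CH_{1}}$, and improves them using the Raychaudhuri equations \eqref{eqn.Ray}, the wave system \eqref{WW.SS}, and the one-sidedness of $\rd_{u}\varpi$ and $\rd_{v}\varpi$ from \eqref{eq:varpi}. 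The key structural point is that $\nu < 0$, so $r$ is decreasing in $u$, and by the very definition of $u_{\CH_{1}}$ the area radius $r$ is bounded below by a positive constant throughout such a slab. Continuous extendibility of $\phi$ up to $V=1$ at fixed $u$ then follows from energy estimates along constant-$u$ lines, structurally identical to Step~3 of the proof of Theorem~\ref{scat.SS}.

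For \eqref{eq:nonpert-blowup-phi}, the idea is to combine the lower bound \eqref{eq:mass.inflation.lower.bd} with a perturbative application of Corollary~\ref{cor:main}. In the perturbative neighborhood of $\EH_{1}$ supplied by Theorem~\ref{main.theorem.C0.stability}, one rewrites $\Box_{g}\phi = 0$ as $\Box_{g_{RN}}\phi = F$, where $F$ depends on $g-g_{RN}$ and $\phi$ and their derivatives. Using the decay rates furnished by Theorem~\ref{main.theorem.C0.stability} together with the inhomogeneous energy estimates of Section~\ref{sec:EE.inhom}, one shows that $F$ contributes only a finite amount to the non-degenerate energy of $\phi$ along any constant-$u$ line inside the perturbative region. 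Hypothesis (5) of Theorem~\ref{thm:main} with $p=8$ follows directly from \eqref{eq:mass.inflation.lower.bd}, while hypothesis (6) (the improved decay of $T^{2}\phi$) must be verified from upper bounds on $T^{2}\phi$ on the event horizon, derivable in the spirit of \cite{DRPL}. Corollary~\ref{cor:main} then yields the blow-up \eqref{eq:nonpert-blowup-phi} along constant-$u$ lines in the perturbative region, and it is propagated to any $u \in (-\infty, u_{\CH_{1}})$ by a local propagation-of-regularity argument as in Proposition~\ref{prop:final.blowup}.

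Finally, \eqref{eq:nonpert-blowup-dvr} is immediate from \eqref{eq:nonpert-blowup-phi} and the Raychaudhuri equation \eqref{eqn.Ray} in $(u,V)$ coordinates: integrating
\begin{equation*}
\rd_{V}\left(\frac{\rd_{V}r}{\Omg^{2}}\right) = -\frac{r(\rd_{V}\phi)^{2}}{\Omg^{2}}
\end{equation*}
from a fixed $V_{0} < 1$ to $V \to 1$, the uniform lower bound $r > c > 0$ together with the divergence of $\int_{V_{0}}^{1}(\rd_{V}\phi)^{2}/\Omg^{2}\, dV$ forces $\rd_{V}r/\Omg^{2} \to -\infty$. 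I expect the main difficulty to lie in the perturbative transfer of Corollary~\ref{cor:main} to the nonlinear problem: the source $F$ must be controlled in weighted spaces precisely compatible with the hypotheses of Corollary~\ref{cor:main}, and the improved upper bound on $T^{2}\phi$ requires a dedicated study of the linear wave equation on the exterior with weights better than those in \cite{DRPL}, together with its propagation into the stability region.
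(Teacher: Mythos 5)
This theorem is not proved in the present paper at all: as the surrounding text states, it is quoted verbatim from \cite[Theorem~5.5]{LO.interior}, and the paper's ``proof'' is that citation. So the relevant question is whether your proposed argument would actually establish the statement, and here there is a genuine gap in the part concerning \eqref{eq:nonpert-blowup-phi}.

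You propose to derive \eqref{eq:nonpert-blowup-phi} by applying Corollary~\ref{cor:main} perturbatively. But Corollary~\ref{cor:main} needs hypotheses (4)--(6) of Theorem~\ref{thm:main}: besides the divergence \eqref{linear.lower.bd} for the \emph{smallest} even $p$, it requires the improved decay \eqref{imp.decay} for $T^{2}\phi$ at the same weight. Theorem~\ref{thm.nonpert} assumes only the lower bound \eqref{eq:mass.inflation.lower.bd}; the extra inputs you need are exactly the assumptions \eqref{add.assumption.0} and \eqref{add.assumption} of Theorem~\ref{mass.infl.thm}, which the paper explicitly flags as \emph{open} (Remark~\ref{rmk:expectation.AAG}, Remark~\ref{rmk:expectation}) and which are the sole reason Theorem~\ref{thm:intro.MI.2} is conditional. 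Your claim that the improved upper bound on $T^{2}\phi$ is ``derivable in the spirit of \cite{DRPL}'' contradicts the paper's own statement that ``the assumptions of Theorem~\ref{thm:intro.MI.2} require some improvements over the upper bounds in \cite{DRPL}.'' If your route went through unconditionally, the main nonlinear theorem of this paper would not need to be conditional. The actual proof in \cite{LO.interior} obtains \eqref{eq:nonpert-blowup-phi} by a different mechanism --- propagating the event-horizon lower bound \eqref{eq:mass.inflation.lower.bd} into the interior via the $T$-conservation-law/Dafermos-style argument (as in \cite{D2,LO.instab}) --- which requires no control on $T^{2}\phi$ and no smallest-$p$ dichotomy. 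Your treatment of the continuous extension (bootstrap from the perturbative region, monotonicity of $\varpi$, Raychaudhuri) and the deduction of \eqref{eq:nonpert-blowup-dvr} from \eqref{eq:nonpert-blowup-phi} by integrating $\rd_{V}(\rd_{V}r/\Omg^{2}) = -r(\rd_{V}\phi)^{2}/\Omg^{2}$ are both sound; the gap is confined to the derivation of \eqref{eq:nonpert-blowup-phi} from the stated hypotheses alone.
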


In the following, only \eqref{eq:nonpert-blowup-dvr} will be used in the proof of our generic mass inflation result.

\subsection{Main result on mass inflation}\label{sec:main.result.mass.inflation}

The following is our main theorem on mass inflation for the Einstein--Maxwell--(real)--scalar--field system in spherical symmetry. We emphasize again that this is a conditional result, though the condition is to be expected; see Remark~\ref{rmk:expectation} below.
\begin{theorem}[Main theorem on mass inflation]\label{mass.infl.thm}
Suppose the assumptions of Theorems \ref{final.blow.up.step} hold. \underline{Assume} in addition that 
\begin{equation}\label{add.assumption.0}
\int_{\mathcal H^+_1\cap\{v\geq 1\}} v^{6} (\rd_v\phi)^2(v)\, dv < +\infty
\end{equation}
and
\begin{equation}\label{add.assumption}
\int_{\mathcal H^+_1\cap\{v\geq 1\}} v^{8}(\rd_v^2\phi)^2(v)\, dv < +\infty. 
\end{equation}
Then the Hawking mass $m=+\infty$ on the Cauchy horizon $\CH_1$.
\end{theorem}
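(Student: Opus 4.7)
The plan has three ingredients: a \emph{mass inflation criterion} reducing the theorem to a non-degeneracy statement for the scalar field on $\CH_{1}$; the \emph{linear lower bound} of conclusion~(2') of Corollary~\ref{cor:main}; and a \emph{perturbative bridge} comparing the nonlinear $\phi$ to a solution of the linear wave equation on Reissner--Nordstr\"om. First I would establish a mass inflation criterion in Section~\ref{sec:MI.crit}, stating that in the perturbative region near $i_{1}^{+}$, if $\rd_{u}\phi \not\equiv 0$ on $\CH_{1}\cap\set{u\leq u_{s}}$ (equivalently $T\phi \not\equiv 0$ there, since $T = -\tfrac{1}{2}\rd_{u}$ on $\CH_{1}$), then the renormalized Hawking mass $\varpi$ is identically infinite on $\CH_{1}$. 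The proof combines the blow-up $\lim_{V\to 1}\rd_{V}r/\Omega^{2} = -\infty$ from \eqref{eq:nonpert-blowup-dvr} with integration of the Raychaudhuri identity $\rd_{u}\varpi = -2r^{2}\rd_{v}r(\rd_{u}\phi)^{2}/\Omega^{2}$: a non-vanishing $\rd_{u}\phi$ together with the $\rd_{V}r/\Omega^{2}$ blow-up forces a divergent contribution, and the blow-up then propagates to all of $\CH_{1}$ by monotonicity of $\varpi$. The theorem thus reduces to showing that $T\phi \not\equiv 0$ on $\CH_{1}\cap\set{u\leq u_{s}}$.

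For this, fix the coordinates $(u,v)$ from Theorem~\ref{main.theorem.C0.stability} on the perturbative region $\set{u\leq u_{s},\,v\geq 1}$ and, via these coordinates, identify this region with the corresponding region in the Reissner--Nordstr\"om spacetime $\calM_{RN}$ of parameters $M,\bfe$. Define $\phi_{\mathrm{lin}}:\calM_{RN}\to\mathbb{R}$ as the solution of $\Box_{g_{RN}}\phi_{\mathrm{lin}} = 0$ whose characteristic data on $\EH_{1}\cap\set{v\geq 1}$ agrees with $\phi|_{\EH_{1}\cap\set{v\geq 1}}$. Since $T = \tfrac{1}{2}\rd_{v}$ on $\EH_{1}$, the hypotheses of Corollary~\ref{cor:main} with $p=8$ are all verified: condition~(3) follows from the pointwise decay $|\phi|\to 0$ supplied by Theorem~\ref{main.theorem.C0.stability}; \eqref{add.assumption.0} implies $\int v^{p'}(\rd_{v}\phi)^{2}\,\ud v < \infty$ for every even $p'\leq 6$, giving (4) and the finite cases of (5); \eqref{eq:mass.inflation.lower.bd} supplies divergence at $p=8$, making $8$ the smallest such even integer; and \eqref{add.assumption} matches condition~(6) since $T^{2}\phi = \tfrac{1}{4}\rd_{v}^{2}\phi$ on $\EH_{1}$. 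Conclusion~(2') of Corollary~\ref{cor:main} then yields
\[
\int_{-\infty}^{u_{s}} (1+u^{2})^{4} \bigl(\lim_{v\to+\infty} T\phi_{\mathrm{lin}}(u,v)\bigr)^{2}\,\ud u = +\infty.
\]

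It remains to transfer this divergence to $\phi$ itself. Let $\phi_{\mathrm{err}} = \phi - \phi_{\mathrm{lin}}$ in the perturbative region; then $\phi_{\mathrm{err}}$ vanishes on $\EH_{1}\cap\set{v\geq 1}$ and satisfies the inhomogeneous equation $\Box_{g_{RN}}\phi_{\mathrm{err}} = (\Box_{g_{RN}}-\Box_{g})\phi$. The right-hand side is schematically a sum of products of metric differences ($\Omega - \Omega_{RN}$, $r - r_{RN}$, and derivatives thereof) with derivatives of $\phi$; by Theorem~\ref{main.theorem.C0.stability} the differences decay at rate $v^{-\rho}+|u|^{-\rho+1}$ for any $\rho<3$. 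I would then apply the inhomogeneous energy estimates of Section~\ref{sec:EE.inhom} --- adaptations of the proof of Theorem~\ref{scat.SS} to the presence of a source, using the non-smooth multiplier $Y$ of \eqref{Y.def} with $p_{1}=2$, $p_{2}=8$ and exploiting the favorable bulk term \eqref{bulk.term.to.prove} --- to deduce
\[
\int_{-\infty}^{u_{s}} (1+u^{2})^{4} \bigl(\lim_{v\to+\infty} T\phi_{\mathrm{err}}(u,v)\bigr)^{2}\,\ud u < +\infty.
\]
Since $T\phi = T\phi_{\mathrm{lin}} + T\phi_{\mathrm{err}}$ along $\CH_{1}$, the finiteness of the error contribution and divergence of the linear contribution force $T\phi \not\equiv 0$ on $\CH_{1}\cap\set{u\leq u_{s}}$, and the mass inflation criterion then delivers $m = +\infty$ throughout $\CH_{1}$.

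The hard part is closing the inhomogeneous energy estimate for $\phi_{\mathrm{err}}$: the target Cauchy-horizon weight $(1+|u|)^{8}$ is substantially stronger than the $|u|^{-\rho+1}$ decay of metric differences with $\rho < 3$, so the source-term bound cannot close in a single pass. The resolution is a coupled bootstrap that simultaneously controls $\phi_{\mathrm{err}}$ and higher-order weighted norms (e.g.\ of $\rd_{v}\phi$ and $\rd_{v}^{2}\phi$) in the same spaces, leveraging the exponential decay of $\Omega^{2}$ near the horizons from \eqref{eq:Omg.near.H}--\eqref{eq:Omg.near.CH} to absorb the polynomial shortfall, and using the vanishing of the characteristic data of $\phi_{\mathrm{err}}$ on $\EH_{1}\cap\set{v\geq 1}$ so that only the contribution of the source must be bounded.
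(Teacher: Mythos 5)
Your overall architecture --- a mass inflation criterion reducing the theorem to non-vanishing of $\rd_u\phi$ along a sequence $u_n\to-\infty$ on $\CH_1$, the linear lower bound from Corollary~\ref{cor:main} with $p=8$, and a perturbative comparison of $\phi$ with a linear solution $\phi_{lin}$ on the fixed Reissner--Nordstr\"om background --- is exactly the paper's (Propositions~\ref{basic.crit}, \ref{lin.blow.up} and \ref{linear.diff}). The verification of the hypotheses of Corollary~\ref{cor:main} is also correct. However, there is a genuine gap in your treatment of the perturbative step, which you yourself flag as ``the hard part'' and resolve with a coupled bootstrap. No bootstrap is needed, and the one you sketch is not well-founded. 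The point you miss is the \emph{product structure} of the source: since $\Box_g\phi=0$ and $\Box_{g_{RN}}\phi_{lin}=0$, the source $F=\Box_{g_{RN}}(\phi-\phi_{lin})$ is schematically $\Omg_{RN}^{-2}\big(|\tfrac{\rd_v r}{r}-\tfrac{\rd_v r_{RN}}{r_{RN}}||\rd_u\phi|+|\tfrac{\rd_u r}{r}-\tfrac{\rd_u r_{RN}}{r_{RN}}||\rd_v\phi|\big)$, i.e.\ a metric \emph{difference} times a \emph{derivative of $\phi$}, and \emph{both} factors decay by Theorem~\ref{main.theorem.C0.stability}. This gives $|F|\ls\Omg_{RN}^{-2}v^{-2\rho}\cdot\Omg_{RN}^{2}$ for $u+v\leq 0$ and $|F|\ls\Omg_{RN}^{-2}|u|^{-\rho}v^{-\rho}$ for $u+v\geq 0$, for any $\rho<3$. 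Measuring the source in the norm $\|F\|_{\mathcal E_{inho,8,\sigma}}$ of \eqref{E.inho.def} (whose $\Omg_{RN}^4$ weight cancels the $\Omg_{RN}^{-2}$ factors), a direct computation with $\rho\in(\tfrac{11}4,3)$ and $\sigma$ close to $1$ gives finiteness in a single pass: the critical region $v\geq|u|$ contributes $\int|u|^{8-2\rho}\big(\int_{|u|}^{\infty}v^{-2\rho+\sigma}dv\big)du\sim\int|u|^{9+\sigma-4\rho}du<\infty$. Your diagnosis that the $|u|^{-\rho+1}$ decay of the metric differences alone falls short of the weight $|u|^{8}$ is what leads you to the bootstrap; but that estimate ignores the second decaying factor, and the bootstrap you propose (controlling ``higher-order weighted norms of $\rd_v\phi$ and $\rd_v^2\phi$'' for the nonlinear solution) would amount to re-deriving interior estimates that Theorem~\ref{main.theorem.C0.stability} already supplies unconditionally, with no indication of what equation would close it.

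Two further technical points. First, the multiplier you propose ($Y$ of \eqref{Y.def} with $p_1=2$, $p_2=8$) produces the weight $w_{p_1}(-u)w_{p_2}(u)\sim|u|^{2}$ as $u\to-\infty$, not the required $|u|^{8}$; taking $p_1=8$ instead would force the divergent event-horizon norm \eqref{eq:mass.inflation.lower.bd} into the data term. This is precisely why the paper proves a separate inhomogeneous estimate (Theorem~\ref{scat.SS.inho.u}) with a two-region argument across the spacelike hypersurface $\{u+v-\sqrt v=1\}$, using the vector field $Z=y^N(\rd_v+|u|^p\rd_u)$ to the future of it; the deformation-tensor positivity for $Z$ relies on the exponential smallness of $\Omg_{RN}^2$ there, which is unavailable near the event horizon. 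Second, you prescribe the data of $\phi_{lin}$ only on $\EH_1\cap\{v\geq1\}$; the characteristic problem on $\{u\leq u_s,\,v\geq1\}$ also requires data on $\{v=1\}$, and one should set $\phi_{lin}=\phi$ there so that $\phi_{err}$ vanishes on both data hypersurfaces and the boundary terms in \eqref{eq:scat.SS.inho.u} drop out.
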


\begin{remark}[Alternative additional assumptions] 
The additional assumptions \eqref{add.assumption.0} and \eqref{add.assumption} are needed so that we can apply 
Theorem~\ref{thm:main} to the corresponding linear equations.

For this purpose, technically, we can alternatively assume
$$\int_{\mathcal H^+_1\cap\{v\geq 1\}} v^{6} (\rd_v\phi)^2(v)\, dv = +\infty$$
and
$$\int_{\mathcal H^+_1\cap\{v\geq 1\}} v^{6}(\rd_v^2\phi)^2(v)\, dv < +\infty.$$ 
This would still in principle be consistent with \cite{LO.interior, LO.exterior}. However, this is \underline{not} expected to hold; see Remark~\ref{rmk:expectation} immediately below.
\end{remark}

\begin{remark}[Expected behavior along $\EH_1$]\label{rmk:expectation}
According to the linear analysis in \cite{AAG,HintzPriceLaw}, for solutions $\phi$ to the linear wave equation on a fixed Reissner--Nordstr\"om spacetime, $|\rd_v\phi|\ls v^{-4}$ and $|\rd_v^2\phi| \ls v^{-5}$ along $\EH_1$. If these (or even slightly weaker versions) were also to hold for the \emph{nonlinear} solution, then we would have verified the assumptions \eqref{add.assumption.0} and \eqref{add.assumption}. 
\end{remark}

\subsection{Criterion for mass inflation}\label{sec:MI.crit} We begin the proof of Theorem~\ref{mass.infl.thm} by establishing a criterion for mass inflation in this subsection.

In order to make sense of our mass inflation criterion, we prove a simple lemma which follows from Theorem~\ref{main.theorem.C0.stability}.
\begin{lemma}\label{lem:cont.rduphi.limit}
For every $u <u_s$, the limit $\lim_{v\to +\infty}(\rd_u\phi)(u,v)$ exists. Moreover, $\lim_{v\to +\infty}(\rd_u\phi)(u,v)$ is a continuous function for $u \in (-\infty,u_s)$.
\end{lemma}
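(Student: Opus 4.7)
My plan is to reduce both claims to an integrable pointwise bound on $\rd_v(\rd_u\phi)$. The wave equation from \eqref{WW.SS} gives
$$\rd_v\rd_u\phi = -\frac{\rd_v r}{r}\rd_u\phi - \frac{\rd_u r}{r}\rd_v\phi,$$
so the task is to combine the decay estimates of Theorem~\ref{main.theorem.C0.stability} on each factor. Fixing $\rho \in (2,3)$ and working in the rectangular region $\{-\infty < u < u_s,\, 1 \leq v < +\infty\}$, I choose $A \in \mathbb R$; for each fixed $u < u_s$ the relevant tail is contained in $\{u+v \geq A\}$, where the theorem supplies $|\rd_v\phi|, |\rd_v r| \lesssim v^{-\rho}$ (the $\rd_v r_{RN}$ piece being even exponentially small in $v$ via \eqref{eq:Omg.near.CH}), $|\rd_u\phi| \lesssim |u|^{-\rho}$, and $|\rd_u r| \lesssim 1$.

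Substituting then yields $|\rd_v\rd_u\phi|(u,v) \lesssim v^{-\rho}(1 + |u|^{-\rho})$, which for each fixed $u < u_s$ is integrable on $[1,\infty)$ since $\rho > 1$. Consequently, $v \mapsto (\rd_u\phi)(u,v)$ is absolutely continuous in $v$ with integrable derivative, so the limit $f(u) \doteq \lim_{v\to +\infty}(\rd_u\phi)(u,v)$ exists for every $u < u_s$.

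For the second claim, I will upgrade the above to uniform convergence on compact subsets of $(-\infty, u_s)$. Given any $[u_1, u_2] \subset (-\infty, u_s)$, the theorem's constants depend only on $\rho$, and $|u| \geq |u_2| > 0$ uniformly on this interval (recall $u_s \leq -1$); hence the estimate $|\rd_v\rd_u\phi|(u,v) \lesssim v^{-\rho}$ holds with a constant uniform in $u \in [u_1,u_2]$. Integrating in $v$ then gives
$$\sup_{u \in [u_1,u_2]} \big|(\rd_u\phi)(u,v) - f(u)\big| \lesssim v^{1-\rho} \longrightarrow 0 \text{ as } v\to +\infty.$$
Because $\phi$ is smooth in the open interior, $\rd_u\phi$ is continuous in $u$ for each $v$, so $f$ is a uniform limit on $[u_1,u_2]$ of continuous functions, and is therefore itself continuous there. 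Since $[u_1, u_2] \subset (-\infty, u_s)$ was arbitrary, $f$ is continuous on all of $(-\infty, u_s)$.

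The main obstacle is merely careful bookkeeping with the decay estimates: the bounds of Theorem~\ref{main.theorem.C0.stability} split into the two regimes $u+v \leq A$ versus $u+v \geq A$, and the Reissner--Nordstr\"om background pieces $\rd_u r_{RN}, \rd_v r_{RN}$ must be separately tracked (they decay exponentially in $v$ by \eqref{eq:Omg.near.CH}) to confirm integrability of each term. No PDE tools beyond the wave equation and the already-proved stability theorem are required.
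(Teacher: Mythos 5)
Your proof is correct and follows essentially the same route as the paper: integrate the wave equation in $v$ and control the right-hand side with the decay estimates of Theorem~\ref{main.theorem.C0.stability} to get an integrable bound, then upgrade to uniform convergence on compact $u$-intervals for continuity. The only (cosmetic) difference is that the paper works with the renormalized quantity $r\rd_u\phi$, for which $\rd_v(r\rd_u\phi)=-\rd_u r\,\rd_v\phi$ leaves a single term to estimate, whereas you expand the product rule and bound two terms.
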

\begin{proof}
This follows easily from using the wave equation the wave equation $\rd_v (r\rd_u \phi) = -\rd_v r \rd_u \phi$, and controlling the terms on the right-hand side with the estimates in Theorem~\ref{main.theorem.C0.stability}. \qedhere
\end{proof}

We are now ready to state a criterion for mass inflation. For this purpose, it is more convenient to switch to the $(u,V)$ coordinate system as given in Section~\ref{sec:CH.blowup.from.old.paper}. In the $(u,V)$ coordinate system, Lemma~\ref{lem:cont.rduphi.limit} means that $\lim_{V\to 1}(\rd_u\phi)(u,V)$ is well-defined and continuous on $u \in (-\infty, u_s)$. In particular, the criteria in Proposition~\ref{basic.crit} make sense.
\begin{proposition}[Criterion for mass inflation]\label{basic.crit}
Suppose the assumptions of Theorem~\ref{thm.nonpert} hold. Then the following holds.
\begin{enumerate}
\item If $\sup_{V} |m|(u_*,V) <+\infty$, then there exists $u_{**} < \min\{ u_*, u_s\}$ such that $\lim_{V\to 1}(\rd_u\phi)(u,V)=0$ for every $u< u_{**}$.
\item If there exists a sequence $\{u_i\}_{i=1}^{+\infty}\subset (-\infty,u_s)$ such that $u_i \to -\infty$ and $\lim_{V\to 1}(\rd_u\phi)(u_i, V) \neq 0$, then the Hawking mass blows up identically on $\CH$, i.e.,~$\lim_{V\to 1} m(u,V) =+\infty$ for all $u \in (-\infty, u_{\CH_1})$.
\end{enumerate}
\end{proposition}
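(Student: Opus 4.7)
The plan is to combine three ingredients: the monotonicities $\rd_u \varpi \geq 0$ and $\rd_V \varpi \geq 0$ for the renormalized Hawking mass (immediate from \eqref{eq:varpi} together with $\rd_u r, \rd_V r < 0$ in the black hole interior), the pointwise blow-up $\lim_{V \to 1^-}(\rd_V r / \Omg^2)(u, V) = -\infty$ from \eqref{eq:nonpert-blowup-dvr}, and the existence and continuity of the limit $L(u) \doteq \lim_{V \to 1^-}(\rd_u \phi)(u, V)$ on $(-\infty, u_s)$ from Lemma~\ref{lem:cont.rduphi.limit}.

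For Part (1), I would first establish that $|\varpi|$ is uniformly bounded on $(-\infty, u_*] \times [V_1, 1)$: the upper bound is immediate from the hypothesis $\sup_V |m(u_*, V)| < \infty$ together with $\rd_u \varpi \geq 0$, while the lower bound follows by combining the near-horizon control of $\varpi$ coming from Theorem~\ref{main.theorem.C0.stability} on $\{V = V_1\} \cap \{u \leq u_s\}$ (where the metric is quantitatively close to Reissner--Nordstr\"om) with $\rd_V \varpi \geq 0$, extended to $(u_s, u_*] \times [V_1, 1)$ by $\rd_u \varpi \geq 0$. Given this, for any $u_1 < u_*$ an integration of \eqref{eq:varpi} at fixed $V$ yields
\[
\int_{u_1}^{u_*} \frac{-2 r^2 \rd_V r}{\Omg^2}(\rd_u \phi)^2(u', V)\, du' = \varpi(u_*, V) - \varpi(u_1, V) \leq C
\]
uniformly in $V \in [V_1, 1)$, with a non-negative integrand. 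Passing $V \to 1^-$ and applying Fatou's lemma, and using both the pointwise blow-up of $-\rd_V r/\Omg^2$ and the existence of $L(u')$, one finds that the integrand's $\liminf$ equals $+\infty$ wherever $L(u') \neq 0$. Since $L$ is continuous on $(-\infty, u_s)$, its nonvanishing set is open, and were it nonempty inside $(u_1, u_*) \cap (-\infty, u_s)$, Fatou would force the finite integral to diverge, a contradiction. Hence $L \equiv 0$ on that intersection, and letting $u_1 \to -\infty$ yields the conclusion for any fixed $u_{**} < \min\{u_*, u_s\}$.

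Part (2) is the contrapositive of Part (1). Fix $u_0 \in (-\infty, u_{\CH_1})$ and assume $\lim_{V \to 1^-} m(u_0, V) \neq +\infty$; by $\rd_V \varpi \geq 0$ this limit exists in $(-\infty, +\infty]$, and the assumption then gives $\sup_V |m(u_0, V)| < \infty$. Applying Part (1) with $u_* = u_0$ produces $u_{**} < \min\{u_0, u_s\}$ such that $L(u) = 0$ for all $u < u_{**}$. Since $u_i \to -\infty$, for $i$ sufficiently large one has $u_i < u_{**}$, contradicting $L(u_i) \neq 0$. Thus $\lim_{V \to 1^-} m(u_0, V) = +\infty$ for every $u_0 \in (-\infty, u_{\CH_1})$.

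The main obstacle, in my view, will be the uniform lower bound on $\varpi$ throughout $(-\infty, u_*] \times [V_1, 1)$. The upper bound is essentially free, but the lower bound requires stitching together the quantitative perturbative estimates near $\EH_1$ from Theorem~\ref{main.theorem.C0.stability} (which are valid only for $u \leq u_s$) with the global monotonicity $\rd_u \varpi \geq 0$ to cover the possibly non-perturbative range $u \in (u_s, u_*]$. Once this is in hand, the remainder of the argument is a clean application of Fatou's lemma to a non-negative integrand whose multiplier blows up pointwise in the limit, combined with the continuity of $L$.
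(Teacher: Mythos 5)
Your proposal follows essentially the same route as the paper: pass to the renormalized Hawking mass $\varpi$, use the monotonicity coming from \eqref{eq:varpi}, and combine Fatou's lemma with the pointwise blow-up \eqref{eq:nonpert-blowup-dvr} and the continuity of $L(u)=\lim_{V\to 1}(\rd_u\phi)(u,V)$ from Lemma~\ref{lem:cont.rduphi.limit} to force divergence of $\varpi$ on any interval where $L\neq 0$; part (2) is then the contrapositive via monotonicity in $V$. The architecture, the key identities, and the contradiction are all the ones the paper uses.

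The one genuine gap is your parenthetical claim that $\rd_u r<0$ and $\rd_V r<0$ hold ``in the black hole interior'' and that the monotonicities are therefore immediate. The sign of $\rd_V r$ is \emph{not} automatic for the dynamical solution: the black hole interior can contain a regular region near $\EH_1$ where $\rd_V r>0$, so $\rd_u\varpi\geq 0$ cannot be asserted on all of $[V_1,1)$. The paper spends a separate step on exactly this point: it derives $\rd_V r<0$ only on a strip $[\bar u,u_*]\times[V_*,1)$ with $V_*$ possibly much closer to $1$ than $V_1$, by combining the blow-up \eqref{eq:nonpert-blowup-dvr} with the uniform bound $|\rd_u(r\rd_V r)|\leq C$ of \eqref{eq:rdVr.uniform}; the sign of $\rd_u r$ is propagated from asymptotic flatness via the Raychaudhuri equation \eqref{eqn.Ray}. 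Your argument survives this correction verbatim once every occurrence of $V_1$ is replaced by such a $V_*$ (the Fatou step only concerns $V\to 1$ anyway), and your proposed uniform lower bound on $\varpi$ via Theorem~\ref{main.theorem.C0.stability} on $\{V=V_*\}$ is more than is needed --- the paper settles for the per-$u$ bound $\varpi(u,V)\geq\varpi(u,V_*)$, which suffices. So the missing ingredient is precisely the justification of the signs of $\rd_V r$ and $\rd_u r$ near the Cauchy horizon, which you should not treat as free.
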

\begin{proof}
\pfstep{Step~1: Reduction to the renormalized Hawking mass} As a preliminary observation, note that the blow-up of the Hawking mass $m$ on $\CH_1$ is equivalent to the blow-up of the renormalized Hawking mass $\varpi$. This follows simply from the definition \eqref{eq:renormalized} and that $\bf e$ is a constant and $r$ has a non-zero limit on $\CH_1$ (see \cite[part (2)(e) of Theorem~4.1]{LO.interior}). We will in fact prove blow-up of $\varpi$, which is slightly more convenient because of the monotonicity properties that it enjoys.

\pfstep{Step~2: Signs of $\rd_u r$ and $\rd_V r$} Fix any $-\infty< \bar{u} < u_* < u_{\CH_1}$ with $\bar{u} < u_s$. We claim that after choosing $V_* \in (V_1, 1)$ sufficiently close to $1$, we have $\rd_u r(u,V), \, \rd_V r(u,V) < 0$ for $(u,V) \in [\bar{u}, u_*] \times [V_*, 1)$.

For $\rd_V r$, we use the following estimate from \cite[equation (10.13)]{LO.interior} 
\begin{equation}\label{eq:rdVr.uniform}
\sup_{(u, V) \in [\bar{u},u_*] \times [V_{\ast}, 1)} \Big| \rd_u (r \rd_V r) \Big|(u,V) \leq C.
\end{equation}
By \eqref{eq:nonpert-blowup-dvr} in Theorem~\ref{thm.nonpert} (and the continuous extendibility of $\log\Omg$), $\rd_V r\to -\infty$ for any fixed $u \in (-\infty,u_{\CH_1})$. Therefore, when combined with \eqref{eq:rdVr.uniform}, we see that there exists $V_* \in (V_1,1)$ such that 
\begin{equation}\label{eq:MI.rdVr}
\rd_V r(u,V)<0 \mbox{ whenever $(u,V) \in [\bar{u}, u_*] \times [V_*, 1)$}.
\end{equation}

Turning to $\rd_u r$, notice that $\rd_u r<0$ near $i^0_1$ by asymptotic flatness. Combining this with the Raychaudhuri equation (the second equation in \eqref{eqn.Ray}), we see that after choosing $V_*$ closer to $1$, 
\begin{equation}\label{eq:MI.rdur}
\rd_u r(u,V)<0 \mbox{ whenever $(u,V) \in (-\infty, u_{\CH_1}) \times [V_*, 1)$}.
\end{equation}

From now on, fix $V_*$.

\pfstep{Step~3: Monotonicity of the renormalized Hawking mass} Suppose $\sup_{V}|\varpi|(u_*,V) <\infty$ for some $u_* \in (-\infty,u_{\CH_{1}})$.
 
Using \eqref{eq:MI.rdVr} and the $\rd_u \varpi$ equation in \eqref{eq:varpi}, we obtain that $\rd_u\varpi(u,V)\geq 0$ whenever $(u,V) \in [\bar{u}, u_*] \times [V_*, 1)$. This monotonicity together with the assumption $\sup_{V}|\varpi|(u_*,V) <\infty$ imply that $\sup_{V}\varpi(u,V) <+ \infty$ for every $u \leq u_*$. 

On the other hand, by \eqref{eq:MI.rdur} and the $\rd_V \varpi$ equation in \eqref{eq:varpi}, $\varpi$ is monotonically increasing in $V$ for $V \in [V_*,1)$. In particular, $\varpi(u,V)$ is bounded below for $(u,V) \in (-\infty, u_{\CH_1}) \times [V_*, 1)$.

In particular, combining the upper and lower bounds, we deduce that $\sup_{V}|\varpi|(u,V) <\infty$ for all $u\leq u_*$.

\pfstep{Step~4: Proof of part (1)} Suppose $\sup_{V}|\varpi|(u_*,V) < +\infty$ for some $u_* \in (-\infty, u_{\CH_1})$. Taking $u_{**} < \min\{u_*, u_s\}$, we know by Step~3 that 
\begin{equation}\label{eq:MI.contradiction}
\sup_{V}|\varpi|(u_{**},V) < +\infty. 
\end{equation}
In order to prove part (1), our goal is to show that $\lim_{V\to 1}(\rd_u\phi)(u,V)=0$ for every $u< u_{**}$.

Assume for the sake of contradiction that there exists $u_0\in (-\infty,u_{**})$ such that $\lim_{V\to 1}(\rd_u\phi)(u_0,V)\neq 0$. Since $\lim_{V\to 1}(\rd_u\phi)(u,V)$ is a continuous function in $u$ (by Lemma~\ref{lem:cont.rduphi.limit}), there exist $u_1, u_2\in (-\infty,u_*)$ with $u_1<u_0<u_2$ such that $\lim_{v\to \infty}(\rd_u\phi)^2(u)\geq a>0$ for some $a>0$ whenever $u\in [u_1,u_2]$.

At the same time, we also have by \eqref{eq:nonpert-blowup-dvr} that $\lim_{V\to 1}\f{\rd_V r}{\Omg^2}(u,V)\to -\infty$ for every $u \in (-\infty, u_{\CH_1})$. 
Therefore, by Fatou's lemma (notice that $-\f{r\rd_V r}{\Omg^2}(\rd_u\phi)^2\geq 0$), we have
\begin{equation}\label{varpi.inf.int}
\liminf_{v\to \infty} \int_{u_1}^{u_2} (-\f{r\rd_V r}{\Omg^2}(\rd_u\phi)^2)(u',v)\, du'\geq  \int_{u_1}^{u_2} \liminf_{v\to \infty}(-\f{r\rd_V r}{\Omg^2}(\rd_u\phi)^2)(u',v)\, du'=\int_{u_1}^{u_2}a\cdot\infty\,du'=\infty.
\end{equation}

Now, we take an increasing sequence $\{V_n\}_{n=1}^{+\infty} \subset [V_*, 1)$ with $\lim_{n\to +\infty} V_n = 1$. Since $\varpi$ is increasing in $V$ for $V \in [V_*,1)$ (see Step~3), $\varpi(u, V_n)$ is an increasing sequence for every $u$. Integrating the $\rd_u \varpi$ equation in \eqref{eq:varpi} and using \eqref{eq:MI.rdVr}, we therefore obtain
\begin{equation*}
\begin{split}
\lim_{n\to\infty}\varpi(u_{**},V_n) =&\: \lim_{n\to\infty}\left(\varpi(u_1,V_n)+\int_{u_1}^{u_{**}} (-\f{r\rd_V r}{\Omg^2}(\rd_u\phi)^2)(u',V_n)\, du'\right)\\
\geq &\: \varpi(u_1,V_1)+\liminf_{n\to\infty}\int_{u_1}^{u_2} (-\f{r\rd_V r}{\Omg^2}(\rd_u\phi)^2)(u',V_n)\, du'=+\infty,
\end{split}
\end{equation*}
where in the last step we have used \eqref{varpi.inf.int}. However, this contradicts \eqref{eq:MI.contradiction} above! This concludes the proof of part (1).

\pfstep{Step~5: Proof of part (2)} By part (1), if the assumption of part (2) holds, then $\sup_V |m|(u,V) = +\infty$ for all $u \in (-\infty, u_{\CH_1})$. Using the fact that $m(u,V)$ is increasing in $V$ for sufficiently large $V$ (recall Step~3 above), it following that $\lim_{V\to 1} m(u,V) = +\infty$  for all $u \in (-\infty, u_{\CH_1})$, as desired. \qedhere
\end{proof}

\subsection{Estimates for the inhomogeneous wave equation on fixed Reissner--Nordstr\"om}\label{sec:EE.inhom}

Before we proceed to the proof of Theorem~\ref{mass.infl.thm} (see Section~\ref{sec:proofofmassinflation}), we study in this subsection the inhomogeneous wave equation on $(\mathcal M_{RN}, g_{RN})$:
$$\Box_{g_{RN}}\phi=F.$$
This will be useful for the perturbative argument in Section~\ref{sec:proofofmassinflation}.

\begin{theorem}\label{scat.SS.inho.u}
Let $p\in (1,\infty)$, $\sigma\in (1,+\infty)$ and $u_s\in (-\infty,-1)$.

Let $F:\mathcal M_{RN} \cap \{(u,v): u\in (-\infty,u_s),\,v\in [1,+\infty)\} \to \mathbb R$ be a spherically symmetric smooth function such that the following holds:
\begin{equation}\label{E.inho.def}
\begin{split}
\|F\|_{\mathcal E_{inho,p,\sigma}} \doteq \left(\int_{1}^{\infty}\int_{-\infty}^{u_s} |u|^p |F|^2(u,v)\,(\Omg_{RN}^4(1+|r^*|)^\sigma)(u,v)\,du\,dv \right)^{\f 12}<\infty.
\end{split}
\end{equation}
Suppose $\phi:\mathcal M_{RN} \cap \{(u,v): u\in (-\infty,u_s),\,v\in [1,+\infty)\} \to \mathbb R$ is a smooth spherically symmetric solution to
$$\Box_{g_{RN}}\phi = F$$
in $\mathcal M_{RN}\cap \{(u,v): u\in (-\infty,u_s), v\in (1,+\infty)\}$ such that $(\rd_v\phi)(u,v)$ extends smoothly to $\EH_1$ for all $u\in [1,+\infty)$ and\footnote{Note that it is not necessary to assume that $\rd_u\phi$ has a limit on $\CH_1$ for $u\in (-\infty,u_s)$. In fact, it is not difficult to prove using a approximation argument that such a limit exists. However, since in the applications we already know that such a limit exists, we will simply put this as an assumption of the theorem.} $\lim_{v\to +\infty}(\rd_u\phi)(u,v)$ exists for all $u\in (-\infty,u_s)$.

Then, there exists a constant $C>0$ depending only on $p$, $\sigma$, $u_s$ and the Reissner--Nordstr\"om parameters ${\bf e}$ and $M$ (but \underline{independent} of $\phi$) such that 
\begin{equation}\label{eq:scat.SS.inho.u}
\begin{split}
&\: \int_{-\infty}^{u_s} |u|^p \lim_{v\to +\infty}(\rd_u\phi)^2(u,v)\, du \\
 \leq &\: C\left(\int_1^{+\infty} v^p \lim_{u\to -\infty} (\rd_v\phi)^2(u,v)\, dv  + \int_{-\infty}^{u_s} |u|^p(\rd_u\phi)^2(u,v=1)\, du + \|F\|_{\mathcal E_{inho,p,\sigma}}^2\right).
\end{split}
\end{equation}
\end{theorem}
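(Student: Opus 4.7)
The plan is to adapt the vector-field-multiplier argument of Theorem~\ref{scat.SS} to the present inhomogeneous, domain-restricted setting, using a multiplier whose two null components carry weights chosen to match the asymmetric $L^2$ weights in the hypothesis.

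I take $Y = y^N(r^*)\bigl(W_v(u,v)\rd_v + W_u(u)\rd_u\bigr)$, with $y$ as in \eqref{y.def}, $W_u(u) = w_p(-u)$, and $W_v(u,v)$ a smooth positive function comparable to $\min(w_p(v),\,w_p(-u))$---concretely the harmonic mean $W_v = \tfrac{w_p(v)\,w_p(-u)}{w_p(v)+w_p(-u)}$. The design serves three purposes at once: (i) $W_v\to w_p(v)\sim v^p$ as $u\to-\infty$, matching the $v^p$ weight of the $\EH_1$ datum on the RHS; (ii) $W_v\leq C\,W_u$ pointwise, so that the $F^2$ weight arising from the Young inequality is controlled by $\|F\|_{\mathcal E_{inho,p,\sigma}}^2$; and (iii) $W_v\sim W_u$ in the regime $|u|\sim v$, which is required for the Cauchy--Schwarz step in the bulk-positivity proof to close.

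I then apply Lemma~\ref{lem:energy.identity} with this $Y$ on $D = [u_0,u_s]\times[1,v_0]\times\mathbb{S}^2$ for $u_0<u_s$ and $v_0>1$. The case analysis of Step~1(a) in the proof of Theorem~\ref{scat.SS}, restricted to the regimes that occur in $\{u\leq u_s<-1,\,v\geq 1\}$ (namely $v\geq 2|u|$, $|u|\geq 2v$, and $|u|\sim v$), together with the observation that the extra contribution to $T_{\mu\nu}{}^{(Y)}\pi^{\mu\nu}$ from $\rd_u W_v\neq 0$ has the favorable sign, yields for $N$ sufficiently large
\begin{equation*}
T_{\mu\nu}{}^{(Y)}\pi^{\mu\nu}\, r^2\Omg^2 \gtrsim \frac{1}{(1+|r^*|)^{\sigma}}\bigl(W_v(\rd_v\phi)^2 + W_u(\rd_u\phi)^2\bigr).
\end{equation*}
The inhomogeneity is handled termwise by Young's inequality,
\begin{equation*}
|F\,Y\phi|\,r^2\Omg^2 \leq \varepsilon\,\frac{W_v(\rd_v\phi)^2+W_u(\rd_u\phi)^2}{(1+|r^*|)^{\sigma}} + C_\varepsilon\,(W_v+W_u)\,F^2\,\Omg^4(1+|r^*|)^{\sigma},
\end{equation*}
where the first piece is absorbed into the good bulk and the second, using $W_v+W_u\lesssim|u|^p$, is bounded by $C\,\|F\|_{\mathcal E_{inho,p,\sigma}}^2$.

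Finally I send $u_0\to-\infty$ and $v_0\to+\infty$: dominated convergence on the past flux on $\{u=u_0\}$ (using uniform continuity of $\rd_v\phi$ up to $\EH_1$ on the compact set $[1,v_0]$ and $W_v(u_0,v)\to w_p(v)$) recovers the $\EH_1$ datum integral on the RHS; Fatou's lemma on the flux on $\{v=v_0\}$, combined with the hypothesized existence of $\lim_{v\to+\infty}\rd_u\phi$, reproduces the LHS of \eqref{eq:scat.SS.inho.u}; the non-negative flux on $\{u=u_s\}$ is discarded. The main obstacle is the simultaneous reconciliation of the asymmetric weights: neither the separable choice $W_v=w_p(v)$ (which forces an $v^p\Omg^4$ weight on $F$ in the error, strictly stronger than the hypothesized $|u|^p\Omg^4$, failing by a factor $(v/|u|)^p$ in the $v\gg|u|$ regime) nor $W_v=w_p(-u)$ (which would blow up the $\EH_1$ past flux) can be used; the non-separable min-type weight reconciles them, and verifying its bulk positivity with the new $\rd_u W_v,\,\rd_v W_v$ contributions to ${}^{(Y)}\pi$ is the technical crux.
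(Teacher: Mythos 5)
Your proposal is correct, but it takes a genuinely different route from the paper. The paper splits the region $\{u< u_s,\ v\geq 1\}$ along the spacelike hypersurface $\gamma=\{u+v-\sqrt{v}=1\}$ and uses two separable multipliers: $Y=2y^N(w_p(v)\rd_v+w_p(-u)\rd_u)$ in the past of $\gamma$ (where $v\lesssim |u|$, so the $F$-error weight $v^p+|u|^p$ collapses to $|u|^p$ and the bulk computation \eqref{bulk.term.to.prove} applies verbatim), and $Z=y^N(\rd_v+|u|^p\rd_u)$ in the future of $\gamma$ (where the polynomial imbalance between the two null components is absorbed by the exponential smallness of $\Omg_{RN}^2(1+|r^*|)^\sigma$ near $\CH$); the two stages are glued by observing that $Y-cZ$ is future causal, so the $Z$-flux through $\gamma$ is dominated by the $Y$-flux. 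Your single non-separable weight $W_v\sim\min(w_p(v),w_p(-u))$ achieves both effects at once and eliminates the intermediate hypersurface and the multiplier-comparison step. The only genuinely new computations you owe are exactly the ones you flag: (a) $\rd_u W_v=\f{w_p(v)^2}{(w_p(v)+w_p(-u))^2}\,\rd_u\big(w_p(-u)\big)\leq 0$, so its contribution to $-\f{4}{\Omg_{RN}^2}(\rd_u Y^v)(\rd_v\phi)^2$ is indeed non-negative; and (b) the analogue of \eqref{key.lower.bound}, namely $\Omg_{RN}^2(1+|r^*|)^\sigma\big(1+\tfrac{w_p(-u)}{w_p(v)}\big)\lesssim 1$ on $\{u\leq u_s\leq -1,\ v\geq 1\}$, which holds because where $w_p(-u)/w_p(v)\sim(|u|/v)^p$ is large one has $|u|\gtrsim 2v$, hence $|r^*|\gtrsim|u|$ and $\Omg_{RN}^2\lesssim e^{-\kappa_+|u|}$ beats the polynomial factor, while elsewhere both factors are bounded. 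With these two points written out, the Cauchy--Schwarz absorption of the cross term, the Young-inequality treatment of $F$ (using $W_v+W_u\lesssim|u|^p$), and the limiting procedure ($u_0\to-\infty$ first by smoothness of $\rd_v\phi$ up to $\EH_1$ on compact $v$-intervals, then $v_0\to+\infty$ by Fatou) all close as you describe. What each approach buys: yours is shorter and keeps a single energy identity; the paper's keeps each multiplier separable and reuses its earlier bulk estimate wholesale, at the cost of the decomposition and gluing.
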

\begin{proof}
In the proof, unless explicitly stated otherwise, $C>0$, as well as implicit constants in $\ls$ and $\sim$, depend only on $p$, $\sigma$, $u_s$, ${\bf e}$ and $M$.

\pfstep{Step 1: Preliminary reductions} By Fatou's Lemma, it suffices to show that 
\begin{equation}\label{eq:reduction.by.Fatou}
\liminf_{v\to+\infty}\int_{\min\{-\f 12 v,u_s\}}^{u_s} |u|^p (\rd_u\phi)^2(u,v)\, du
\end{equation}
is bounded by the RHS of \eqref{eq:scat.SS.inho.u}.

This will be convenient in that we can apply the energy estimates in Lemma~\ref{lem:energy.identity} only to region of finite $v$.

\pfstep{Step~2: The curve $\gamma$ and partition of the spacetime} We will apply different estimates in two different regions of $\mathcal M_{RN} \cap \{(u,v): u\in (-\infty,u_s),\,v\in [1,+\infty)\}$. For this purpose, we divide the region into two. (Note a similar argument in \cite{D2,LS,DL}.)

Let $f(u,v)=u+v-\sqrt{v}$ for $u\in \mathbb R$ and $v\geq 1$. Introduce a hypersurface $\gamma\subset \{(u,v): u\in (-\infty,u_s),\,v\in [1,+\infty)\}$ by 
$$\gamma=f^{-1}(1).$$ 
Notice that $g^{-1}(df,df)=2 g^{-1}(du, (1-\f{1}{2\sqrt{v}})dv)<0$ when $v\geq 1$. Therefore $\gmm$ is a spacelike hypersurface.

We first propagate the estimates to $\gamma$ (Step~3) and then propagate the estimates from $\gamma$ to its future (Step~4) to obtain the final desired bounds.

\pfstep{Step~3: Past of $\gamma$} We now apply \eqref{EnergyEst} in the region 
$$\mathcal D_{Step\,3}  \doteq  \{(u,v): u\in (-\infty,u_s),\, v\in (1,+\infty),\, f(u,v) \in (-\infty,1) \}.$$
$\rd \mathcal D_{Step\,3}$ naturally decomposes into four components. On $\rd\mathcal D_{Step\,3}\cap \{u=u_s\}$, $\rd\mathcal D_{Step\,3}\cap \EH_1$ and $\rd\mathcal D_{Step\,3}\cap \{v=1\}$, we have the volume forms defined in Section~\ref{sec:vol}. On $\rd\mathcal D_{Step\,3}\cap\gamma$, we define a positive volume form $\vol_{\gamma}$ such that $\vol=df\wedge \vol_{\gamma}$.

We now take $Y$ to be
\begin{equation}\label{Y.in.step.3}
Y \doteq 2 y^N(r^*) \left(w_{p}(v)\rd_v+ w_{p}(-u)\rd_u\right).
\end{equation}
where $w_p$ is as in Definition~\ref{def:w.weight}
and $y$ is as in \eqref{y.def}. (Note that this is exactly \eqref{Y.def} with $p_1 = p$ and $p_2 = 0$.)

Using \eqref{EnergyEst}, we obtain\footnote{Note that since \eqref{EnergyEst} strictly speaking should be applied to compact domains, we need to carry out an approximation argument by considering an exhaustion of $\mathcal D_{Step\,3}$; we omit the details.} that for some $C>0$
\begin{equation}\label{eq:main.est.before.gamma}
\begin{split}
&\:\underbrace{\int_{\rd\mathcal D_{Step\,3}\cap\gamma} \mathbb T[\phi](-df^\sharp, Y)\,\vol_{\gamma}}_{\doteq I}+ \underbrace{\int_{\rd\mathcal D_{Step\,3}\cap \{u=u_s\}} \mathbb T[\phi](-du^\sharp, Y)\,\vol_{u}}_{\doteq II}+\underbrace{\int_{\mathcal D_{Step\,3}} \mathbb T[\phi]_{\mu\nu}{ }^{(Y)}\pi^{\mu\nu} \, \vol}_{Main\,good\,term}\\
\leq &\:\underbrace{\int_{\rd\mathcal D_{Step\,3}\cap \EH_1} \mathbb T[\phi](-du^\sharp, Y)\,\vol_{u}}_{\doteq III} + \underbrace{\int_{\rd\mathcal D_{Step\,3}\cap \{v=1\}} \mathbb T[\phi](-dv^\sharp, Y)\,\vol_{v}}_{\doteq IV} + C\underbrace{\int_{\mathcal D_{Step\,3}} |Y\phi| |F| \vol}_{Error\,term}.
\end{split}
\end{equation}
Note that $Y$, $-df^\sharp$, $-du^\sharp$ and $-dv^\sharp$ are all future directed causal and therefore by Proposition~\ref{positivity}, $I$, $II$, $III$ and $IV$ are all non-negative.

Moreover, a direct computation (using in particular Section~\ref{sec:vol}) shows that
\begin{equation}\label{past.of.gamma.data}
\begin{split}
&\:\int_{\rd\mathcal D_{Step\,3}\cap \EH_1} \mathbb T[\phi](-du^\sharp, Y)\,\vol_{u}  + \int_{\rd\mathcal D_{Step\,3}\cap \{v=1\}} \mathbb T[\phi](-dv^\sharp, Y)\,\vol_{v}\\
\sim & \int_1^{+\infty} v^p \lim_{u\to -\infty} (\rd_v\phi)^2(u,v)\, dv  + \int_{-\infty}^{u_s} |u|^p(\rd_u\phi)^2(u,v=1)\, du.
\end{split}
\end{equation}

To control the ``Error term'' above, we argue as follows:
\begin{align}
&\:\mbox{Error term}\nonumber\\
\ls &\:\int_{\mathcal D_{Step\,3}} (v^p|\rd_v\phi| + |u|^p|\rd_u\phi|) |F|(u,v)\,\Omg_{RN}^2(u,v)\,du\,dv \label{before.gamma.error.1}\\
\ls &\:\left(\int_{\mathcal D_{Step\,3}} (v^p|\rd_v\phi|^2 + |u|^p|\rd_u\phi|^2)(u,v)\,(1+|r^*(u,v)|)^{-\sigma}\,du\,dv\right)^{\f 12} \nonumber\\
&\: \times \left(\int_{\mathcal D_{Step\,3}} (v^p + |u|^p) |F|^2(u,v)\,(\Omg_{RN}^4(1+|r^*|)^\sigma)(u,v)\,du\,dv\right)^{\f 12} \label{before.gamma.error.2}\\
\ls &\: \left(\int_{\mathcal D_{Step\,3}} \mathbb T[\phi]_{\mu\nu}{ }^{(Y)}\pi^{\mu\nu} \,\vol \right)^{\f 12} \|F\|_{\mathcal E_{inho,p,\sigma}},\label{before.gamma.error.3} 
\end{align}
where in \eqref{before.gamma.error.1}, we used \eqref{vol.est}; in \eqref{before.gamma.error.2}, we applied the Cauchy--Schwarz inequality; in \eqref{before.gamma.error.3}, we used\footnote{Technically, \eqref{bulk.term.to.prove} is not directly applicable because there $p_2$ is assumed to be $> 1$ (whereas we have $p_2 = 0$ here). Nevertheless, one checks that $p_2>1$ is only used for the later parts of the proof but not for \eqref{bulk.term.to.prove}.} \eqref{bulk.term.to.prove}, \eqref{E.inho.def} and the fact that $v \ls |u|$ in $\mathcal D_{Step\,3}$.

Now, 
\begin{itemize}
\item plugging \eqref{past.of.gamma.data} and \eqref{before.gamma.error.3} into \eqref{eq:main.est.before.gamma}, 
\item applying the Young's inequality to \eqref{before.gamma.error.3} and absorbing the term $\int_{\mathcal D_{Step\,3}} \mathbb T[\phi]_{\mu\nu}{ }^{(Y)}\pi^{\mu\nu} \,\vol$ to the LHS by the ``Main bulk term'', and
\item dropping the good term $II$ on the LHS,
\end{itemize}
we obtain
\begin{equation}\label{past.of.gamma.main}
\begin{split}
&\int_{\rd\mathcal D_{Step\,3}\cap\gamma} \mathbb T[\phi](df^\sharp, Y)\,\vol_{\gamma}\leq \mbox{(RHS of \eqref{eq:scat.SS.inho.u})}.
\end{split}
\end{equation}

\pfstep{Step~4: Future of $\gamma$} Let $v_*\in [1,+\infty)$. We now consider the region to the future of $\gamma$, but to the past of some $\{v=v_*\}$. Precisely, we consider the (precompact) region
$$\mathcal D_{Step\,4}  \doteq  \{(u,v): u\in (-\infty,u_s),\, v\in (1,v_*),\, f(u,v) \in (1,+\infty) \}.$$

Define the vector field $Z$ by
$$ Z \doteq  y^N  ( \rd_v+|u|^{p}\rd_u),$$
where $y$ is again as in \eqref{y.def} and $N$ is a sufficiently large constant to be chosen below.

We now apply \eqref{EnergyEst} to obtain
\begin{equation}\label{main.Z}
\begin{split}
&\:\int_{\rd\mathcal D_{Step\,4}\cap\{v = v_*\}} \mathbb T[\phi](-dv^\sharp, Z)\, \vol_v+ \int_{\rd\mathcal D_{Step\,4}\cap\{u=u_s\}} \mathbb T[\phi](-du^\sharp, Z)\, \vol_u +\underbrace{\int_{\mathcal D_{Step\,4}} \mathbb T[\phi]_{\mu\nu}{ }^{(Z)}\pi^{\mu\nu} \, \vol}_{Main\,bulk\,term}\\
\leq &\:\underbrace{\int_{\rd\mathcal D_{Step\,4}\cap \gamma} \mathbb T[\phi](-df^\sharp, Z)\,\vol_{\gamma}}_{Data\,term}+C\underbrace{\int_{\mathcal D_{Step\,4}}|Z\phi| |F|\, \vol}_{Error\,term}.
\end{split}
\end{equation}
An explicit computation shows that
\begin{equation}\label{Z.1}
\int_{\rd\mathcal D_{Step\,4}\cap\{v = v_*\}} \mathbb T[\phi](-dv^\sharp, Z)\, \vol_v \sim \int_{\max\{1+\sqrt{v_*}-v_*,u_s\}}^{u_s} |u|^p(\rd_u\phi)^2(u,v_*)\, du.
\end{equation}

To handle the ``Data term'', note that there exists a constant $c>0$ such that $Y-cZ$ is future-directed and timelike (where $Y$ as in Step 2). Therefore, by Lemma~\ref{positivity}, $\mathbb T[\phi](df^\sharp, Y) - c\mathbb T[\phi](df^\sharp, Z) \geq 0$ pointwise. As a result, the ``Data term'' in \eqref{main.Z} is bounded (up to a constant factor) by \eqref{past.of.gamma.main}. In particular, we have
\begin{equation}\label{Z.2}
\begin{split}
\mbox{Data term}\leq C\mbox{(RHS of \eqref{eq:scat.SS.inho.u})}.
\end{split}
\end{equation}

For the ``Main bulk term'' in \eqref{main.Z}, we compute
\begin{equation}\label{Z.3}
\begin{split}
\mathbb T_{\mu\nu}{ }^{(Z)}\pi^{\mu\nu}=&-\f{4}{\Omg_{RN}^2}\left((\rd_uZ^v)(\rd_v\phi)^2+(\rd_v Z^u)(\rd_u\phi)^2\right)-\f{4}{r} (Z^v+Z^u)(\rd_u\phi\rd_v\phi)\\
=&\underbrace{\f{2 N(\sigma-1)}{\Omg_{RN}^2}y^{N-1}(1+|r^*|)^{-\sigma}\left((\rd_v\phi)^2+|u|^{p}(\rd_u\phi)^2\right)}_{\doteq I}-\underbrace{\f{4}{r} y^{N} (1+|u|^{p})(\rd_u\phi\rd_v\phi)}_{\doteq II}.
\end{split}
\end{equation}
In $\mathcal D_{Step\,4}$, we have $\Omg_{RN}^2(1+|r^*|)^{\sigma}\ls e^{-2\kappa_-(v+u)}(1+(v+u))^{\sigma}\ls (\sqrt{v})^\sigma e^{-2\kappa_-\sqrt{v}}\ls (\sqrt{|u|})^{\sigma} e^{-2\kappa_-\sqrt{|u|}}$. In particular, for $N$ sufficiently large (depending on $\sigma$, $p$, $M$ and ${\bf e}$), we can use the AM-GM inequality to show that the term $I$ dominates the term $II$ \eqref{Z.3}. Therefore, there exists a constant $c>0$ (depending on $\sigma$, $p$, $M$ and ${\bf e}$) such that
\begin{equation}\label{Z.4}
\mathbb T_{\mu\nu}{ }^{(Z)}\pi^{\mu\nu}\geq \f{c}{\Omg_{RN}^2(1+|r^*|)^{\sigma}}\left((\rd_v\phi)^2+|u|^{p}(\rd_u\phi)^2\right)\geq 0.
\end{equation}

We next control the ``Error term'' in \eqref{main.Z} using the ``Main bulk term'' (which is just shown above in \eqref{Z.4} to have a good sign). More precisely, by the Cauchy--Schwarz inequality and \eqref{E.inho.def}, we have
\begin{equation}\label{Z.5}
\begin{split}
\mbox{Error term}\ls &\int_{\mathcal D_{Step\,4}}\left(|\rd_v\phi|(u,v)+|u|^{p}|\rd_u\phi|(u,v)\right)|F|(u,v)\,\Omg_{RN}^2(u,v)\,du\,dv\\
\ls & \left(\int_{\mathcal D_{Step\,4}} \left(\f{|\rd_v\phi|^2(u,v)}{(1+|r^*|)^{\sigma}}+|u|^{p}\f{|\rd_u\phi|^2(u,v)}{(1+|r^*|)^{\sigma}}\right)\,du\,dv\right)^{\f 12}\\
&\qquad\times\left(\int_{\mathcal D_{Step\,4}} |u|^{p}|F|^2(u,v)\,(\Omg_{RN}^4(1+|r^*|)^{\sigma})(u,v)\,du\,dv\right)^{\f 12}\\
\leq &\: \left(\int_{\mathcal D_{Step\,4}} \left(\f{|\rd_v\phi|^2(u,v)}{(1+|r^*|)^{\sigma}}+|u|^{p}\f{|\rd_u\phi|^2(u,v)}{(1+|r^*|)^{\sigma}}\right)\,du\,dv\right)^{\f 12} \|F\|_{\mathcal E_{inho,p,\sigma}}.
\end{split}
\end{equation}

Combining \eqref{main.Z}, \eqref{Z.1}, \eqref{Z.2}, \eqref{Z.4} and \eqref{Z.5}, and dropping the (non-negative) second term on LHS of \eqref{main.Z}, we obtain
\begin{equation}\label{Z.final}
\begin{split}
&\:\int_{\max\{1+\sqrt{v_*}-v_*,u_s\}}^{u_s} |u|^p(\rd_u\phi)^2(u,v_*)\, du+\int_{\mathcal D_{Step\,4}}\left(\f{|\rd_v\phi|^2(u,v)}{(1+|r^*|)^{\sigma}}+|u|^{p}\f{|\rd_u\phi|^2(u,v)}{(1+|r^*|)^{\sigma}}\right)\,du\,dv\\
\ls &\:\mbox{(RHS of \eqref{eq:scat.SS.inho.u})}
+ \left(\int_{\mathcal D_{Step\,4}}\left(\f{|\rd_v\phi|^2(u,v)}{(1+|r^*|)^{\sigma}}+|u|^{p}\f{|\rd_u\phi|^2(u,v)}{(1+|r^*|)^{\sigma}}\right)\,du\,dv\right)^{\f 12} \|F\|_{\mathcal E_{inho,p,\sigma}}.
\end{split}
\end{equation}
After using Young's inequality and the positive bulk term on the LHS to control the last term on the RHS, we obtain
\begin{equation}\label{Z.final.final}
\int_{\max\{1+\sqrt{v_*}-v_*,u_s\}}^{u_s} |u|^p(\rd_u\phi)^2(u,v_*)\, du \ls \mbox{(RHS of \eqref{eq:scat.SS.inho.u})}.
\end{equation}
Finally, note that for $v_*$ sufficiently large, $1+\sqrt{v_*}-v_*\leq -\f{v_*}{2}$. It follows from \eqref{Z.final.final} that for all $v_*$ sufficiently large
\begin{equation}
\int_{\min\{-\f 12 v_*,u_s\}}^{u_s} |u|^p (\rd_u\phi)^2(u,v_*)\, du \ls \mbox{(RHS of \eqref{eq:scat.SS.inho.u})}.
\end{equation}
In particular, this gives the desired estimate for \eqref{eq:reduction.by.Fatou} in Step~1. We thus conclude the proof. \qedhere
\end{proof}

\subsection{Proof of Theorem \ref{mass.infl.thm}}\label{sec:proofofmassinflation}

In order to prove Theorem \ref{mass.infl.thm}, it suffices to show that the condition in Proposition \ref{basic.crit} holds. This will be carried out in two steps: First, we associate to $\phi$ a corresponding solution $\phi_{lin}$ to the \underline{linear} wave equation on a \underline{fixed} Reissner--Nordstr\"om spacetime. Applying Theorem~\ref{thm:main}, we will show that there is an $L^2$-average polynomial lower bound of $\rd_u\phi_{lin}$ along the Cauchy horizon as $u\to -\infty$. Second, we show --- using perturbative estimates --- that in fact such an $L^2$-average polynomial lower bound also holds for $\rd_u\phi$. In particular, this implies that $\rd_u\phi\restriction_{\CH_1}(u_n) \neq 0$ along some sequence $u_n\to -\infty$ so that the desired mass inflation result follows from Proposition~\ref{basic.crit}.

We begin with the first step. Suppose we are given $(\mathcal M, g, \phi, F)$ as in Theorem~\ref{mass.infl.thm}. By Theorem \ref{main.theorem.C0.stability}, we know that in the interior region near $\EH_1$, the solution converges to a Reissner--Nordstr\"om solution with parameters $M$ and ${\bf e}$. Fix these $M$ and ${\bf e}$. From now on, take $(\mathcal M_{RN}, g_{RN})$ to be the Reissner--Nordstr\"om interior corresponding to these parameter.

Let $\phi_{lin}$ be the solution to the linear wave equation 
\begin{equation}\label{eq:phi.lin.wave}
\Box_{g_{RN}}\phi_{lin}=0
\end{equation}
in the region $\{(u,v): u\leq u_s,\,v\geq 1\}\subset \mathcal M_{RN}$ (where $u_s\leq -1$ is as in Theorem \ref{main.theorem.C0.stability}) with initial data 
\begin{equation}\label{phi.lin.data}
\begin{cases}
\phi_{lin}\restriction_{\EH_1\cap \{v\geq 1\}} \doteq \phi\restriction_{\EH_1 \cap \{v\geq 1\}}\\
\phi_{lin}\restriction_{\{v=1\}\cap \{u\leq u_s\}} \doteq \phi\restriction_{\{v=1\}\cap \{u\leq u_s\}}.
\end{cases}
\end{equation}

We now apply Corollary~\ref{cor:main} to obtain the following result:
\begin{proposition}\label{lin.blow.up}
$\phi_{lin}$ defined as above satisfies 
\begin{equation}\label{eq:lin.blow.up.conclusion}
\int_{-\infty}^{u_s} u^8 \left(\lim_{v\to\infty} (\rd_u\phi_{lin})^2 (u,v)\right)\, du = +\infty.
\end{equation}
\end{proposition}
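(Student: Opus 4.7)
The plan is to verify that $\phi_{lin}$ satisfies the hypotheses of Corollary \ref{cor:main} with $p = 8$, and then invoke that corollary directly. Since by construction $\phi_{lin}\restriction_{\EH_1}(v) = \phi\restriction_{\EH_1}(v)$ for $v \geq 1$, and $T = \f{1}{2}\rd_v$ on $\EH_1$ by \eqref{T.on.the.horizons}, hypotheses (3)--(6) of Theorem \ref{thm:main} translate into statements about $\rd_v\phi\restriction_{\EH_1}$ and $\rd_v^2\phi\restriction_{\EH_1}$ that will follow either from the extra assumptions of Theorem \ref{mass.infl.thm} or from results quoted earlier in this section.

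Concretely, for hypothesis (3), the Price-law-type decay of $\phi$ along $\EH_1$ established in \cite{DRPL} gives $\phi\restriction_{\EH_1}(v) \to 0$ as $v \to +\infty$; alternatively, \eqref{add.assumption.0} together with the Cauchy--Schwarz inequality already force $\rd_v\phi\restriction_{\EH_1} \in L^1([1,+\infty))$, so the limit exists, and it must vanish since $\phi \to 0$ pointwise at $i_{1}^{+}$. Hypothesis (4), which asks that $\int_1^{+\infty} (1+v^2)(\rd_v\phi\restriction_{\EH_1})^2\, dv < +\infty$, is strictly weaker than \eqref{add.assumption.0}. For hypothesis (5), the bound \eqref{add.assumption.0} yields $\int_1^{+\infty} v^6 (\rd_v\phi\restriction_{\EH_1})^2\, dv < +\infty$, while \eqref{eq:mass.inflation.lower.bd} of Theorem \ref{final.blow.up.step} yields $\int_1^{+\infty} v^8 (\rd_v\phi\restriction_{\EH_1})^2\, dv = +\infty$; hence $p = 8$ is the smallest even integer $\geq 4$ at which \eqref{linear.lower.bd} holds. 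Hypothesis (6) with this choice of $p$ is then exactly the assumption \eqref{add.assumption}.

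With all hypotheses verified, Corollary \ref{cor:main} produces
\[
\int_{-\infty}^{u_s} (1+u^2)^{4} \lim_{v\to+\infty}(T\phi_{lin})^2(u,v)\, du = +\infty.
\]
Since $T = -\f{1}{2}\rd_u$ on $\CH_1$ by \eqref{T.on.the.horizons} (with $T\phi_{lin}$ extending continuously to $\CH_1 \setminus \CH_2$ by Theorem \ref{scat.SS}), and $(1+u^2)^{4} \geq u^8$, the stated conclusion \eqref{eq:lin.blow.up.conclusion} follows. I do not anticipate a real obstacle in this argument, as it is essentially a matter of matching definitions and hypotheses; the only mild point of care is the pointwise vanishing of $\phi\restriction_{\EH_1}$ at infinity needed in hypothesis (3), which is handled by the exterior decay results as described above.
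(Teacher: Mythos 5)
Your proposal is correct and follows essentially the same route as the paper: verify hypotheses (3)--(6) of Theorem~\ref{thm:main} for $\phi_{lin}$ using the established decay of $\phi$ along $\EH_1$, \eqref{add.assumption.0}, \eqref{eq:mass.inflation.lower.bd} and \eqref{add.assumption}, then apply Corollary~\ref{cor:main} with $p=8$. One tiny slip at the end: to pass from \eqref{eq:main.CH.conclu.prime} to \eqref{eq:lin.blow.up.conclusion} you need $(1+u^2)^4 \leq 16\, u^8$ on $\{u \leq u_s \leq -1\}$ (not the reverse inequality you wrote), which is precisely how the paper uses the condition $u_s \leq -1$.
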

\begin{proof}
It suffices to apply Corollary~\ref{cor:main} to $\phi_{lin}$ with $p = 8$. Note that the last four assumptions of Theorem~\ref{thm:main} are indeed verified: 
\begin{enumerate}\setcounter{enumi}{2}
\item By Theorem~\ref{main.theorem.C0.stability}, $\lim_{v\to +\infty} \phi\restriction_{\EH_1}(v) = 0$. Hence, by \eqref{phi.lin.data}, we also have $\lim_{v\to +\infty} \phi_{lin}\restriction_{\EH_1}(v) = 0$.
\item By \eqref{add.assumption.0}, the bound $\int_1^{+\infty} (1+v^2) (T\phi_{lin}\restriction_{\EH_1}) \, dv <+\infty$ holds.
\item By \eqref{eq:mass.inflation.lower.bd} and \eqref{add.assumption.0}, $p = 8$ is the smallest even integer for which \eqref{linear.lower.bd} holds.
\item Using \eqref{add.assumption}, and taking $p=8$ as above, we obtain that \eqref{imp.decay} holds.
\end{enumerate}

Now that we have checked all the assumptions of Corollary~\ref{cor:main}, it follows that \eqref{eq:main.CH.conclu.prime} holds for $p =8$. Since $u_s \leq -1$, this implies \eqref{eq:lin.blow.up.conclusion}. \qedhere
\end{proof}

Next, we turn to the second step of the argument. For this, we prove a \emph{perturbative} statement showing that the nonlinear $\phi$ is well-approximated by $\phi_{lin}$ on $\CH_1$ as $u\to-\infty$ in the following sense:
\begin{proposition}\label{linear.diff}
Let $\phi$ be as in the assumptions of Theorem \ref{mass.infl.thm}, and $\phi_{lin}$ be the solution to the linear wave equation on $\mathcal M_{RN}$ as defined in \eqref{eq:phi.lin.wave} and \eqref{phi.lin.data} (where we have identified the region $\{(u,v): u\in (-\infty, u_s),\, v\in [1,+\infty)\}$ in the spacetime we are studying with the corresponding subset of $\mathcal M_{RN}$). The following estimate holds: 
$$\int_{-\infty}^{u_s} u^{8}\left(\lim_{v\to\infty} (\rd_u\phi-\rd_u\phi_{lin})^2 (u,v)\right)\, du < +\infty.$$
\end{proposition}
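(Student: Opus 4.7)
\textbf{Proof plan for Proposition~\ref{linear.diff}.} Set $\psi \doteq \phi - \phi_{lin}$, viewed as a smooth function on $\{(u,v): u \in (-\infty, u_s),\, v \in [1,+\infty)\}\subset \mathcal M_{RN}$. By \eqref{phi.lin.data}, $\psi$ vanishes identically on both $\EH_1\cap\{v\geq 1\}$ and $\{v=1\}\cap\{u\leq u_s\}$; in particular $\lim_{u\to-\infty}\rd_v\psi(u,v)=0$ and $\rd_u\psi(u,v=1)=0$. Since $\Box_{g_{RN}}\phi_{lin}=0$, $\psi$ satisfies $\Box_{g_{RN}}\psi = F$ where $F \doteq \Box_{g_{RN}}\phi$. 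The plan is to apply Theorem~\ref{scat.SS.inho.u} to $\psi$ with $p=8$ and some $\sigma\in(1,2)$: the vanishing boundary data kill the first two terms on the right-hand side of \eqref{eq:scat.SS.inho.u}, so the conclusion of the proposition reduces to the bound $\|F\|_{\mathcal E_{inho,8,\sigma}}<\infty$.

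To compute $F$, we exploit that the dynamical $\phi$ satisfies the \emph{nonlinear} equation $\Box_g\phi = 0$, which in spherically symmetric double null coordinates reads $\rd_u\rd_v\phi = -\frac{\rd_v r}{r}\rd_u\phi - \frac{\rd_u r}{r}\rd_v\phi$ (with $r$, $\Omega$ referring to the \emph{dynamical} metric). Substituting this into the analogous expansion of $-\frac{\Omega_{RN}^2}{4}\Box_{g_{RN}}\phi$ (cf.~\eqref{eq:ss.wave}) gives
\begin{equation*}
F = \frac{4}{\Omega_{RN}^2}\left[\left(\frac{\rd_v r}{r} - \frac{\rd_v r_{RN}}{r_{RN}}\right)\rd_u\phi + \left(\frac{\rd_u r}{r} - \frac{\rd_u r_{RN}}{r_{RN}}\right)\rd_v\phi\right].
\end{equation*}
Writing $\frac{\rd r}{r}-\frac{\rd r_{RN}}{r_{RN}} = \frac{\rd(r-r_{RN})}{r}-\frac{(\rd r_{RN})(r-r_{RN})}{r\,r_{RN}}$ and using $|\rd_v r_{RN}|,\,|\rd_u r_{RN}|\ls\Omega_{RN}^2$ (see \eqref{int.r}), we obtain the pointwise estimate
\begin{equation*}
\Omega_{RN}^2|F| \ls |\rd_v(r-r_{RN})||\rd_u\phi| + |\rd_u(r-r_{RN})||\rd_v\phi| + \Omega_{RN}^2|r-r_{RN}|\bigl(|\rd_u\phi|+|\rd_v\phi|\bigr).
\end{equation*}

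The main step is to bound $\|F\|_{\mathcal E_{inho,8,\sigma}}^2 = \int_1^\infty\int_{-\infty}^{u_s}|u|^8|\Omega_{RN}^2 F|^2(1+|r^*|)^\sigma\,du\,dv$ using Theorem~\ref{main.theorem.C0.stability} applied with some $\rho<3$ sufficiently close to $3$. That theorem provides $|r-r_{RN}|\ls v^{-\rho}+|u|^{-\rho+1}$, $|\rd_v(r-r_{RN})|\ls v^{-\rho}$ and $|\rd_v\phi|\ls v^{-\rho}$ throughout, together with $|\rd_u(r-r_{RN})|,\,|\rd_u\phi|\ls\Omega_{RN}^2 v^{-\rho}$ on $\{u+v\leq A\}$ and $\ls |u|^{-\rho}$ on $\{u+v\geq A\}$. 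We split the integral accordingly. In the first region the two $\Omega_{RN}^2\sim e^{2\kpp_+(u+v)}$ factors supplied by the $\rd_u$-derivatives yield exponential decay that overwhelms the polynomial weight $|u|^8$ (note that $|u|\gtrsim v$ there); a substitution $w=u+v$ bounds the inner integral by a polynomial in $v$, and the outer $v$-integral then converges for $\rho>11/4$. In the second region one has $v\geq A+|u|$ and the integrand is pointwise $\ls |u|^8 v^{-2\rho}|u|^{-2\rho}(1+v)^\sigma$; executing the $v$-integral first and then the $u$-integral reduces convergence to $\int_1^\infty|u|^{9-4\rho+\sigma}\,du<\infty$, which holds provided $\rho$ is close enough to $3$ and $\sigma\in(1,4\rho-10)$ (e.g.~$\rho=2.9$ and $\sigma=1.5$). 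Thus $\|F\|_{\mathcal E_{inho,8,\sigma}}<\infty$.

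The remaining regularity hypotheses of Theorem~\ref{scat.SS.inho.u} --- smooth extension of $\rd_v\psi$ to $\EH_1$ and existence of $\lim_{v\to\infty}\rd_u\psi(u,v)$ for $u\in(-\infty,u_s)$ --- follow from the corresponding properties of $\phi$ (combining Theorems~\ref{main.theorem.C0.stability} and \ref{thm.nonpert} with Lemma~\ref{lem:cont.rduphi.limit}) and of $\phi_{lin}$ (Theorem~\ref{scat.SS} and standard local regularity for the characteristic initial value problem). Once the inhomogeneous bound is in hand, \eqref{eq:scat.SS.inho.u} applied to $\psi$ with $p=8$ yields exactly the conclusion $\int_{-\infty}^{u_s}|u|^8\lim_{v\to\infty}(\rd_u\psi)^2(u,v)\,du<\infty$. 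The main anticipated obstacle is purely technical: carefully bookkeeping the decay rates across $\{u+v\leq A\}$ and $\{u+v\geq A\}$ so that the parameters $\rho<3$ and $\sigma>1$ can be chosen simultaneously --- at the weight $p=8$ this turns out to be just feasible, which is exactly why the improved-decay assumptions \eqref{add.assumption.0}--\eqref{add.assumption} match the exponent $p=8$ coming from \eqref{eq:mass.inflation.lower.bd}.
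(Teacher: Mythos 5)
Your proposal follows essentially the same route as the paper's own proof: write $\phi-\phi_{lin}$ as a solution of $\Box_{g_{RN}}\psi=F$ with $F$ generated by the differences $\frac{\rd_v r}{r}-\frac{\rd_v r_{RN}}{r_{RN}}$ and $\frac{\rd_u r}{r}-\frac{\rd_u r_{RN}}{r_{RN}}$, bound $F$ pointwise via Theorem~\ref{main.theorem.C0.stability} separately on $\{u+v\leq A\}$ and $\{u+v\geq A\}$, verify $\|F\|_{\mathcal E_{inho,8,\sigma}}<\infty$ for $\rho$ close to $3$ and $\sigma$ close to $1$ (the paper fixes $\rho\in(\tfrac{11}{4},3)$ with $10+\sigma-4\rho<0$ and $2\rho-\sigma>1$, matching your choice $\rho=2.9$, $\sigma=1.5$), and conclude by Theorem~\ref{scat.SS.inho.u} with vanishing data terms. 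The only cosmetic difference is that you display the cross term $\Omega_{RN}^2|r-r_{RN}|(|\rd_u\phi|+|\rd_v\phi|)$ explicitly, whereas the paper absorbs it directly into the bounds $|F|\ls v^{-2\rho}$ (for $u+v\leq 0$) and $|F|\ls\Omega_{RN}^{-2}|u|^{-\rho}v^{-\rho}$ (for $u+v\geq 0$); both treatments arrive at the same convergence conditions.
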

\begin{proof}
 A direct computation shows that the difference $\phi-\phi_{lin}$ satisfies the following inhomogeneous wave equation:
$$\rd_u\rd_v(\phi-\phi_{lin})=-\f{\rd_v r_{RN}}{r_{RN}}\rd_u(\phi-\phi_{lin})-\left(\f{\rd_v r}{r}-\f{\rd_v r_{RN}}{r_{RN}}\right)\rd_u\phi-\f{\rd_u r_{RN}}{r_{RN}}\rd_v(\phi-\phi_{lin})-\left(\f{\rd_u r}{r}-\f{\rd_u r_{RN}}{r_{RN}}\right)\rd_v\phi.$$
This implies that for $F \doteq \Box_{g_{RN}}(\phi-\phi_{lin})$, we have
$$|F|\lesssim \Omg^{-2}_{RN}\left(\left|\f{\rd_v r}{r}-\f{\rd_v r_{RN}}{r_{RN}}\right||\rd_u\phi|+\left|\f{\rd_u r}{r}-\f{\rd_u r_{RN}}{r_{RN}}\right||\rd_v\phi|\right).$$
In particular, according to the estimates proven in Theorem \ref{main.theorem.C0.stability}, the following estimates hold for any $\rho<3$:

\begin{enumerate}
\item 
For $u+v\leq 0$, we have\begin{equation}\label{eq:F.est.trivial.1}
|F|\lesssim_{\rho} \Omg^{-2}_{RN}( \Omg_{RN}^2 v^{-2\rho})=v^{-2\rho}.
\end{equation}
\item
For $u+v\geq 0$, we have
\begin{equation}\label{eq:F.est.trivial.2}
|F|\lesssim_{\rho} \Omg^{-2}_{RN} |u|^{-\rho}v^{-\rho}.
\end{equation}
\end{enumerate}

From now on, we fix some $\rho\in (\f {11}4, 3)$.

We are ready to apply Theorem \ref{scat.SS.inho.u}. It follows from Theorem \ref{scat.SS.inho.u} that in order to prove the desired conclusion of the proposition, it suffices to bound $F$ in the norm $\mathcal E_{inho,p,\sigma}$ with $p=8$ and some $\sigma >1$. We will choose $\sigma>1$ sufficiently close to $1$ such that $10+\sigma -4\rho <0$ and $2\rho-\sigma >1$ (which is possible since we have chosen $\rho>\f {11}4$).

To bound $F$, we use \eqref{eq:F.est.trivial.1} and \eqref{eq:F.est.trivial.2} to obtain
\begin{equation}\label{eq:mass.inflation.perturbative.F}
\begin{split}
\|F\|_{\mathcal E_{inho,8,\sigma}}
= &\: \left(\int_{-\infty}^{\infty}\int_{-\infty}^{\infty} |u|^8 |F|^2(u,v)\,(\Omg_{RN}^4(1+|r^*|)^\sigma)(u,v)\,du\,dv \right)^{\f 12}\\
\ls_\rho &\: \underbrace{\left(\int_{-\infty}^{u_s}\int_1^{-u} |u|^{8} v^{-4\rho}(\Omg_{RN}^4(1+|r^*|)^\sigma)(u,v)\,dv\,du \right)^{\f 12}}_{ \doteq I}\\
&\: +\underbrace{\left(\int_{-\infty}^{u_s}\int_{-u}^{+\infty} |u|^{8}|u|^{-2\rho}v^{-2\rho}\,(1+|r^*(u,v)|)^\sigma\,dv\,du \right)^{\f 12}}_{ \doteq II}.
\end{split}
\end{equation}
We now bound the terms $I$ and $II$ in \eqref{eq:mass.inflation.perturbative.F}. 

To handle $I$, recall that $r^*=v+u$, and that in the integration domain of $I$, $\Omg_{RN}\ls e^{\kappa_+ (v+u)}$.
Therefore, since $8+\sigma-4\rho<-2$, we have
\begin{equation}\label{eq:mass.inflation.perturbative.F.1}
\begin{split}
I\ls_{\rho} &\left(\int_{-\infty}^{u_s}\int_1^{-u} |u|^{8} v^{-4\rho}e^{4\kappa_+ (v+u)}(1+v+|u|)^\sigma\,dv\,du \right)^{\f 12}\\
\ls_{\rho} &\left(\int_{-\infty}^{u_s} |u|^{8+\sigma-4\rho} \,du \right)^{\f 12} < +\infty.
\end{split}
\end{equation}
In the derivation of \eqref{eq:mass.inflation.perturbative.F.1}, we have used that $v\leq |u|$ in the region $u+v\leq 0$, and thus an integration by parts argument gives
$$\int_1^{-u} v^{-4\rho}e^{4\kappa_+ (v+u)}(1+v+|u|)^\sigma\,dv \ls \abs{u}^{\sgm} \int_1^{-u} v^{-4\rho} e^{4\kappa_+ (v+u)} \,dv \ls_{\rho} |u|^{-4\rho+\sigma}.$$

For II, since $9+\sigma-4\rho<-1$, $-2\rho + \sigma <-1$,  $r^*=v+u$, and $|u|\leq v$ in the region $u+v\geq 0$, we have 
\begin{equation}\label{eq:mass.inflation.perturbative.F.2}
\begin{split}
II\ls_{\rho} & \left(\int_{-\infty}^{u_s}\int_{-u}^{+\infty} |u|^{8}|u|^{-2\rho}v^{-2\rho}\,(1+v+|u|)^\sigma\,dv\,du \right)^{\f 12}\\
\ls_{\rho}& \left(\int_{-\infty}^{u_s} (|u|^{8-2\rho} \int_{-u}^{+\infty} v^{-2\rho+\sigma} \,dv)\,du \right)^{\f 12} \ls_{\rho} \left(\int_{-\infty}^{u_s}|u|^{9+\sigma-4\rho} \,du \right)^{\f 12} < +\infty.
\end{split}
\end{equation}

Plugging \eqref{eq:mass.inflation.perturbative.F.1} and \eqref{eq:mass.inflation.perturbative.F.2} into \eqref{eq:mass.inflation.perturbative.F}, we obtain
$$\|F\|_{\mathcal E_{inho,8,\sigma}} <+\infty.$$
The estimate \eqref{eq:scat.SS.inho.u} in Theorem~\ref{scat.SS.inho.u} thus implies that
$$\int_{-\infty}^{u_s} u^{8}\left(\lim_{v\to\infty} (\rd_u\phi-\rd_u\phi_{lin})^2 (u,v)\right)\, du < +\infty, $$
which is what we wanted to prove. \qedhere
\end{proof}

We now finally conclude the proof of Theorem \ref{mass.infl.thm}:
\begin{proof}[Proof of Theorem \ref{mass.infl.thm}]
Combing Propositions \ref{lin.blow.up} and \ref{linear.diff} and using the triangle inequality, we immediately obtain 
$$\int_{-\infty}^{u_s} u^{8}\left(\lim_{v\to\infty} (\rd_u\phi)^2 (u,v)\right)\, du = +\infty.$$
In particular, there exists a sequence $u_n\to -\infty$ such that for every $n\in \mathbb N$,
$$\lim_{v\to\infty} (\rd_u\phi)^2 (u_n,v)\neq 0.$$
The desired blowup of the Hawking mass then follows from Theorem~\ref{basic.crit}.
\end{proof}

\bibliographystyle{DLplain}
\bibliography{massinflation}
\end{document}